\definecolor{purple}{rgb}{0.5,0.0,0.5}
\newcommand{\review}[1]{\textcolor{black}{#1}}
\newtheorem{theorem}{Theorem}[section]
\newtheorem{lemma}[theorem]{Lemma}
\newtheorem{definition}[theorem]{Definition}
\newtheorem{observation}[theorem]{Observation}
\newtheorem{question}[theorem]{Question}
\newtheorem{proposition}[theorem]{Proposition}
\newtheorem{conjecture}[theorem]{Conjecture}
\newcommand{\eps}{\varepsilon}
\newcommand{\poly}{\mathrm{poly}}
\newcommand{\N}{{\mathbb{N}}}
\newcommand{\R}{{\mathbb{R}}}
\newcommand{\J}{\mathcal{J}}
\newcommand{\tO}{\tilde{O}}
\newcommand{\tS}{\tilde{S}}
\newcommand{\cC}{\mathcal{C}}
\newcommand{\cR}{\mathcal{R}}
\newcommand{\rb}[1]{\left( #1 \right)}
\newcommand{\bb}[1]{\left[ #1 \right]}
\definecolor{greenblue}{rgb}{0.7,0.9,1.0}
\definecolor{lightgreen}{rgb}{0.8,1,0.8}
\title{Approximate Counting of Permutation Patterns}
\author{Omri Ben-Eliezer\thanks{Faculty of Computer Science, Technion -- Israel Institute of Technology, Haifa, Israel.  
Supported by an Alon Scholarship and by a Taub Family Foundation “Leaders in Science \& Technology” Fellowship. Work conducted in part while the author was at MIT and later at the Simons Institute for the Theory of Computing. Email: \texttt{omribene@cs.technion.ac.il}}
\and
Slobodan Mitrovi{\'c}\thanks{Department of Computer Science, University of California, Davis, CA, USA. Supported by the Google Research Scholar and NSF Faculty Early Career Development Program No.~2340048. Part of this work was conducted while the author was visiting the Simons Institute for the Theory of Computing. Email: \texttt{smitrovic@ucdavis.edu}}
\and
Pranjal Srivastava\thanks{Massachusetts Institute of Technology, Cambridge, MA, USA. Email: \texttt{pranjal0@mit.edu}}
}
\date{}
\begin{document}

\maketitle

\begin{abstract}
We consider the problem of counting the copies of a length-$k$ pattern $\sigma$ in a sequence $f \colon [n] \to \R$, where a copy is a subset of indices $i_1 < \ldots < i_k \in [n]$ such that $f(i_j) < f(i_\ell)$ if and only if $\sigma(j) < \sigma(\ell)$. 
This problem is motivated by a range of connections and applications in ranking, nonparametric statistics, combinatorics, and fine-grained complexity, especially when $k$ is a small fixed constant. 

Recent advances have significantly improved our understanding of counting and detecting patterns.
Guillemot and Marx [2014] obtained an $O(n)$ time algorithm for the detection variant for any fixed $k$. 
Their proof has laid the foundations for the discovery of the twin-width, a concept that has notably advanced parameterized complexity in recent years. 
Counting, in contrast, is harder: 
it has a conditional lower bound of $n^{\Omega(k / \log k)}$ [Berendsohn, Kozma, and Marx, 2019] and is expected to be polynomially harder than detection as early as $k = 4$, given its equivalence to counting $4$-cycles in graphs [Dudek and Gawrychowski, 2020].

In this work, we design a deterministic near-linear time $(1+\varepsilon)$-approximation algorithm for counting $\sigma$-copies in $f$ for all $k \leq 5$. 
Combined with the conditional lower bound for $k=4$, this establishes the first known separation between approximate and exact pattern counting. 
\review{Interestingly, while neither the sequence $f$ nor the pattern $\sigma$ are monotone, our algorithm makes extensive use of coresets for monotone functions [Har-Peled, 2006].}
Along the way, we develop a near-optimal data structure for $(1+\varepsilon)$-approximate increasing pair range queries in the plane, which exhibits a conditional separation from the exact case and may be of independent interest. 
\end{abstract}
\setcounter{page}{0}
\thispagestyle{empty}

\newpage

\tableofcontents
\setcounter{page}{0}
\thispagestyle{empty}

\newpage

\section{Introduction}
Detecting and counting structural patterns in a data sequence is a common algorithmic challenge in various theoretical and applied domains. Some of the numerous application domains include ranking and recommendation \cite{DKNS2001}, time series analysis \cite{PhysRevLett.88.174102}, and computational biology \cite{CombinatoricsGenomeRearrangements}, among many others. On the mathematical/theoretical side, problems involving sequential pattern analysis naturally arise, e.g., in algebraic geometry \cite{AbeBilley2016}, combinatorics \cite{crudele2023permutation,Grubel2023}, and nonparametric statistics \cite{EvenZoharLeng2021}. 

Formally, we are interested here in finding \emph{order patterns}, \review{which are also called \emph{permutation patterns}, and} defined as follows. Given a real-valued sequence $f \colon [n] \to \R$ and a permutation pattern $\sigma \colon [k] \to [k]$, a \emph{copy} of the pattern $\sigma$ in the sequence $f$ is any subset of $k$ indices $i_1 < i_2 < \ldots < i_k$ so that for $j, \ell \in [k]$, $f(i_j) < f(i_\ell)$ if and only if $\sigma(j) < \sigma(\ell)$; 
see \Cref{fig:order_patterns} for a visual depiction.

\begin{figure}[h]
\begin{center}
\begin{tikzpicture}[scale=0.5]
    \draw[thin,->] (0,0) -- (13,0) node[right] {$x$}; 
    \draw[thin,->] (0,0) -- (0,5) node[left] {$f(x)$}; 

    \node[draw, circle, inner sep=1.5pt, fill=white] at (0.5,1) {};
    \node[draw, circle, inner sep=1.5pt, fill=white] at (1,2) {};
    \node[draw, circle, inner sep=1.5pt, fill=white] at (1.5,1.3) {};
    \node[draw, circle, inner sep=1.5pt, fill=white] at (2,4) {};
    \node[draw, circle, inner sep=1.5pt, fill=white] at (2.5,2.8) {};
    \node[draw, circle, inner sep=1.5pt, fill=white] at (3,1.9) {};
    \node[draw=black, circle, inner sep=1.5pt, fill=black] at (3.5,1.15) {};
    \node[draw, circle, inner sep=1.5pt, fill=white] at (4,3.5) {};
    \node[draw, circle, inner sep=1.5pt, fill=white] at (4.5,2.9) {};
    \node[draw, circle, inner sep=1.5pt, fill=white] at (5,1.35) {};
    \node[draw, circle, inner sep=1.5pt, fill=white] at (5.5,0.4) {};
    \node[draw, circle, inner sep=1.5pt, fill=white] at (6,1.2) {};
    \node[draw, circle, inner sep=1.5pt, fill=white] at (6.5,2.1) {};
    \node[draw, circle, inner sep=1.5pt, fill=white] at (7,1.5) {};
    \node[draw=black, circle, inner sep=1.5pt, fill=black] at (7.5,4.2) {};
    \node[draw, circle, inner sep=1.5pt, fill=white] at (8,2.2) {};
    \node[draw=black, circle, inner sep=1.5pt, fill=black] at (8.5,3) {};
    \node[draw, circle, inner sep=1.5pt, fill=white] at (9,0.8) {};
    \node[draw, circle, inner sep=1.5pt, fill=white] at (9.5,3.5) {};
    \node[draw, circle, inner sep=1.5pt, fill=white] at (10,3.2) {};
    \node[draw, circle, inner sep=1.5pt, fill=white] at (10.5,1.6) {};
    \node[draw, circle, inner sep=1.5pt, fill=white] at (11,0.5) {};
    \node[draw=black, circle, inner sep=1.5pt, fill=black] at (11.5,2) {};
    \node[draw, circle, inner sep=1.5pt, fill=white] at (12,0.9) {};
    \node[draw, circle, inner sep=1.5pt, fill=white] at (12.5,1.5) {};
\end{tikzpicture}
\end{center}
\vspace{-0.2cm}
\caption{A configuration of $n$ points in two dimensions (with no two points sharing the same $x$ coordinate), represented as a function $f \colon [n] \to \R$. The four full points form a copy of the permutation pattern $1432$.}
\label{fig:order_patterns}
\end{figure}
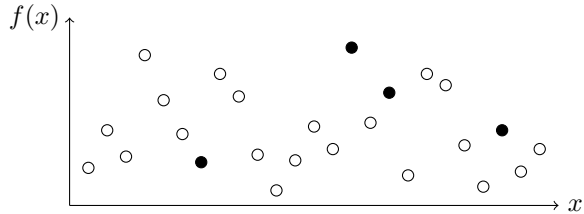

In the permutation pattern matching (PPM) problem,\footnote{We shall interchangeably use the terms ``pattern matching'' and ``pattern detection'' to refer to this problem.} the task is to determine whether $f$ contains at least one copy of the pattern $\sigma$. 
In the counting variant, the goal is to return the exact or approximate number of $\sigma$-copies in $f$. Recent years have seen several breakthroughs in both detection and counting, revealing important implications in parameterized and fine-grained complexity.

Of most importance is the case where $k$ is a small constant, which has a large number of diverse applications and interesting connections:
\begin{itemize}
    \item Counting inversions, which are $21$-copies ($k=2$) is of fundamental importance for ranking applications \cite{DKNS2001}. It has thus attracted significant attention from the algorithmic community for the last several decades, for both exact counting \cite{ChanPatrascu2010,Dietz1989,FredmanSaks1989} and approximate counting \cite{ChanPatrascu2010, AnderssonPetersson98}. 
    \item Counting $4$-patterns\footnote{We henceforth use the abbreviation ``$k$-pattern'' to refer to a permutation pattern of length $k$.} is equivalent, by a bidirectional reduction, to counting $4$-cycles in sparse graphs. The latter is a fundamental problem in algorithmic graph theory (e.g., \cite{Alon1997Finding, DKS17}) and fine-grained complexity (e.g., \cite{WWWY15,ABKZ22,ABF23,JX23}). 
    This equivalence was shown by Dudek and Gawrychowski \cite{DudekGawrychowski2020}. 
    \item Pattern counting for fixed $k$ (especially $k \leq 5$) has deep and intricate connections to (bivariate) \emph{independence testing},
    a fundamental question in nonparametric statistics that asks the following. Given $n$ pairs of samples $(x_1, y_1), \ldots, (x_n, y_n)$ from two real continuous random variables $X$ and $Y$, should we deduce that $X$ and $Y$ are independent? 
    
    This question has seen a long line of work in non-parametric statistics, e.g., \cite{evenzohar2020independence, BergsmaDassios2014, Yanagimoto1970, Chatarjee2021, BlumKieferRosenblatt1961}. 
    A line of work that started by Hoeffding in the 1940's \cite{Hoeffding1948} and is still very active to this day establishes distribution-free methods to test independence by (i) ordering the sample pairs according to the values of the $x_i$'s, effectively treating the $y_i$'s as a length-$n$ sequence; and (ii) deciding whether $X$ and $Y$ are independent based only on the \review{number of occurrences of all $k$-patterns for $k \leq 5$} in this sequence. This is a special case of the much broader notion of $U$-statistics \cite{Lee1990UStatisticsTA,koroljuk1994theory}. 
    See \cite{evenzohar2020independence,Grubel2023} for more details on this fascinating connection.
        
    \item A family of length-$n$ permutations is considered quasirandom if, roughly speaking, the number of occurrences of every pattern in the family (of any length) is asymptotically similar to that of a random permutation. Quasirandomness turns out to be quite closely related to independence testing, discussed above, and it is known that the counts of patterns of length up to four suffice to determine quasirandomness, see, e.g., \cite{crudele2023permutation,Grubel2023}.
    \item Permutation pattern matching allows one to deduce whether an input $f$ is free from some pattern $\sigma$, and consequently run much faster algorithms tailored to $\sigma$-free instances.
    For example, classical optimization tasks such as binary search trees and $k$-server become much faster on $\sigma$-free inputs \cite{berendsohn2023optimization}, and a recent fascinating result by Opler~\cite{opler2024sorting} shows that sorting can be done in \emph{linear} time in pattern-avoiding sequences.\footnote{\review{Notably, Opler's result works without even knowing the pattern, so it is not needed to run PPM as a preliminary step here.}}
    Pattern matching itself sometimes also becomes faster in classes of $\sigma$-free permutations \cite{JelinekKyncl2017,JOP2021,BoseBussLubiw98}. 
    \end{itemize}

Consequently, there has been a long line of computational work on pattern matching and counting, e.g., \cite{BrunerLackner2012,BergsmaDassios2014,JelinekKyncl2017,berendsohn2021finding,evenzohar2020independence,JOP2021,Chatarjee2021,GawrychowskiRzepecki2022}. Here, we focus on the most relevant results in the constant $k$ case. Notably, the version of the problem where $k$ is large (linear in $n$) is NP-hard \cite{BoseBussLubiw98}.

Both matching and counting admit a trivial algorithm with running time $O(k n^{k})$: the idea is to enumerate over all $k$-tuples of indices in $f$, and check if each such tuple in $f$ induces a copy of the pattern. But can these algorithmic tasks be solved in time substantially smaller than $n^k$? 

\paragraph{Pattern matching: a linear-time algorithm, and the twin-width connection.}
In the matching case, the answer is resoundingly positive. The seminal work of Guillemot and Marx \cite{GuillemotMarx2014} shows that PPM is a fixed parameter tractable (FPT) problem that takes $O(n)$ time for fixed $k$.\footnote{Unless mentioned otherwise, the computational model is Word RAM, that allows querying a single function value or comparing two values in constant time.} Their running time is of the form $2^{O(k^2 \log k)} \cdot n$; the bound was slightly improved by Fox to $2^{O(k^2)} \cdot n$ \cite{fox2013stanleywilf}.

The technical argument of \cite{GuillemotMarx2014} relies on two main ingredients: the first is the celebrated result of Marcus and Tardos \cite{MarcusTardos2004} in their proof of the Stanley-Wilf conjecture \cite{FurediHajnal1992, Klazar2000}, while the second is a novel width notion for permutations suggested in their work. This newly discovered width notion subsequently led to the development of the very wide and useful notion of \emph{twin-width}, which has revolutionized parameterized complexity in recent years. Indeed, the work of Bonnet, Kim, Thomassé, and Watrigant \cite{BonnetTwinWidth2021}, which originally defined twin-width, begins with the following statement: \textit{``Inspired by a width invariant defined on permutations by Guillemot and Marx \cite{GuillemotMarx2014}, we introduce the notion of twin-width on graphs and on matrices.''}

\paragraph{Pattern counting: algorithms and hardness.}
Exact counting, meanwhile, is unlikely to admit very efficient algorithms. A series of works from the last two decades has gradually improved the $n^{k}$ upper bound, obtaining bounds of the form $n^{(c+o(1))k}$ for constant $c < 1$ \cite{AAAH2001, AhalRabinovich2008}. The current state of the art, proved by Bernedsohn, Kozma, and Marx \cite{berendsohn2021finding} is of the form $n^{k/4 +o(k)}$. The same work shows, however, that $n^{o(k/\log k)}$-time algorithms for exact counting cannot exist unless the exponential-time hypothesis (ETH) is false. In fact, a similar impossibility result (with a slightly weaker bound of $n^{o(k / \log^2 k)}$) holds, under ETH, even for exact pattern counting where the pattern $\sigma$ itself avoids some fixed (larger) pattern \cite{jelinek_IPEC}.
The above results treat $k$ as a variable; we next focus on the case where $k$ is very small, given the myriad of applications discussed before.

In the case $k=2$, it is easy to obtain an exact counting algorithm in time $O(n \log n)$ (in the Word RAM model), via a variant of merge sort. A line of work \cite{Dietz1989,FredmanSaks1989,AnderssonPetersson98, ChanPatrascu2010} sought to obtain improved algorithms for both exact and approximate counting (to within a $1+\epsilon$ multiplicative factor).\footnote{Formally, a $(1+\epsilon)$-approximate counting algorithm is required, given access to a pattern $\sigma$ and a function $f$, to return a value between $X / (1+\eps)$ and $(1+\eps)X$, where $X$ is the number of $\sigma$-copies in $f$.} The best known exact and approximate upper bounds for $k=2$ are $O(n \sqrt{\log n})$ and $O(n)$, respectively, both proved by Chan and P\u{a}tra\c{s}cu \cite{ChanPatrascu2010}.

The cases of $k=3$ and $k=4$ have been the subject of multiple recent works.
Even-Zohar and Leng \cite{EvenZoharLeng2021} developed an object called \emph{corner tree} to count a family of patterns (that slightly differ from permutation patterns) in near-linear time. Using linear combinations of corner tree formulas, they obtained near-linear time algorithm for all patterns of length $3$ and some (8 out of 24) length-$4$ patterns. For the remaining ones of length $4$, the same work obtains an $O(n^{3/2})$ time algorithm using different techniques. This interesting dichotomy between ``easy'' and ``hard'' $4$-patterns raises an interesting question: \emph{is the dichotomy an artifact of the specific technique, or is there an inherent computational barrier?}

Dudek and Gawrychowski \cite{DudekGawrychowski2020} proved that the latter is true: exact counting of any ``hard'' $4$-pattern is equivalent (via bidirectional reductions) to exact counting of $4$-cycles in graphs, a central and very well studied problem in algorithmic graph theory. The concrete equivalence stated in their paper (see Theorem 1 there) is that an $\tilde{O}(m^\gamma)$-time algorithm for counting $4$-cycles in $m$-edge graphs implies an $\tilde{O}(n^{\gamma})$ time algorithm for counting ``hard'' 4-patterns, and vice versa. While this has led to a slightly improved $O(n^{1.48})$ upper bound based on best known results for counting $4$-cycles in sparse graphs \cite{WWWY15}, the more interesting direction to us is the lower bound side.
A line of recent works obtains conditional lower bounds on $4$-cycle counting, that apply already for the easier task of $4$-cycle detection \cite{ABKZ22,ABF23,JX23}. These works imply that conditioning on the Strong 3-SUM conjecture, detecting whether a (sufficiently sparse) graph with $m$ edges contains a $4$-cycle requires $m^{1+\Omega(1)}$ time (see, e.g., the discussion after Theorem 1.14 in \cite{JX23}), which translates to an $n^{1+\Omega(1)}$ lower bound for exact counting $4$-patterns, via \cite{DudekGawrychowski2020}. 

\subsection{Our Results}
Given the separation between the $O(n)$ complexity of pattern detection and the $n^{1+\Omega(1)}$ conditional lower bound for pattern counting already for $k=4$, and the importance of counting in the constant-$k$ regime, we ask whether approximate counting can be performed in time substantially (polynomially) faster than exact counting.
\begin{center}
\emph{What is the computational landscape of $(1+\eps)$-approximate counting of $k$-patterns, for small fixed $k$, as compared to exact counting and matching? Is approximate counting much faster than exact counting?}
\end{center}

The only case where the best known $(1+\eps)$-approximate algorithm is faster than the best known exact algorithm is when $k=2$ \cite{ChanPatrascu2010}, but the gap is only of order $\sqrt{\log n}$ (i.e., between $O(n)$ and $O(n \sqrt{\log n})$), and no nontrivial exact counting lower bounds are known.
Thus, it remains unknown whether exact counting is harder than approximate counting even for $k=2$, and even if it is, the gap would be of lower order.

Our main contribution, stated below, is a near-linear time approximate counting algorithm for $k \leq 5$.

\begin{restatable}{theorem}{maintheorem}
\label{thm:main_theorem}
For every permutation pattern $\sigma$ of length $k \leq 5$ and every $\eps > 0$, the following holds.
There exists a deterministic algorithm that, given access to a function $f \colon [n] \to \R$, returns the number of $\sigma$-copies in $f$, up to a multiplicative error of $1+\eps$, in time $n \cdot \left(\eps^{-1}\log n\right)^{O(1)}$.
\end{restatable}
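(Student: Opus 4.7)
The plan is to reduce counting of $\sigma$-copies to approximately evaluating a polynomial-size expression in which every subterm is either a count of a \emph{monotone} sub-pattern inside an axis-aligned rectangle or an \emph{increasing-pair} range query between two rectangles, and to show that both of these primitives admit $(1+\eps)$-approximate answers in polylogarithmic time after $n\cdot \poly(\eps^{-1}\log n)$ preprocessing.

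First, I would build the two data-structural primitives. The first is the $(1+\eps)$-approximate increasing-pair range query structure advertised in the abstract: given two axis-aligned rectangles $R_1,R_2$ in the plane with $R_1$ weakly below-left of $R_2$, return an approximate count of pairs $(p_1,p_2)\in R_1\times R_2$ such that $p_1$ is below-left of $p_2$. The second primitive, built on Har-Peled's monotone-function coresets, approximately counts monotone length-$\ell$ subsequences of $f$ whose indices and values lie in a queried rectangle, for all $\ell\le 4$. The point is that although neither $f$ nor $\sigma$ is monotone, every short sub-pattern extracted after removing one or two ``pivot'' positions from a length-$\leq 5$ pattern \emph{is} monotone, so the coreset machinery applies.

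Second, I would perform a case analysis over the $24+120$ patterns of length at most $5$ and assign to each a \emph{pivot decomposition}
\[
\#\{\sigma\text{-copies in }f\}\;=\;\sum_{\text{pivots}}\ \prod_{\text{sides}} N_{\text{side}}(\text{pivot},\cdot),
\]
where each $N_{\text{side}}$ is a monotone sub-pattern count in a rectangle determined by the pivot(s). For ``easy'' patterns a single-index pivot suffices, and the outer sum over $n$ choices of pivot is evaluated by $n$ rectangle queries to the monotone-coreset structure. For the ``hard'' patterns (in particular the hard $4$-patterns equivalent to $4$-cycle counting, and their natural length-$5$ analogues), a single pivot leaves a non-monotone residue, so I use a two-pivot decomposition whose outer sum ranges over ordered pairs of indices; this is precisely where the increasing-pair range-query primitive enters, since the two pivots play the roles of $p_1$ and $p_2$.

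Third, I would assemble the global algorithm. The two-pivot outer sum has $\Theta(n^2)$ terms, so a per-pair evaluation is too expensive and, by the $n^{1+\Omega(1)}$ lower bound inherited from $4$-cycles via \cite{DudekGawrychowski2020}, unavoidable for exact computation. Instead, I ``fold'' the outer sum: each side's monotone count $N_{\text{side}}$ is itself summarized by a monotone coreset as a function of the pivot position, and the pair-sum is then computed as a bounded number of calls to primitive (i) between coresets on the left and on the right. Choosing each sub-approximation parameter to be $\eps'=\Theta(\eps/k)$ and using nonnegativity of all factors gives a $(1+\eps)$-approximation overall by the standard product-of-approximations bound; the total running time is $n\cdot(\eps^{-1}\log n)^{O(1)}$ as claimed.

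The main obstacle I expect is the hard $4$- and $5$-patterns at the two-pivot step: it must be shown that (a) a two-pivot decomposition into monotone marginals exists for every such pattern, and (b) the resulting double sum truly can be evaluated via the \emph{approximate} increasing-pair query structure rather than its exact variant, since only the approximate version admits near-linear preprocessing. The $(1+\eps)$ slack is exactly what lets us replace per-pair exact aggregation by coreset-vs.-coreset aggregation, which is the technical heart of the theorem and the place where the separation from exact counting is actually won.
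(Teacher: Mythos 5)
You have the right raw materials (coresets for monotone functions, an approximate increasing-pair range structure, case analysis over small patterns), but the organizing claim at the center of the proposal does not hold, and the gap is not cosmetic. You assert a decomposition of the form $\#\{\sigma\}=\sum_{\text{pivots}}\prod_{\text{sides}}N_{\text{side}}$ in which, after fixing one or two pivots, the remaining elements split into independent ``sides'' each counted by a rectangle query. For the hard patterns this product decomposition simply does not exist: in $2413$, for instance, fixing any two positions as pivots leaves the remaining two with a joint constraint that couples them (e.g.\ fixing $i_1$ and $i_4$ as ``$2$'' and ``$3$'' leaves both $i_2$ and $i_3$ confined to the interval $(i_1,i_4)$ with the ordering requirement $i_2<i_3$, which does not factor), and the same coupling reappears with any other pivot choice. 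The same issue afflicts $24135$ and $14253$. Relatedly, the claim that ``every short sub-pattern extracted after removing one or two pivot positions from a length-$\leq 5$ pattern is monotone'' confuses the kind of monotonicity that is actually exploitable: what the paper uses is not monotonicity of a residual \emph{pattern}, but monotonicity of a \emph{count function} (e.g.\ the number of extensions as a function of where you place the next element, ordered by position or by value). A monotone residual pattern is neither necessary nor sufficient for a rectangle-count factorization.

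The paper's route around this is quite different from what you sketch, and worth contrasting. For the $4$-patterns other than $2413$, it fixes one element and then applies coreset approximation (Lemma 2.2) \emph{twice in a nested fashion}, proving a monotonicity lemma for each nesting level; the innermost quantity is a single rectangle count. For $2413$ it additionally introduces \emph{separators}: a dyadic bucketing of positions (type-$j$ copies, $3$-heavy vs.\ $4$-heavy), chosen so that the ``$1$'' and ``$3$'' fall into different buckets, which restores the needed monotonicity and ensures every copy is assigned to exactly one $(j,\text{heaviness})$ class. For $5$-patterns it promotes this to \emph{global separators} induced by a two-dimensional segment tree (a vertical and a horizontal separator per inner vertex), plus the approximate $12$-range-query structure as the terminal primitive; and two exceptional families ($13524$, $14253$ and symmetries) are handled by reducing a corner of the decomposition back to $2413/3142$ counting. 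Nothing in this is a pivot-pair sum folded by coreset-vs.-coreset queries, and the $\Theta(n^2)$ pair sum that your proposal has to ``fold'' is itself an artifact of a factorization that is not available. Finally, the primitive you invoke for monotone length-$\ell$ subsequence counts in rectangles for $\ell\leq 4$ is not established anywhere and is nontrivial for $\ell\geq 3$; the paper only needs (and only builds) the $\ell=2$ case, i.e.\ the $12$-copy range query structure of Proposition 1.3.
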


Combined with the $n^{1+{\Omega(1)}}$ lower bound for counting ``hard'' $4$-patterns (e.g., $2413$), our result implies a \emph{polynomial} separation between exact and $(1+\eps)$-approximate algorithms for 4-patterns and 5-patterns. 

We stress that our algorithm is \emph{deterministic}, and that prior to our work no near-linear algorithm -- deterministic or randomized -- was known to exist for this problem for $k=4$ and $k=5$. Our main new technique is a suitable Birg\'e-type lemma, which is deterministic and can be thought as a Riemann integration argument on monotone functions. We discuss our techniques in \Cref{sec:techniques}, and other approaches in the approximate counting literature, as well as some of the many open questions that arise following this work, in \Cref{sec:discuss}.
Notably, existing approaches from the literature are (i) inherently randomized and (ii) do not seem to apply directly to our setting; for them to work, one has to use introduce new components, which are variants of the techniques that we develop here. For further discussion, see~\Cref{sec:discuss}.

The paper includes a full, self-contained proof for $k = 4$; recall that for $k \leq 3$, even exact counting algorithms have near-linear time complexity \cite{EvenZoharLeng2021}. 
For $k=5$, our proof is computer-assisted: the algorithm enumerates over multiple parameter choices and techniques, heavily depending on the pattern structure.
Verifying that the algorithm works for all patterns requires a tedious case analysis for $k=5$, involving 512 cases, each of which is straightforward to verify based on the output from our code.
We describe the set of techniques used, establish how they can be combined, provide examples of typical use cases, and delegate the full enumeration to the software.
The source code for the enumeration and the full output (including for $k=5$) are provided here:
\textsf{\href{https://github.com/omribene/approx-counting}{https://github.com/omribene/approx-counting}}. 


Our proof can be immediately adapted to provide an algorithm for \emph{enumerating} (or \emph{listing}) copies of the pattern. In the enumeration problem, we are given $f, \sigma$, and an integer $t$, and are required to provide a list of $t$ copies of $\sigma$ in $f$ (or the full list if there are less than $t$ copies). We obtain the following result.
\begin{theorem}\label{thm:enumeration}
For every permutation pattern $\sigma$ of length $k \leq 5$ and every $t \in \N$, the following holds.
There exists a deterministic algorithm that, given access to a function $f \colon [n] \to \R$, returns a list of $t$ copies of $\sigma$ in $f$ (or all such copies, if there are fewer than $t$), in time $(n+t) \cdot \log^{O(1)}n$.
\end{theorem}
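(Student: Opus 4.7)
The plan is to use the approximate counting algorithm of \Cref{thm:main_theorem} as a decision/search oracle and produce $\sigma$-copies one at a time in a canonical order, so that each copy costs only $\poly(\log n)$ additional time on top of a one-time $n \cdot \poly(\log n)$ preprocessing. The starting observation is that a multiplicative $(1+\eps)$-approximation of a nonnegative integer is nonzero if and only if the integer is nonzero; hence, instantiated with any constant $\eps$, the counting algorithm already tells us whether a given (restricted) counting problem admits at least one copy.

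Concretely, I would enumerate copies in lexicographic order of their index tuples $(i_1 < i_2 < \cdots < i_k)$. To produce the lex-first copy, I run $k$ rounds of binary search. In round $\ell$, having already committed to indices $i_1, \ldots, i_{\ell-1}$ with their values, I binary-search for the smallest $i_\ell > i_{\ell-1}$ such that the number of ways to complete $(i_1, \ldots, i_\ell)$ into a full $\sigma$-copy, using indices in $\{i_\ell+1, \ldots, n\}$ with values in the subintervals forced by $\sigma$, is nonzero. Each call to the oracle takes $\poly(\log n)$ time, and there are $O(k \log n)$ such calls per copy. To produce the lex-next copy after one has been output, I undo the last search decision by increasing $i_k$ (and, if no valid choice remains, backtracking to $i_{k-1}$, and so on), which costs the same $\poly(\log n)$ per new copy in amortized terms.

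The main obstacle is that \Cref{thm:main_theorem} as stated counts copies in the \emph{entire} input, whereas the enumeration scheme above requires counting copies of partial patterns whose free indices lie in an arbitrary axis-aligned rectangle (an interval of positions crossed with an interval of values). The good news is that the ingredients of our proof of \Cref{thm:main_theorem} --- the monotone coresets of Har-Peled, the near-optimal data structure for approximate increasing pair range queries, and the recursive decomposition used for length-$5$ patterns --- are all built hierarchically over intervals of the domain and of the range. Thus, augmenting them to support rectangular range queries costs only an additional $\poly(\log n)$ factor in both preprocessing and per-query time, which is absorbed into the final bound. Equivalently, since a partial pattern together with rectangular constraints is itself a pattern-counting problem on a restricted sequence, one can invoke \Cref{thm:main_theorem} as a black box on the appropriate induced subsequence, provided the subsequence is represented implicitly and can be accessed in $\poly(\log n)$ time per position through standard range-tree style structures erected in the preprocessing.

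Summing up, the preprocessing contributes $n \cdot \poly(\log n)$, each emitted copy contributes $\poly(\log n)$ oracle calls of $\poly(\log n)$ time each, and the degenerate case of fewer than $t$ copies is handled automatically when the first binary search in some round terminates with count zero. This yields the claimed $(n + t) \cdot \log^{O(1)} n$ bound, deterministically, since every ingredient --- the counting algorithm, the auxiliary range structures, and the binary searches --- is deterministic.
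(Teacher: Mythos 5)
Your approach --- treat $(1+\eps)$-approximate counting as a detection oracle, binary-search for the next copy in lexicographic order of index tuples, and claim the query time is polylogarithmic after preprocessing --- is genuinely different from the paper's argument, but it has two concrete gaps that prevent it from yielding the claimed bound.

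First, the binary search itself is not well-founded as stated. The set of indices $i_\ell$ that extend a fixed prefix $(i_1,\ldots,i_{\ell-1})$ to a full copy need not be an interval, so ``is this $i_\ell$ extendable?'' is not a monotone predicate in $i_\ell$. (Take $\sigma=123$ and $f=(1,3,0,2,4)$ with $i_1=1$: then $i_2\in\{2,4\}$ works but $i_2=3$ does not.) The natural repair is to binary-search on the monotone quantity ``is there a copy with $i_\ell \in (i_{\ell-1},m]$?'', but that pushes you into range-constrained counting queries, which is exactly where the second gap lies.

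Second, and more seriously, the ``restricted'' counting queries you need at the shallow levels cannot be answered in $\poly(\log n)$ time by any of the paper's ingredients, and invoking \Cref{thm:main_theorem} on an implicitly-represented induced subsequence does not help: the counting algorithm for a $k$-pattern makes an outer pass over all candidate positions of one pattern element (e.g., the ``$3$'' of $1324$), so running it on a subsequence of length $m$ costs $m\cdot\poly(\log n)$, not $\poly(\log n)$. At level $\ell=1$ (no position yet committed) that subsequence may have $\Theta(n)$ elements, so a single oracle call is $\Omega(n)$, and your $O(\log n)$ calls per emitted copy blow the budget to $\tilde{\Omega}(nt)$. Building a $\poly(\log n)$-time data structure for counting (or even detecting) $(k-1)$-patterns inside arbitrary rectangles would overcome this, but that is a strictly harder primitive than the $12$-range-query structure of \Cref{lemma:counting-12-copies}, and nothing in the paper supplies it. The paper's proof of \Cref{thm:enumeration} avoids both issues by \emph{not} treating the counter as an oracle: it directly unrolls each use of Birg\'e (\Cref{lem:Birge}) into a descending sweep over the full monotone sequence, and then bounds the total amount of fruitless descending via a charging argument (each unproductive candidate-prefix is charged to an adjacent productive one, plus $n$ level-one exceptions), together with a small augmentation of the $12$-copy data structure to support listing rather than only counting. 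That way the $\Theta(n)$ outer pass occurs once, as preprocessing, rather than inside each query.
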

Our results further highlight the contrasting behavior between $4$-cycles in sparse graphs and $4$-patterns in sequences. The exact counting complexities for these objects are equal, due to the linear-size bidirectional reductions between these problems \cite{DudekGawrychowski2020}. 
Meanwhile, for detection we have a separation between the $O(n)$ algorithm for patterns \cite{GuillemotMarx2014} and the $n^{1+\Omega(1)}$ conditional lower bound for cycles in sparse graphs \cite{ABKZ22,ABF23,JX23}.
Since the same lower bound also applies to approximate counting of $4$-cycles, this implies a separation for approximate counting. 
Finally, for enumeration the lower bounds of \cite{ABF23,JX23} are stronger (and in fact tight), of order $\Omega(\min\{n^{2-o(1)}, m^{4/3-o(1)}\})$, conditioning on the 3-SUM conjecture.
Again, since enumeration of permutation patterns takes time near-linear in $n$ and $t$, we get a separation from $4$-cycles for sufficiently small values of $t$.

Our proof of the main approximate pattern counting result builds on a number of simple and elementary tools, most of which will be discussed in~\Cref{sec:techniques}. Among these is a seemingly new deterministic data structure for approximately counting $12$-copies in arbitrary axis-aligned rectangle ranges in the plane, which we believe may be of independent interest. 

\begin{proposition}[Data structure for approximate $12$-counting queries]
\label{lemma:counting-12-copies}
    There exists a deterministic data structure for $n$-point sets in $\R^2$, that implements the following with preprocessing time $\tO(n \eps^{-1})$. Given an axis-parallel rectangle $R \subseteq \R^2$ as a query, the data structure reports a $(1+\eps)$-approximation of the number of $12$-copies inside $R$, with per-query time $\poly(\log n) \cdot \eps^{-1}$.
\end{proposition}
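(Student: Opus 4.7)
I would build a two-level orthogonal range tree augmented with exact within-cell counts and Har-Peled-style summaries for the induced monotone rank functions. Build a balanced BST primary on the $x$-coordinates; at each primary node $v$, maintain a secondary BBST on the $y$-coordinates of the points $P_v$ in $v$'s subtree. At each secondary canonical node $u$, precompute and store (i) the exact number $C_u$ of $12$-copies inside $u$'s point set, and (ii) an $x$-sorted array of those points; at each primary node $v$ also store a $y$-sorted array of $P_v$. The $C_u$'s can be built bottom-up: if $u$'s secondary children are $u'$ (lower $y$) and $u''$ (upper $y$), then $C_u = C_{u'} + C_{u''} + X(u',u'')$, where $X$ counts pairs $(p,q) \in u' \times u''$ with $p_x < q_x$ (the $p_y < q_y$ condition is automatic by the $y$-canonicity). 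A merge-sort sweep on the $x$-sorted lists computes $X(u',u'')$ in time $O(|u|)$, so the full preprocessing fits in $\tO(n)$, well inside the $\tO(n \eps^{-1})$ budget.

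\textbf{Query decomposition.} For a query $R = [x_1,x_2] \times [y_1,y_2]$, decompose $[x_1,x_2]$ in the primary tree into $O(\log n)$ canonical nodes $v_1, \ldots, v_k$ (left-to-right in $x$). Every $12$-copy in $R$ is either (a) contained in a single $P_{v_i}$, contributing a ``within'' count $T_i$ restricted to $y \in [y_1,y_2]$, or (b) split across $P_{v_i}, P_{v_j}$ with $i<j$, giving a ``primary cross'' count $P_{ij}$ restricted to $y \in [y_1,y_2]$ (with $p_x < q_x$ automatic). For each $T_i$, further decompose $[y_1,y_2]$ in the secondary tree at $v_i$ into canonical nodes $u_1, \ldots, u_m$ (bottom-to-top in $y$), obtaining $T_i = \sum_r C_{u_r} + \sum_{r<s} S_{rs}$, where $S_{rs}$ is a ``secondary cross'' count of pairs $(p,q) \in u_r \times u_s$ with $p_x < q_x$ ($p_y < q_y$ is automatic). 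Each cross count (primary or secondary) has the same shape: given two disjoint 1D sorted sets $A, B$ drawn from the appropriate coordinate projection, and a clip $[a,b]$, approximate
\[
\Sigma(A,B,[a,b]) \;=\; \sum_{q \in B \cap [a,b]} \bigl|\{p \in A : p \in [a, q)\}\bigr|.
\]

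\textbf{Approximation primitive and error.} The integrand $\rho(q) := |A \cap [a,q)|$ is monotone non-decreasing as $q$ sweeps through $B \cap [a,b]$ in sorted order, so a Birg\'e/Har-Peled geometric partition applies: locate $O(\eps^{-1} \log n)$ cut-points in $B$ at which $\rho$ grows by a factor $(1+\eps)$, and on each resulting block approximate $\rho$ by its left-endpoint value times the block length. Each cut-point is found by a binary search in $B$'s sorted array, with each probe costing an $O(\log n)$ range count in $A$, for $\tO(\eps^{-1})$ time per primitive call. The query invokes the primitive $O(\log^2 n)$ times for primary cross pairs, and $O(\log^2 n)$ times per $v_i$ for the secondary cross pairs, giving total query time $\poly(\log n) \cdot \eps^{-1}$. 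Since all terms are non-negative and each cross count is estimated within a $(1+\eps)$ factor (while the $C_{u_r}$'s are exact), the summed estimator is a $(1+\eps)$-approximation of the total $12$-count. The main subtlety — and the place where care is needed — is arranging the two-level decomposition so that each $12$-copy's $x$- and $y$-constraints are handled \emph{exactly once}, one by the canonical-node partitioning and the other by the monotone-coreset primitive, with no over- or under-counting across the primary/secondary splits.
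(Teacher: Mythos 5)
Your proposal is correct and takes essentially the same approach as the paper: both build a two-level $x$-then-$y$ range/segment tree, precompute exact $12$-counts at canonical cells, and use a Birg\'e/Har-Peled monotone-sum primitive for the cross-cell contributions at query time. The only distinctions are organizational -- you split the cross terms hierarchically into ``primary cross'' and ``secondary cross'' counts with a bottom-up merge recurrence $C_u = C_{u'} + C_{u''} + X(u',u'')$ for preprocessing, while the paper flattens the query into $O(\log^2 n)$ canonical rectangles, takes all $O(\log^4 n)$ ordered pairs, and precomputes each $C_u$ directly by merge sort -- and these affect only low-order $\log$ factors within the claimed $\poly(\log n)\cdot\eps^{-1}$ bound.
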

\ref{lemma:counting-12-copies} gives rise to an interesting conditional separation between exact and approximate data structures for $12$-counting in axis aligned rectangle ranges. Duray, Kleiner, Polak, and Williams \cite{DKPW20} proved bidirectional reductions showing the following two problems are equivalent (up to lower order terms): 
\begin{itemize}
\item Given a set of $n$ points in the plane together with $q = \Theta(n)$ vertical ``slabs'' \review{(which are possibly overlapping)}, compute the exact number of $12$-copies in each of the slabs; and
\item ``Edge-Triangle Counting'': Given access to an unknown graph on $n$ vertices, count for each edge $e$ the number of triangles containing $e$.
\end{itemize}
The first task can be trivially implemented by constructing an exact data structure for $12$-counting queries (in the analogous setting to \Cref{lemma:counting-12-copies}), and using $\Theta(n)$ queries to this data structure to compute the exact $12$-count in each slab, since such slabs are a special case of a rectangle. Thus, lower bounds for the first task translate to lower bounds for the exact analogue of the $12$-range queries data structure.

The results of \cite{DKPW20} imply that conditioned on the 3-SUM conjecture, both of these tasks cannot be solved in time better than $n^{\frac{4}{3}-o(1)}$. In particular, conditioned on 3-SUM, this rules out the existence of exact data structures (for the same $12$-copies range counting problem as in~\Cref{lemma:counting-12-copies}) that simultaneously have construction time better than $n^{\frac{4}{3}-\Omega(1)}$ and query time better than $n^{\frac{1}{3}-\Omega(1)}$ for the exact variant. Compared to our approximate data structure, with near-linear construction time and polylogarithmic query time, this gives a conditional \emph{polynomial separation} between exact and approximate data structures.

Notably, there is extensive body of literature on approximate range queries in computational geometry, e.g., \cite{AronovHarpeled2008,afshani2009approximate,AHZ09,kaplan2011range,ChanWilkinson2016,Rahul2017}. The main line of work is on data structures for counting the number of points within a range (i.e., counting ``1-copies'' in our language), as contrasted to our data structure, which counts $12$-copies, also known as increasing pairs or dominance pairs, inside the range. 

For point queries in axis-aligned rectangles (a task called ``orthogonal range counting'' in computational geometry), even exact queries can be easily supported with $\tilde{O}(n)$ construction time and polylogarithmic query time, using textbook segment trees; optimal constructions bring the query complexity down to $o(\log n)$, see, e.g., \cite{ChanWilkinson2016}. While approximate approaches here may provide further savings as compared to their exact counterparts, the separation is at worst polylogarithmic (and not polynomial). The polynomial separation between our approximate data structure from~\Cref{lemma:counting-12-copies} and the conditional lower bounds of \cite{DKPW20} for $12$-range queries is thus much more dramatic compared to the analogous situation in point range queries.

\subsection{Our Techniques}
\label{sec:techniques}
We next outline the three main ideas central to our approximate counting work: (i) the Birg\'e technique for exploiting structural monotonicity; (ii) using separators to impose additional structure on pattern instances; and (iii) a specialized data structure for approximating the counts of $12$ copies within axis-parallel rectangles.\footnote{Throughout our work, we assume the input is a permutation. Nevertheless, our proofs also handle inputs/functions that contain points with the same $y$-coordinate, i.e., the proofs tolerate  $f(i) = f(j)$ for $i \neq j$.
Also, without loss of generality, for the problem of counting patterns, it can be assumed that $f(i) \in \{0, 1, \ldots, n\}$.
}

\subsubsection{Leveraging the Birg\'e decomposition for monotonicity-based counting (\Cref{sec:almost-all-4-length})}
\review{Our proof makes crucial use of the Birg\'e approximation method. This is a simple method to approximate monotone distributions using a step function with few steps. It was developed by Lucien Birg\'e in the 1980's \cite{Birge1987} (see also \cite{DDSVV13,DDS14,CanonneSurvey} for applications in distribution testing). In our context, we use a discrete version of this result: a coreset for monotone functions, proved by Har-Peled \cite{HarPeled2006} (see also \cite{Halman2014}), which states the following. To approximate the sum of a (weakly) monotone sequence $x_1 \geq x_2 \geq \ldots \geq x_n \geq 0$ to within a $1+\eps$ multiplicative factor, one only needs to query a sublinear number, $O(\eps^{-1}\log n)$, of the elements in the sequence.}

\review{While it is not clear at first sight where one can find monotone quantities in our context, the use of coresets for monotone functions proves unexpectedly powerful and useful for our algorithms.
We illustrate this idea in counting $4$-patterns.}
Our approach to approximating the count of patterns like $1324$ starts by fixing a value of the ``3''.
Specifically, we divide the set of all $1324$-copies in the permutation based on the position of ``3'', creating subsets $C_1, C_2, \ldots, C_n$ where for each $1324$-copy in $C_i$, the ``3'' appears at the $i$-th location.

Once ``3'' is fixed to a certain position, we look at the possible positions for ``4''. Fixing ``4'' further organizes $C_i$ into smaller groups based on the placement of ``4'' relative to ``3''. One such scenario is illustrated in \Cref{fig:fix-3}.
\begin{figure}[h]
    \centering
    \includegraphics[width=0.5\linewidth]{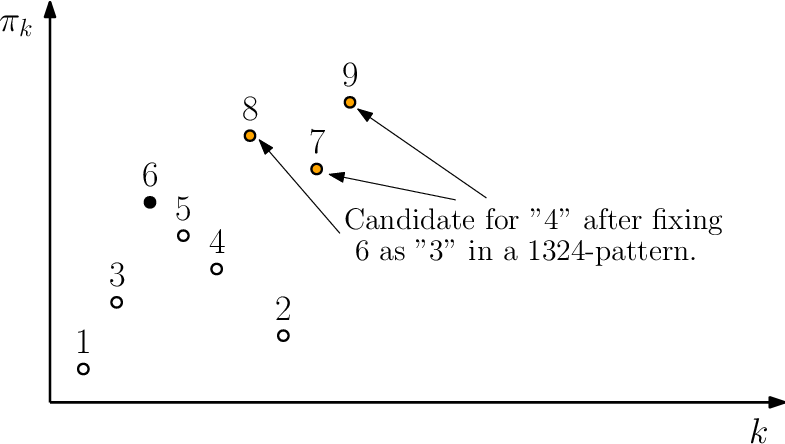}
    \caption{The illustration corresponds to permutation $\pi = 136548279$, depicted in a plane at points $(i, \pi_i)$.}
    \label{fig:fix-3}
\end{figure}
The key insight is that each position of ``4'' constrains the remaining elements of the $1324$ patterns in a \emph{monotone} way. 
For example, after fixing ``3'' to a specific position in the permutation, we can identify all positions of ``4'' that can extend this configuration into valid $1324$ copies. Within this subset, the \emph{positions} of ``4'' exhibit a specific ordering: if ``4'' appears at a given position in the sequence, any more-to-the-right occurrence of ``4'' will continue to yield valid $1324$ copies!
Similarly, we fix ``2'' and then count the relevant candidates for ``1''. In \Cref{sec:almost-all-4-length}, we show that fixing ``2'' also exhibits a certain monotonicity.

We use the Birg\'e decomposition to take advantage of this structure. 
The decomposition allows us to break down each subset $C_i$ into manageable, monotone classes and then efficiently approximate the count of each class in polylogarithmic time. 
By structuring the count around this monotonicity, we can approximately compute each $|C_i|$ without directly enumerating all possibilities, which would be computationally expensive.

So, by fixing values like ``3'', then ``4'', and then ``2'', and using the Birg\'e decomposition to handle the emerging monotonic structures, we reduce the complexity of counting $1324$ patterns to a series of fast approximations, leading to $O(n \cdot \poly(\log n, \eps^{-1}))$ running time.

\subsubsection{Imposing structure through separators for $4$-patterns (\Cref{sec:2413})}
While the Birg\'e decomposition effectively handles some patterns, others (such as $2413$) do not exhibit the same straightforward monotonic structure. 
For these patterns, we introduce separators to impose additional structural constraints.

Consider the $4$-pattern $2413$. Unlike $1324$, this pattern does not naturally exhibit a straightforward monotonic structure. 
If we fix ``4'' to a particular position, we would ideally like the positions of other elements -- ``2'', ``1'', and ``3'' -- to show some consistent ordering so that we can apply an efficient counting method. 
However, without further structuring, the placements of ``1'' and ``3'' relative to ``4'' do not seem to reveal any particular order.

\begin{figure}[h]
    \centering
    \includegraphics[width=0.8\linewidth]{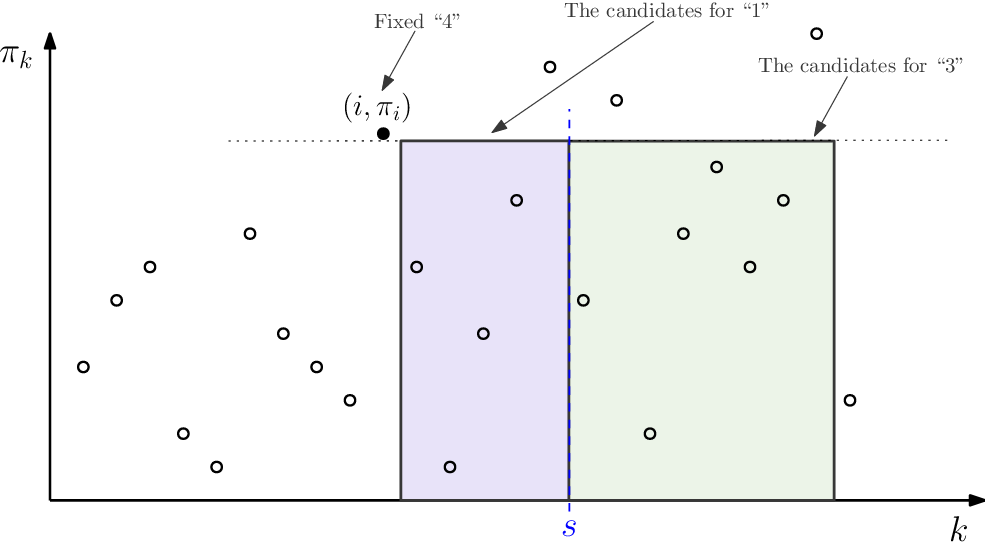}
    \caption{
        An illustration of the use of separators to split the candidates for ``1'' and ``3'' into disjoint but neighboring regions, based on their position.
    }
    \label{fig:separator-example}
\end{figure}
To handle this, we introduce a separator to divide the possible positions of elements in $2413$ based on their relative positions to ``4''. 
For instance, after fixing ``4'', we introduce a position-based separator $s$ that splits the plane into two regions.
We then require that ``1'' appears to the left of $s$ while ``3'' appears to the right of $s$. This allows us to approximate the count of $2413$ copies within each configuration independently. We illustrate such a separator in \Cref{fig:separator-example}.
With this separator in place, the counts of $2413$ copies become monotone again, enabling us to apply the Birg\'e decomposition to each subset created by the separator.
The complete analysis is presented in \Cref{sec:2413}.

\subsubsection{A data structure for $(1+\eps)$-approximate $12$-range queries in plane (\Cref{sec:counting-12-copies})}

We introduce a deterministic data structure for approximate counting simple $12$ patterns (increasing pairs) within arbitrary axis-aligned rectangles. 
This primitive allows us to query the approximate number of $12$ copies within any subregion of the input permutation. We employ this data structure to count $5$-patterns.

To develop this $12$-copy counting data structure, we use a two-dimensional segment tree.
With this tree, we pre-process the points in a bottom-up manner in $O(n \cdot \poly(\log n, \eps^{-1}))$ time. 
\Cref{sec:counting-12-copies} details the implementation of this bottom-up pre-processing.
This pre-processing computes an approximate number of $12$ copies within each vertex of the segment tree.
These pre-computed values are later used to answer queries for approximating the number of $12$ copies within arbitrary rectangles, each answered in polylogarithmic time. 

\subsubsection{Global separators for $5$-patterns (\Cref{sec:5-patterns})}
When extending our approach to $5$-patterns, we introduce an enhanced separator structure, which we refer to as global separators. 
This structure is specifically designed for handling the additional complexity that arises when counting 5-patterns, such as $24135$.

These separators are easiest to describe using the language of two-dimensional segment trees. Consider a two-dimensional segment tree $S$ built over the plane.
The outer segment tree divides the space along the $x$-axis, while each vertex in this tree contains an inner segment tree that further partitions the range along the $y$-axis. 

For each vertex $v$ in the outer segment tree, we want to count all copies of a given $5$-pattern, e.g., $24135$, that exist within $v$ but do not appear in any of its child vertices. 
This setup naturally leads to the concept of \emph{vertical separators}. 
Given that $v$ corresponds to an interval $[a,b]$ along the $x$-axis, we define a vertical separator at the midpoint $(a+b)/2$. 
Any copy that spans both sides of this vertical separator is counted within $v$ but not in any of $v$’s children.

\begin{figure}[h]
    \centering
    \includegraphics[width=0.6\linewidth]{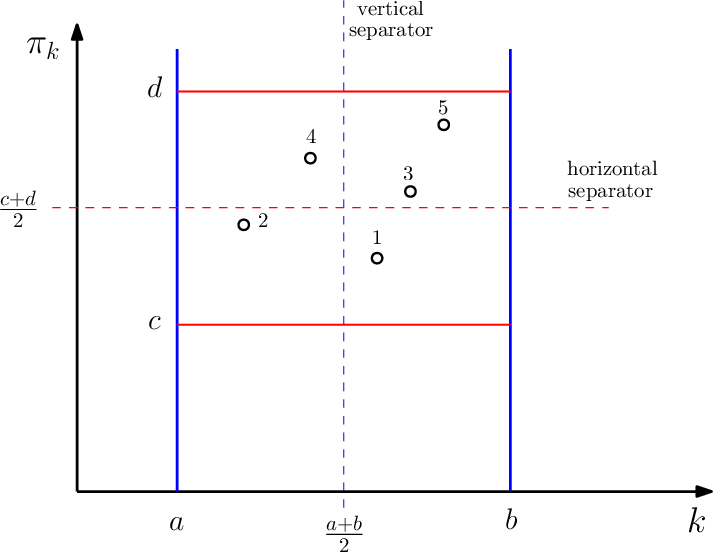}
    \caption{
        This sketch depicts the notion of vertical and horizontal global separators.
        In this example, the vertical dashed (blue) line is a vertical separator, splitting the range $[a, b]$ into two equal-sized halves.
        The horizontal dashed (red) line is a horizontal separator.
        The example also shows a $(24135)$ copy. 
        This copy is counted only if (i) the ``2'' is to the left and the ``5'' is to the right of the vertical separator, \textbf{and}, (ii) if the ``1'' is below and the ``5'' is above the horizontal separator.
    }
    \label{fig:global-separator}
\end{figure}
In addition to vertical separators, we introduce \emph{horizontal separators} that further partition each $v$ based on the $y$-axis. 
This second layer of separation divides the region into four distinct quadrants. We refer to \Cref{fig:global-separator,fig:global-separator-more-details} for an illustration.
In addition, we consider all \emph{valid configurations} of $24135$ copies relative to these quadrants. 
For instance, we can enforce that specific elements (e.g., ``2'' and ``5'') fall on opposite sides of the vertical separator and that others (e.g., ``1'' and ``5'') fall on opposite sides of the horizontal separator. 
This structure ensures that each copy of the pattern is counted exactly once within a unique configuration. Crucially, it turns out that this structure also induces monotonicity and allows for using the Birg\'e decomposition for efficient approximate counting.

\subsection{Discussion}
\label{sec:discuss}
Our results and techniques open several interesting follow-up questions, both as the first approximate permutation pattern counting results for $k > 2$ and due to the novel use of the Birg\'e decomposition. We discuss these open questions here. Along the way, we discuss (and compare our approach to) other approximate counting methods from the literature.

\paragraph{The complexity of approximate counting as a function of $n$ and $k$.}
The main open question is on the complexity of approximate counting for general (small) $k$, beyond the regime $k \leq 5$ considered in this paper.
Since our techniques are tailored ad hoc to very small patterns, it is not clear how to apply them in a more general setting, and new ingredients are likely required.

As discussed, there are complexity separations between detection and exact counting of permutation patterns: detection takes $O(n)$ time for any fixed length $k$, while exact counting requires $n^{1+\Omega(1)}$ time for $k=4$, assuming Strong 3-SUM, and $n^{\Omega(k / \log k)}$ time when $k$ is a parameter, assuming ETH.
Approximate counting lies between exact counting and detection, and it is a priori unclear where its complexity sits between linear in $n$ (for detection) and nearly worst-possible (for exact counting).

\begin{question}[Complexity of approximate counting]\label{q:approx}
What is the time complexity of \emph{approximating} the number of $\sigma$-patterns in an input sequence $f \colon [n] \to \R$ to within a $(1+\epsilon)$-multiplicative error, as a function of $n$ and $k = |\sigma|$?
\end{question}

Establishing tight upper and lower bounds for \Cref{q:approx} appears to be challenging. 
Even for exact pattern counting, a more extensively studied problem, there remains a gap between the best known upper bound of $n^{k/4 + o(k)}$ and the conditional lower bound of $n^{\Omega(k / \log k)}$, both attained by Berendsohn, Kozma and Marx \cite{berendsohn2021finding}. 
Nevertheless, given the separation we establish for $k=4$ and $k=5$ (along with the new techniques which are specially suited for approximate computation)
it is tempting to conjecture that the complexity of approximate counting in the general case, as a function of $n$ and $k$, is fundamentally lower than that of exact counting. We make the following conjecture.
\begin{conjecture}
\label{conj:better_upper_bound}
The time complexity of approximate counting $\sigma$-copies in a length-$n$ sequence, as a function of $n$ and $k=|\sigma|$, is asymptotically smaller than that of exact counting for the same parameters.
\end{conjecture}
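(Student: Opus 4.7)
The plan is to establish this conjecture by combining an improved upper bound for approximate counting with the known conditional lower bound for exact counting. The lower-bound side is already supplied by the $n^{\Omega(k/\log k)}$ ETH-based bound for exact counting from~\cite{berendsohn2021finding}, so the task reduces to producing an approximate counting algorithm whose running time, as a joint function of $n$ and $k$, grows asymptotically more slowly than the exact upper bound of $n^{k/4+o(k)}$, and ideally more slowly than the exact lower bound itself.

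The first step of the upper-bound attack is to lift the three ingredients behind \Cref{thm:main_theorem} to arbitrary $k$ via induction on $k$. Assume inductively that, for every $(k-1)$-pattern $\tau$, we have a data structure supporting $(1+\eps)$-approximate range queries for $\tau$-copies in axis-aligned rectangles, generalizing \Cref{lemma:counting-12-copies}. To count copies of a $k$-pattern $\sigma$, fix the position of a distinguished element of $\sigma$ (say, its maximum) and sum, over all $n$ candidate placements, the approximate $(k-1)$-pattern count in the induced rectangle. To bring this outer sum into a shape where the Birg\'e/Har-Peled coreset applies, introduce vertical and horizontal global separators in the spirit of \Cref{sec:5-patterns}, and partition $\sigma$-copies by which side of each separator each remaining element lies on. Each resulting sub-case is designed so that the count is weakly monotone in the index of the fixed element, so the coreset reduces the outer sum to $\tilde{O}(\eps^{-1})$ probes into the inductive data structure.

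The main obstacle is the combinatorial explosion of this case analysis. Already at $k=5$ it runs to $512$ sub-cases, and a naive iteration would blow up to roughly $2^{\Theta(k^2)}$ configurations per inductive level, each contributing its own $\poly(\log n, \eps^{-1})$ overhead. The essential question is therefore structural: can one identify a compact family of ``separator signatures,'' significantly smaller than all $2^{O(k^2)}$ possibilities, such that, for every pattern $\sigma$ and every distinguished element, at least one signature induces a monotone count? Such a classification would play a role for approximate counting analogous to the one corner trees play for exact counting of $3$- and $4$-patterns in~\cite{EvenZoharLeng2021}. A second obstacle is extending the approximate range-counting data structure from $12$-copies to arbitrary $(k-1)$-patterns in $\tO(n)$ preprocessing time, since the two-dimensional segment tree underlying \Cref{lemma:counting-12-copies} exploits specific properties of increasing-pair counts that do not immediately generalize.

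If both obstacles can be absorbed into a $2^{\poly(k)} \cdot \poly(\log n, \eps^{-1})$ overhead per inductive level, the resulting bound of the form $n \cdot 2^{\poly(k)} \cdot \poly(\log n, \eps^{-1})$ would lie well below the $n^{\Omega(k/\log k)}$ conditional lower bound for exact counting across a broad $k$-regime, confirming \Cref{conj:better_upper_bound}. The hard part, and the one I expect to be the true bottleneck, is the structural classification above: without a pattern-oblivious way to produce a monotonizing separator signature, the approximate counting bound will inherit the same exponential-in-$k$ behavior as the best known exact bound, and the separation conjectured here will remain open.
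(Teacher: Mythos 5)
This statement is labeled a \emph{conjecture} in the paper, and the paper does not prove it: immediately after stating it, the authors remark only that ``Proving any bound of the form $n^{o(k/\log k)}$ would affirm this conjecture'' and leave the question open. So there is no paper proof to compare against, and any ``proof proposal'' here should be judged as a research plan rather than a complete argument.

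Viewed as such, your sketch is broadly consistent with the paper's own framing --- pushing the Birg\'e coreset, global separators, and a generalized range-query data structure to larger $k$, and then invoking the ETH-based $n^{\Omega(k/\log k)}$ exact lower bound --- and you correctly note that success would only yield a \emph{conditional} separation. However, you also correctly identify that the plan has genuine gaps, and those gaps are not closed here: (i) you do not exhibit the ``pattern-oblivious monotonizing separator signature'' that would control the $2^{\Theta(k^2)}$ blow-up in the case analysis, and (ii) you do not construct the generalized $\tau$-copy range-query data structure for $|\tau|>2$ with $\tilde O(n)$ preprocessing, which does not follow from \Cref{lemma:counting-12-copies}. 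Since you explicitly state that without the first ingredient ``the separation conjectured here will remain open,'' the proposal does not constitute a proof, only a plausible direction --- which is exactly the status the conjecture has in the paper. One further caveat worth making explicit: because the conjecture compares against the true (unknown) complexity of exact counting, establishing it requires simultaneously an approximate upper bound and an exact lower bound that provably separate, and the only available exact lower bound is conditional on ETH; any resolution along these lines is therefore inherently conditional unless new unconditional lower bounds for exact counting are found.
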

Proving any bound of the form $n^{o(k/\log k)}$ would affirm this conjecture. 
But even improving upon the state of the art for exact counting would be interesting. 
The current best known approach of \cite{berendsohn2021finding} formulates the pattern matching instance as a constraint satisfaction problem (CSP) with binary constraints. The complexity of solving this CSP is $O(n^{t+1})$, where $t$ is the treewidth of the incidence graph of the pattern $\pi$ (see also the work of Ahal and Rabinovich \cite{AhalRabinovich2008} for an earlier investigation of the role of treewidth in this context). 
The basic constraint graph has treewidth bounded by $k/3 + o(k)$; Berendsohn et al.~combine the tree-width based approach with a gridding technique based on ideas of Cygan, Kowalik, and Soca\l{}a~\cite{Cygan2019} to reduce the exponent to $k/4 + o(k)$.

Going back to the case where $k$ is a small fixed constant, in parallel to our work, Beniamini and Lavee \cite{BL25} very recently extended the line of work on exact small pattern counting to slightly larger patterns, of length up to $k=7$. Using exact multidimensional pattern trees (including an exact analogue of the $12$-range counting data structure we use in this paper) they were able to obtain an $\tilde{O}(n^2)$-time exact algorithm for $k \leq 7$, and a $\tilde{O}(n^{7/4})$ exact algorithm for $k = 5$. As an interesting special case of~\Cref{conj:better_upper_bound}, it would be interesting to explore approximate counting in the same regime. Another very recent and independent work, by Diehl and Verri \cite{DiehlVerri2025}, extends the (exact) corner trees framework of \cite{EvenZoharLeng2021} to a larger set of double poset structures.

\paragraph{Stratified sampling and approximate counting to detection reductions.}
Perhaps the most generic and widely-used approach to approximate counting in the literature is a randomized technique based on stratified (non-uniform) sampling. 
An intriguing line of recent work \cite{DLM22,DLM24,CensorHillelEvenWilliams2025} (see also \cite{DL21}) develops generic \emph{approximate counting to detection} reductions, which use stratified sampling as a fundamental primitive. These frameworks require the problem to be presented as a counting or detection problem on an (implicit) $k$-partite hypergraph. Suppose one is given a hyperedge detection oracle $\mathcal{O}$ that is able to detect whether a given subhypergraph (in an input $k$-partite hypergraph) is non-empty. Using stratified sampling, the framework implements an approximate counting algorithm for $k$-partite hyperedges that makes only a \emph{polylogarithmic} number of queries to the detection oracle.

\paragraph{How to make stratified sampling work: A rainbow detection approach.}
Recall that permutation pattern detection is solvable in $O(n)$ time for any fixed pattern length $k$ \cite{GuillemotMarx2014}. Naturally, the reader may ask now whether this linear-time detection algorithm can be readily combined with the approximate counting to detection reduction framework, to obtain a near-linear time approximate counting algorithm for all fixed $k$, thereby settling the main open problem and solving~\Cref{conj:better_upper_bound} in an ultimate sense. Unfortunately, it is unlikely that the (vanilla) pattern detection problem can be cast in a $k$-partite hypergraph format. But all is not lost: Consider the following \emph{colorful permutation pattern detection} problem. In this problem, the elements of the input permutation are randomly colored by one of $k$ colors in advance, where $k$ is the pattern length as usual, and the color of each element is chosen uniformly and independently of all other colors. At the end of the coloring process, each color class corresponds to one part of a $k$-partite hypergraph, and the hyperedege detection problem corresponds to detecting a rainbow copy of the pattern in the input permutation. (For simplicity of the discussion, we skip some of the implementation details here.)
Thus, we conclude that an efficient permutation pattern detection algorithm for \emph{rainbow copies} in the randomly colored setting would immediately lead to an approximate counting algorithm (in the standard setting, without colors) with the same running time, up to lower order terms.

Next, one may ask whether the analysis of Guillemot and Marx from \cite{GuillemotMarx2014} can be applied to the rainbow setting. Roughly speaking, the main technical result in their paper is a ``structure versus chaos'' type result, which in modern language asserts that either the (twin-)width of a certain structure is bounded by a constant, or the input permutation satisfies the conditions of the Marcus-Tardos theorem \cite{MarcusTardos2004}, thus containing a $k \times k$ subgrid which itself contains a copy of the desired pattern.\footnote{The Marcus-Tardos theorem \cite{MarcusTardos2004} in extremal combinatorics asserts that for fixed $k$ and large $n$, any subset of at least $C(k) \cdot n$ points in the $n \times n$ grid contains an induced subgrid of size $k \times k$, where $C(k)$ depends only on $k$ (and not on $n$).}
Thus, in order to adapt the Guillemot-Marx algorithm to the rainbow setting, one would need at the very least to prove a suitable rainbow version of the Marcus-Tardos theorem, if true at all. Here is one possible version of interest.
\begin{question}[A possible rainbow version of the Marcus-Tardos theorem]
Let $S$ be a set of points in the $n \times n$ grid, where each point in $S$ is colored randomly and independently in one of $k$ colors (the same set of colors for all points). Is it true that there exists a constant $C(k)$ depending only on $k$ (and not on $n$), such that if $S$ contains at least $C(k) \cdot n$ points, then with high probability, $S$ contains a $k \times k$ subgrid where each column is monochromatic, and each row is rainbow?
\end{question}

We complete the discussion by mentioning that the approximate counting techniques we employ in our paper can be easily translated into rainbow detection techniques. Thus, a slight modification of our techniques yields rainbow detection in near-linear time for all patterns of length $k \leq 5$. Through the use of the approximate-counting-to-rainbow-detection framework, one can obtain an alternative proof to our main results. However,
\begin{itemize}
\item This proof technique can only give a \emph{randomized algorithm} (while our algorithm is deterministic); and
\item Such a proof would not be materially simpler than the one in our paper. The proof would still go through enumerating over a number of different techniques, though the techniques themselves slightly differ (both Birg\'e approximation and the approximate $12$-counting data structure should be replaced by rainbow detection variants, while the separator-based ideas remain unchanged).
\end{itemize}

\paragraph{Relation to parameterized width notions.}
Another possible avenue to proving better approximate upper bounds is through studying connections to width notions from the parameterized complexity literature. As we saw, algorithmic results for both detection and exact counting make use of such notions: the former gave rise to twin-width \cite{GuillemotMarx2014,BonnetTwinWidth2021} and the latter makes heavy use of tree-width \cite{AhalRabinovich2008,berendsohn2021finding}. It would be very intriguing to explore what role such width notions may play in the approximate version of pattern counting. The fact that approximate counting (in the small $k$ case) admits techniques that go beyond the exact case may suggest that either a complexity notion other than tree-width is at play here, or we can use the new techniques to bound the tree-width of an easier subproblem (with more of the values constrained due to the use of, say, substructure monotonicity and Birg\'e approximation).

\paragraph{Toward super-linear approximate lower bound.}
From the lower bound side, essentially no nontrivial (superlinear) results are known for the Word RAM model, and proving any $\omega(n)$ lower bound that applies to the approximate counting of some fixed-length patterns would be interesting. 

\begin{conjecture}
There exists a pattern $\sigma$ of constant length for which approximate counting of $\sigma$ in length-$n$ sequences requires $\omega(n)$ time in the Word RAM model.
\end{conjecture}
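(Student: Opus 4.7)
The plan is to establish the conjectured $\omega(n)$ lower bound via a fine-grained reduction from a problem that has a known conditional super-linear lower bound, most naturally counting or approximately counting $4$-cycles in sparse graphs. The starting point is the $m^{1+\Omega(1)}$ lower bound for approximate $4$-cycle counting (which holds already for detection) in $m$-edge graphs under the Strong 3-SUM conjecture, as referenced in the introduction. Since \Cref{thm:main_theorem} rules out such a bound for $k \leq 5$, the pattern $\sigma$ we are looking for must have constant length $k \geq 6$.

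The first step is to exhibit an explicit pattern $\sigma$ and an efficient instance map from a sparse graph $G$ to a length-$n$ sequence $f_G$ such that (i) the size of $f_G$ is nearly linear in the number of edges of $G$, and (ii) the number of $\sigma$-copies in $f_G$ decomposes as $C \cdot \#_4(G) + E(G)$, where $\#_4(G)$ is the number of $4$-cycles in $G$, $C$ is a fixed positive constant, and $E(G)$ is a ``correction'' term that can either be evaluated exactly or $(1+\eps)$-approximated in near-linear time. Under such a reduction, a near-linear approximate algorithm for $\sigma$-counting would yield a near-linear approximate algorithm for $4$-cycle counting, contradicting the assumed lower bound. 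A natural candidate is to take $k$ just large enough to rigidly encode a $4$-cycle, so that each copy of $\sigma$ contributes multiplicatively one $4$-cycle while the correction term $E(G)$ counts simpler substructures (paths of length at most three, triangles, etc.) that are amenable to the Birg\'e-based coresets, separators, and approximate $12$-range-counting structure of \Cref{lemma:counting-12-copies} developed in this paper.

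The main obstacle is \emph{approximation preservation}. Essentially all known reductions in the permutation-pattern literature, including the bidirectional Dudek-Gawrychowski equivalence, are crafted for exact counting and relate the two counts through inclusion-exclusion identities whose cancellations destroy any $(1+\eps)$-guarantee; indeed, the very existence of our main theorem shows that these classical reductions cannot be gap-preserving at $k=4$. Consequently, the reduction must be redesigned from scratch at larger $k$, and one must carefully argue that the correction terms are either $o(\#_4(G))$ in magnitude (so that their approximation does not blur the main term) or can be computed exactly by the tools of this paper. Proving such a clean decomposition for a specific small pattern, and then verifying it is gap-preserving, appears to be the hardest technical step.

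An alternative, possibly cleaner line of attack would be to bypass graph reductions altogether and embed a geometric variant of $3$-SUM directly into pattern counting, using the flexibility afforded by longer patterns to encode arithmetic identities; a gap between ``zero'' and ``many'' solutions of the $3$-SUM instance would then translate into a multiplicative gap in the pattern count. Either route ultimately faces the same bottleneck: collapsing the construction down to a \emph{fixed} constant length $k$, independent of $n$ and $\eps$, while preserving the approximation gap. A combinatorial result in the spirit of the rainbow Marcus-Tardos question raised in the discussion, or a bespoke CSP-style pattern whose incidence graph has high treewidth only in an ``approximation-robust'' way, would be a plausible tool for forcing this collapse.
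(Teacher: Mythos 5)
This statement is a \emph{conjecture}, not a theorem: the paper does not prove it, and as far as I can tell no proof is known. So the comparison here is between your speculative proof sketch and the paper's framing of the open problem, not between two proofs.

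Your first concrete step already contains an error that sends the whole plan in the wrong direction. You read the conjecture as asking for an $n^{1+\Omega(1)}$ lower bound (or something strong enough to contradict a near-linear algorithm), and you conclude that any qualifying pattern must have length $k \ge 6$, since \Cref{thm:main_theorem} gives an $n \cdot \poly(\log n, \eps^{-1})$ algorithm for $k \le 5$. But the conjectured bound is only $\omega(n)$, and $n \cdot \poly\log n$ is itself $\omega(n)$. The paper's algorithm for $k \le 5$ does \emph{not} refute the conjecture for those $k$; on the contrary, the surrounding text explicitly conjectures that the polylogarithmic overhead is already unavoidable at $k = 4$ (contrast with $k = 2$, where Chan and P\u{a}tra\c{s}cu achieve $O(n)$). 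So the premise ``$\sigma$ must have length $\ge 6$'' is false, and the conjecture is very much expected to be witnessed at $k = 4$.

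This misreading also explains why your toolkit is the wrong one. Reductions from 3-SUM or 4-cycle detection/counting are calibrated to yield \emph{polynomial} lower bounds of the form $n^{1+\Omega(1)}$; but for $k = 4$ and $k = 5$ the paper already proves that no such polynomial lower bound can hold for approximate counting, so any gap-preserving reduction of the type you describe is doomed for those $k$, and for $k \ge 6$ it would prove something far stronger than the conjecture while leaving the interesting small-$k$ case untouched. What the conjecture actually asks for is a \emph{polylogarithmic} separation from linear time, which is the regime of cell-probe or communication/information-theoretic lower bounds (as in the literature on inversion counting and orthogonal range counting), not fine-grained reductions. Your ``alternative line of attack'' via 3-SUM embeddings has the same mismatch. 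The obstacle you correctly identify -- that exact-counting reductions are not approximation-preserving -- is real, but it is an obstacle to proving the \emph{wrong} statement; overcoming it would not give you the $\omega(n)$ bound that is actually conjectured.
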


For $k=3,4,5$, the existing algorithms for, say, 2-approximate counting (and exact counting, for $k=3$) have time complexity $n \log^{O(1)} n$. This raises the question of whether the polylogarithmic dependence is necessary (for $k=2$ it is \emph{not} necessary \cite{ChanPatrascu2010}). We conjecture that the answer is positive already for $k=4$. 

To this end, note that even if the problem of colorful detection described above admits an $O(n)$ time algorithm, this does not rule out a (slightly) super-linear approximate counting lower bound, since known approximate counting to detection reductions make a polylogarithmic number of calls to the detection oracle, and thus will only yield an upper bound of order $n \cdot \log^{O(1)}(n)$ at best.

\paragraph{Birg\'e approximation in other contexts.}
Finally, the use of Birg\'e decomposition in this paper seems to be novel in the context of pattern counting and, perhaps more generally, in combinatorial contexts beyond the scope of distribution testing. 
This decomposition is very useful in our setting as many sequences of quantities turn out to be monotone. It would be interesting to find other counting problems in low-dimensional geometric settings which exploit monotonicity in non-trivial ways, and/or explore how Birg\'e approximation relates to existing approximate counting techniques in the computational geometry literature. One such technique known to be implementable deterministically is that of (standard and shallow) \emph{cuttings}, see \cite{chan2016optimal} and the references within.

\section{Preliminaries}


\subsection{Segment trees}
\label{sec:segment-trees}

We use a natural and standard representation of permutations, in which a permutation $\pi$ is represented by a set of points $\{(i, \pi_i) : i \in [n]\}$ in plane.
On this set of points, our algorithms perform simple counting queries. 

\begin{lemma}[Segment tree data structure]
\label{lemma:segment-tree}
Let $\pi$ be a permutation over $[n]$. Define 
$$
 S_{i,j}^{a,b} := |\{ x \in [n] : i \leq x \leq j, a \leq \pi(x) \leq b\}|.
$$
and $N_{i,j}^{a,b} = |S_{i,j}^{a,b}|$.
There exists a data structure that, given $\pi$, initializes in time $O(n \log^2 n)$ using $O(n \log n)$ space, and supports the following operations in time $O(\log^2 n)$: 
\begin{enumerate}[1.]
\item Value and location counts: given indices $i \leq j \in [n]$ and values $a \leq b \in [n]$, return $N_{i,j}^{a,b}$.
\item Query access to locations in segment: Given $i,j,a,b$ as above, and $1 \leq \ell \leq N_{i,j}^{a,b}$, return the index of the $\ell$-th leftmost element within the set $S_{i,j}^{a,b}$.

\item Query access to values in segment: Given $i,j,a,b$ as above and $1 \leq \ell \leq N_{i,j}^{a,b}$, return the $\ell$-th largest value within the set $\{\pi(x) : x \in S_{i,j}^{a,b}\}$.
\end{enumerate}

\end{lemma}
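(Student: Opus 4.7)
The plan is to build a two-dimensional segment (range) tree on the point set $P = \{(i, \pi(i)) : i \in [n]\}$. The outer segment tree, of depth $O(\log n)$, partitions the index range $[1,n]$; each outer node $v$ owning an index range $I_v$ stores the list of $y$-values $\{\pi(x) : x \in I_v\}$ sorted in increasing order, together with a 1D segment tree keyed on those values (which will support rank/select queries inside the node). Since each point appears in $O(\log n)$ outer nodes, the total space is $O(n \log n)$. The structure is built bottom-up: at each internal outer node, the two sorted child lists are merged in linear time, so the work per level is $O(n)$, giving $O(n \log n)$ time for the sorted lists and $O(n \log^2 n)$ for also setting up the inner segment trees, as claimed.

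For Query 1 (the count $N_{i,j}^{a,b}$), I would apply the textbook range-tree decomposition of $[i,j]$ into $O(\log n)$ canonical outer nodes and, at each one, binary-search its sorted $y$-list to count values in $[a,b]$. This yields $O(\log^2 n)$ total.

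For Query 2 (the $\ell$-th leftmost location with $y \in [a,b]$), I would traverse the canonical outer nodes $v_1, \ldots, v_r$ in left-to-right order, using the counting primitive above at each node to identify the unique $v_m$ that contains the rank-$\ell$ answer, and letting $\ell'$ denote the updated rank after subtracting the earlier canonical counts. I then descend from $v_m$ in the outer tree: at every internal node I query the count in its left child (an $O(\log n)$ binary search in that child's sorted $y$-list) and recurse into the left subtree if the count is at least $\ell'$, otherwise into the right subtree, decrementing $\ell'$ accordingly. Depth $O(\log n)$ times $O(\log n)$ per step gives $O(\log^2 n)$.

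For Query 3 (the $\ell$-th largest value in the rectangle), the cleanest route is to build a symmetric copy of the same data structure on the point set $P' = \{(\pi(x), x) : x \in [n]\}$ coming from the inverse permutation $\pi^{-1}$; this preserves the $O(n \log n)$ space and $O(n \log^2 n)$ construction bounds up to constants. The $\ell$-th largest $y$-value in $[i,j] \times [a,b]$ equals the $(N+1-\ell)$-th smallest, where $N = N_{i,j}^{a,b}$ is obtained by Query 1 on the primary structure; on the $P'$-structure this is precisely Query 2 applied to the rectangle $[a,b] \times [i,j]$ with rank $N+1-\ell$. The step that requires care is Query 3: a naive binary search over possible values using Query 1 as a black box gives $O(\log^3 n)$, which misses the bound. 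The inverse-permutation reduction avoids this; an alternative is to descend the inner segment trees of the $O(\log n)$ canonical outer nodes of $[i,j]$ in parallel along the $y$-axis, aggregating counts across them at each level, but the reduction to Query 2 via $\pi^{-1}$ is the cleanest to verify.
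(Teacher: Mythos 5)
Your proof is correct and takes essentially the same approach as the paper: a segment tree on the $x$-coordinates with each node storing the $y$-values of its points in sorted order, canonical decomposition into $O(\log n)$ nodes with binary searches for counting, recursive descent from the identified canonical node for the rank-select query, and a second symmetric structure (you use $(\pi(x),x)$, the paper uses $(n-\pi_i, i)$; both are reflections of the inverse permutation) to reduce the value-select query to the position-select query. The only superfluous ingredient is the inner 1D segment tree at each outer node, which you introduce but never actually use — a sorted array with binary search already supports everything you need — and your bottom-up merging gives the sorted lists in $O(n\log n)$, a slight (unneeded) improvement on the paper's $O(n\log^2 n)$ sort-each-list bound.
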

\Cref{lemma:segment-tree} can be obtained using standard techniques in the data-structure literature. For completeness, we outline these techniques in \Cref{sec:2D-segment-tree}.

\subsection{Birg\'e decomposition: Coresets for monotone functions}
\label{sec:birge}

\review{At a first glance the permutation pattern counting and matching problem seem to have nothing to do with monotone functions (as long as neither the input function nor the permutation pattern are monotone). And yet, tools for approximating sums of discrete (non-negative) monotone functions play a central role in our algorithms.}

\review{For our purposes we will use the following result of Har-Peled, on coresets for monotone functions \cite{HarPeled2006}. Results of this type are often called Birg\'e decompositions in statistics and distribution testing \cite{Birge1987, DDSVV13, DDS14}, and have been studied in additional contexts, see, e.g., the notion of $k$-approximation functions studied by Halman et al.~\cite{Halman2014}.}

\begin{lemma}[Fast approximation of monotone sums \cite{HarPeled2006}]\label{lem:Birge}
\review{Let $0 < \eps < 1$ and $n \in \N$ be known parameters, and suppose we are given query access to a monotone sequence $x_1 \geq x_2 \geq \ldots \geq x_n \geq 0$ of real numbers. Then there exists a deterministic algorithm which returns a value $y \in (1 \pm \eps) \sum_{i=1}^{n} x_i$, with query complexity and running time $O(\eps^{-1} \log n )$.}

\review{Moreover, if the query access provides a multiplicative $1\pm \gamma$ approximation, then this algorithm returns a value $y \in (1\pm \gamma) (1 \pm \eps) \sum_{i = 1}^n x_i$; the algorithm is oblivious to the value of $\gamma$.}
\end{lemma}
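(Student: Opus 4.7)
The plan is to use a deterministic Birg\'e-type partition of $\{1, \ldots, n\}$ into $K = O(\eps^{-1} \log n)$ intervals of geometrically increasing length, query a single representative per interval, and output an appropriately rescaled weighted sum. Concretely, I would define breakpoints $i_0 = 1 < i_1 < \ldots < i_K = n+1$ with interval lengths $L_k := i_{k+1} - i_k = \max\{1, \lfloor \eps \cdot i_k \rfloor\}$, so that $i_{k+1} \leq (1+\eps) i_k$ once $i_k \geq \eps^{-1}$. This yields $K = O(\eps^{-1} \log n)$ intervals. The algorithm queries $x_{i_k}$ for each $k \in \{0, \ldots, K-1\}$ and outputs $T := \sum_{k=0}^{K-1} L_k \cdot x_{i_k}$ (mildly rescaled to center the approximation); the query complexity and running time are both $O(\eps^{-1} \log n)$.

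For correctness, monotonicity of $x$ gives $L_k x_{i_k} \geq \sum_{j \in I_k} x_j$, so $T \geq S$. For the reverse direction, the overestimate contributed by interval $I_k = [i_k, i_{k+1})$ is at most $L_k(x_{i_k} - x_{i_{k+1}})$. Summing and applying Abel summation, the total overestimate telescopes to $\sum_{k \geq 1} (L_k - L_{k-1}) x_{i_k}$ up to boundary terms. Intervals with $L_k = 1$ contribute no error (the estimate is exact), while for $L_k > 1$ the geometric spacing gives $L_k - L_{k-1} \leq \eps L_{k-1}$ (up to integer-rounding effects). Combined with the monotone bound $L_{k-1} x_{i_k} \leq \sum_{j \in I_{k-1}} x_j$, this shows $T - S \leq \eps S$. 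Hence $T \in [S, (1+\eps)S]$, and a trivial $1/(1+\eps/2)$ rescaling (or equivalently running with $\eps' = \eps/3$) yields a two-sided $(1 \pm \eps)$ approximation.

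For the robustness claim, if the oracle returns $\tilde{x}_{i_k} \in (1 \pm \gamma) x_{i_k}$, the output becomes $\tilde T = \sum_k L_k \tilde x_{i_k}$, and since each summand is independently scaled by a factor in $(1 \pm \gamma)$, we immediately get $\tilde T \in (1 \pm \gamma) T \subseteq (1 \pm \gamma)(1 \pm \eps) S$. The algorithm is oblivious to $\gamma$ because it simply sums whatever the oracle returns, with no parameter choices depending on $\gamma$. The main technical obstacle I expect is the clean handling of integer rounding, namely verifying that the unit-length ``head'' intervals (where the estimate is exact) transition smoothly into the multiplicatively-spaced ``tail'' intervals without breaking the telescoping bound; this is routine but needs a small amount of care.
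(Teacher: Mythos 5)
The paper does not actually prove the first part of this lemma---it cites it directly from Har-Peled \cite{HarPeled2006}, and proves only the second (robustness) part via the observation that the estimate is a data-obliviously weighted sum $\sum_j w_j x_{i_j}$, so replacing each $x_{i_j}$ by a $(1\pm\gamma)$-approximation perturbs the sum by at most a $(1\pm\gamma)$ factor. Your argument for the robustness claim matches the paper's exactly and is correct. For the first part, you are instead re-deriving the Har-Peled/Birg\'{e} bound from scratch, which is a legitimate alternative route, and your decomposition and outline (geometric breakpoints, telescoping via Abel summation, separating unit-length head from geometric tail) are the right ones.

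However, the step you defer as ``routine integer rounding'' hides a genuine gap. Your bound $L_k - L_{k-1} \leq \eps L_{k-1}$ fails \emph{per-term} throughout the regime $2 \le L_{k-1} < 1/\eps$: there the right-hand side is $< 1$, but $L_k - L_{k-1}$ is still $1$ at each transition where the interval length jumps. So the inequality $(L_k - L_{k-1})x_{i_k} \le \eps L_{k-1} x_{i_k} \le \eps \sum_{j \in I_{k-1}} x_j$ does not hold term-by-term in that regime, and summing the naive surrogate $(L_k - L_{k-1})x_{i_k} \le x_{i_k}$ at each transition, together with the averaging bound $x_{i_k} \leq S/i_k$ and $i_k \approx L/\eps$, only gives $\sum_L \eps S / L = \Theta(\eps \log(1/\eps)) \cdot S$, an extra $\log(1/\eps)$ factor. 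The missing observation is aggregate rather than per-term: the positions $r_L$ at which the interval length increases from $L$ to $L+1$ are spaced $\Theta(1/\eps)$ apart (each length $L$ occupies $\Theta(1/(\eps L))$ intervals, covering $\Theta(1/\eps)$ positions), so by monotonicity $x_{r_L} \leq \eps\sum_{i \in (r_{L-1}, r_L]} x_i$ and the transition contributions sum to $O(\eps) S$ without any extra log factor. You also need to drop the unit-length intervals from the overestimate sum before telescoping (they contribute zero exactly, but the surrogate $L_k(x_{i_k}-x_{i_{k+1}})$ is nonzero), and then separately bound the new boundary term $L_{k_0} x_{i_{k_0}} \leq \eps\, i_{k_0} x_{i_{k_0}} \leq \eps S$. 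With these additions the proof goes through; without them, the ``routine'' dismissal is not justified by what you wrote.
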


\review{We note that the statement proved in \cite{HarPeled2006} is the one appearing in the first paragraph of the above lemma statement. The result in the second paragraph is an immediate corollary. More precisely, the algorithm from \cite{HarPeled2006} returns a weighted sum $\sum_{j=1}^{t} w_j x_{i_j}$ which is a $(1 \pm \eps)$-approximation to $\sum_{i=1}^{n} x_i$, where $t = O(\log n / \eps)$ and both the weights $w_j$ and the indices $i_j$ are chosen deterministically and do not depend on values in the sequence. Replacing each summand $x_{i_j}$ with a $(1 \pm \gamma)$-approximation of it can thus only change the value of the weighted sum by up to a multiplicative $1 \pm \gamma$ factor.}

\section{Almost All $4$-Patterns via Birg\'e Decomposition}
\label{sec:almost-all-4-length}
\subsection{Symmetry Reductions and Known Results}
It is folklore that patterns form certain equivalency groups. For instance, counting $1234$ copies is equivalent to counting $4321$: the number of $4321$ copies in a sequence $\pi$ is the same as that of $1234$ copies in the reverse of $\pi$ (i.e., in $\pi$ reflected along a vertical axis). Similarly, counting $2134$ copies is equivalent to counting $3421$, as the number of $3421$ copies in a sequence $\pi$ is the same as the number of $2134$ copies in the sequence in which the $i$-th element equals $n+1 - \pi_i$ (i.e., in the sequence reflected along a horizontal axis).

Moreover, it has been established that copies of some of these $8$ patterns can be counted in near-linear time, even exactly.
\begin{theorem}[\cite{EvenZoharLeng2021}]
    The occurrences of copies $1234$, $1243$, and $2143$ in a given permutation can be counted in time $\tO(n)$.
\end{theorem}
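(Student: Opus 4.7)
My plan is to reduce the pattern-counting task to $\tO(n)$ weighted range-counting queries in the plane, all supported by the two-dimensional segment tree of \Cref{lemma:segment-tree}. The underlying template is: fix one of the four roles at each of the $n$ candidate positions, then express the number of copies with that role fixed as a product of counts of simpler substructures in disjoint axis-aligned regions (``quadrants'') determined by the fixed position. Auxiliary per-point quantities are computed once in an $\tO(n)$ preprocessing pass, and the outer loop issues $n$ queries, so everything fits into an $\tO(n)$ budget.

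For $1234$, I would iterate over the index $i$ playing the role of ``2''. The number of valid ``1''s is $L(i):=|\{j<i:\pi_j<\pi_i\}|$, obtained by a single range query. The number of valid ``$34$'' pairs is the count of $12$-copies lying strictly inside the upper-right quadrant of $(i,\pi_i)$. Precomputing $U(k):=|\{\ell>k:\pi_\ell>\pi_k\}|$ for every index $k$ (an $\tO(n)$ task in itself) turns this into a \emph{weighted} 2D range sum $\sum_{k>i,\,\pi_k>\pi_i} U(k)$; the identity is clean because any $\ell>k$ with $\pi_\ell>\pi_k>\pi_i$ automatically sits in the same quadrant as $k$. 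The total $1234$-count is $\sum_i L(i)$ times this weighted sum, computable within the budget by attaching weights $U(k)$ to the points and invoking \Cref{lemma:segment-tree}.

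For $1243$, I would use inclusion--exclusion. Let $T$ be the tree-structured count of tuples $i_1<i_2<i_3<i_4$ satisfying $\pi_{i_1}<\pi_{i_2}$ and $\pi_{i_2}<\pi_{i_3},\pi_{i_4}$, with \emph{no} order constraint between $\pi_{i_3}$ and $\pi_{i_4}$. Fixing $i_2$, this count factors as $T=\sum_{i_2} L(i_2)\cdot\binom{N(i_2)}{2}$, where $N(i_2)$ is the size of the upper-right quadrant of $(i_2,\pi_{i_2})$; both $L(i_2)$ and $N(i_2)$ are direct range queries, so $T$ is computable in $\tO(n)$. Since every tuple counted by $T$ has either $\pi_{i_3}<\pi_{i_4}$ (a $1234$-copy) or $\pi_{i_3}>\pi_{i_4}$ (a $1243$-copy), we get $T=|1234|+|1243|$, and subtracting the previously-computed $|1234|$ yields $|1243|$.

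The main obstacle is $2143$, because its constraint graph contains two nested inversions coupled by the condition $\pi_{i_1}<\pi_{i_4}$, and no single pivot decouples the remaining roles into independent quadrant queries. My plan is to express $|2143|$ as an integer linear combination of a handful of tree-structured counts of the same factorisable form as $T$ above (hence computable in $\tO(n)$) together with already-computed pattern counts. A natural starting point is to fix the ``middle'' pair $(i_2,i_3)$ with $i_2<i_3$ and $\pi_{i_2}<\pi_{i_3}$ and then count inversions ending at $i_2$ on the left and inversions starting at $i_3$ on the right; this tree overcounts $2143$ by patterns such as $3142$, $4132$, $3241$, and $4231$, which must be canceled by additional tree counts (some obtainable from the earlier patterns via the symmetries invoked at the start of this section). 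The delicate step, which I expect to be the hardest part, is verifying that the signed combination recovers \emph{exactly} $|2143|$; I would do this by a systematic case analysis over the $4!=24$ possible length-$4$ patterns, checking that each contributes with the intended coefficient to the final sum.
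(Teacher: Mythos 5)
The statement you are proving is not actually proved in the paper; it is cited verbatim from Even-Zohar and Leng~\cite{EvenZoharLeng2021}, who establish it via corner trees and linear combinations of corner-tree counts. So there is no in-paper argument to compare against, and your proposal is a from-scratch reconstruction.

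Your treatments of $1234$ and $1243$ are essentially correct. Two small remarks: first, the $1234$ count should read $\sum_i L(i)\cdot W(i)$ where $W(i)=\sum_{k>i,\,\pi_k>\pi_i} U(k)$, not ``$\sum_i L(i)$ times this weighted sum'' (a typo, but worth fixing); and second, \Cref{lemma:segment-tree} as stated only supports unweighted range \emph{counting}, not weighted range \emph{sums}, so you would need to augment the structure with partial sums of the $U$-weights (or do a standard one-dimensional sweep from right to left, inserting $U(k)$ at coordinate $\pi_k$ into a Fenwick tree). This is routine, but the lemma you invoke does not directly give it to you. The inclusion--exclusion $T = |1234| + |1243|$ is correct as argued.

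The $2143$ case is a genuine gap, and it is larger than ``a delicate verification left to do.'' Concretely, the tree count you propose --- sum over pairs $(i_2,i_3)$ with $i_2<i_3$, $\pi_{i_2}<\pi_{i_3}$, weighted by the number of inversions landing at $i_2$ from the left and the number leaving $i_3$ to the right --- counts exactly the tuples $i_1<i_2<i_3<i_4$ with $\pi_{i_1}>\pi_{i_2}$, $\pi_{i_2}<\pi_{i_3}$, $\pi_{i_4}<\pi_{i_3}$. Enumerating, this count decomposes as $|2143| + |3142| + |3241| + |4132| + |4231|$, so you have correctly identified the four overcounted patterns. The problem is that \emph{all four} of these ($3142\sim 2413$; $3241$ and $4132$ both $\sim 1423$; $4231\sim 1324$) lie in the class of ``hard'' $4$-patterns, precisely the ones this paper works hard to approximate because no $\tO(n)$ exact algorithm is known for them. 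You therefore cannot cancel them by subtracting ``previously computed'' or symmetry-derived quantities; the entire content of the Even-Zohar--Leng result for $2143$ is that there exists a nontrivial integer combination of corner-tree counts in which all the hard patterns cancel, and your sketch does not yet exhibit such a combination. Unless you produce additional tree counts and verify the cancellation, the argument for $2143$ does not go through.

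Finally, a running-time note on $2143$: as written, ``fix the middle pair $(i_2,i_3)$'' suggests an $\Theta(n^2)$ outer loop. To stay in $\tO(n)$ you would need to reorganize the sum as a single sweep with precomputed per-index weights, in the same spirit as your $1234$ argument; worth stating explicitly, since otherwise even a correct cancellation would exceed the budget.
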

In this work, our goal is to show that the remaining $4$-length copies can be \emph{approximately} counted also in near-linear time. 
It turns out that the remaining non-symmetric ones, i.e., $1324$, $1342$, $1423$, $1432$, and $2413$, can be split into two categories based on their properties.
In this section, we describe an idea that enables us to approximately count the number of copies of $1324$, $1342$, $1423$, and $1432$. To handle $2413$, we in detail introduce an additional idea in \Cref{sec:2413}.

\subsection{Approximately Counting $1324$ Copies}
In this section, we describe how to approximately count $1324$ copies and then extend that to $1342, 1423$, and $1432$; see \Cref{sec:approx-counting-other-copies}.
Fix a permutation $\pi$, let $\cC$ be all $1324$ copies in $\pi$. The underlying idea of our approach is to partition $\cC$ into classes so that it is relatively easy to approximate the size of each class.

\subsubsection{Fixing ``$3$'' in $1324$ copies}
As the first step, our approach partitions $\cC$ with respect to the value of the ``$3$''. In particular, there are $n$ such classes $\cC_1, \cC_2, \ldots, \cC_n$ such that $\cC_i$ is the subset of $\cC$ with their $3$-value being equal $\pi_i$. 
Clearly,
\[
    |\cC| = \sum_{i = 1}^n |\cC_i|.
\]
Our approach approximates each $|\cC_i|$ independently.
The main technical contribution of our work is showing that $\cC_i$ can be further partitioned into classes that exhibit certain monotonicity in their size. Our approach employs the Birg\'e decomposition (\Cref{lem:Birge}) to leverage that property and approximate $|\cC_i|$ in only $\poly \log n$ time. We now describe the details of this idea.

\subsubsection{Monotonicity with respect to ``$4$'' within $\cC_i$}
Consider the example in which $\pi = 136548279$, as illustrated in \Cref{fig:fix-3}.
Fix $6$ to be ``3'' in a $1324$ copy. 
In that case, the candidates for ``4'' are $8, 7$, and $9$. Since $7$ and $9$ appear in $\pi$ after $8$, any $1324$ copy of the form $\pi_x 6\pi_y 8$ also yields $1324$ copies $\pi_x 6\pi_y 7$ and $\pi_x 6\pi_y 9$. We formalize this observation as follows.

\begin{lemma}
\label{lemma:1324-monotone-wrt-4}
    Let $\cC_i$ be the set of all $1324$ copies of a permutation $\pi$ such that ``3'' equals $\pi_i$. 
    Let $\cC_{i, j}$ be the set of all $1324$ copies such that $\pi_x \pi_i \pi_y \pi_j \in \cC_{i}$. Then, $|\cC_{i, j}| \le |\cC_{i, j'}|$ for each $i < j < j'$ with $\pi_i < \pi_j, \pi_{j'}$.
\end{lemma}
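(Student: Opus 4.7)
The plan is to prove the inequality by exhibiting an explicit, natural injection $\phi : \cC_{i,j} \to \cC_{i,j'}$ that simply replaces the last coordinate (the position of ``$4$'') by $j'$. Concretely, given a copy $\pi_x \pi_i \pi_y \pi_j \in \cC_{i,j}$, I would define $\phi(\pi_x \pi_i \pi_y \pi_j) = \pi_x \pi_i \pi_y \pi_{j'}$, i.e., I keep the positions $x$ of the ``$1$'' and $y$ of the ``$2$'' and the fixed position $i$ of the ``$3$'' unchanged, and swap $j$ for $j'$ in the position of the ``$4$''.

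The first step is to verify that $\phi$ is well-defined, meaning that the tuple $(x, i, y, j')$ is indeed a $1324$-copy of $\pi$. The positional condition to check is $x < i < y < j'$: the first three inequalities are inherited from membership in $\cC_{i,j}$, and $y < j'$ follows from the hypothesis $j < j'$ together with $y < j$. The order-type condition to check is $\pi_x < \pi_y < \pi_i < \pi_{j'}$: here $\pi_x < \pi_y < \pi_i$ is again inherited from $\cC_{i,j}$, and the remaining inequality $\pi_i < \pi_{j'}$ is exactly the hypothesis of the lemma. Hence $\phi(\pi_x \pi_i \pi_y \pi_j) \in \cC_{i,j'}$.

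The second step is injectivity, which is immediate: $\phi$ only alters the fourth coordinate of a copy, and that coordinate is fixed to a single value ($j$ on the source side and $j'$ on the target side), so distinct copies in $\cC_{i,j}$ (which differ in $x$ or $y$) map to distinct copies in $\cC_{i,j'}$. Combining these two steps yields $|\cC_{i,j}| \leq |\cC_{i,j'}|$.

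I do not anticipate a meaningful obstacle here: the statement is set up so that enlarging the position of the ``$4$'' (while still keeping its value above $\pi_i$, as guaranteed by $\pi_i < \pi_{j'}$) can only preserve, never destroy, the validity of an existing $1324$-copy; the construction above makes this observation formal.
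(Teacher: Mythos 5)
Your proposal is correct and takes essentially the same approach as the paper: the paper's proof also shows that replacing $j$ with $j'$ turns any copy $\pi_x\pi_i\pi_y\pi_j\in\cC_{i,j}$ into a valid copy $\pi_x\pi_i\pi_y\pi_{j'}\in\cC_{i,j'}$, which is exactly your injection; you simply spell out the injectivity step that the paper leaves implicit.
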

\begin{proof}
    Let $\pi_x \pi_i \pi_y \pi_j \in \cC_{i, j}$. By the fact that we consider $1324$ copies, it implies $\pi_x < \pi_i$, $\pi_y < \pi_i$ and $y < j$. Since $\pi_i < \pi_j$, $\pi_i < \pi_{j'}$ and $j < j'$, we have that $\pi_x \pi_i \pi_y \pi_{j'} \in \cC_{i, j'}$.
\end{proof}
\Cref{lemma:1324-monotone-wrt-4} essentially states that $\cC_i$ can be partitioned into subsets $\cC_{i, j}$ whose sizes are non-decreasing when ordered with respect to $j$ -- the claim excludes $\cC_{i, j}$ for which $\pi_j \le \pi_i$; for such cases $\cC_{i, j} = \emptyset$ anyway.
By \Cref{lem:Birge}, this further implies that to approximate $|\cC_i|$, it suffices to (approximately) compute $|\cC_{i, j}|$ for only $O(\log n / \eps)$ different values of $j$ with $\pi_j > \pi_i$ and $j > i$.
Our next goal is to discuss how to compute an approximation of $|\cC_{i, j}|$.

\subsubsection{Monotonicity with respect to ``$2$'' within $\cC_{i, j}$}
Recall that $\cC_{i, j}$ corresponds to all $1324$ copies with ``3'' being $\pi_i$ and ``4'' being $\pi_j$. 
As before, consider the example in which $\pi = 136548279$, as illustrated in \Cref{fig:fix-3-and-4}.
Fix $6$ to be ``3'' and $9$ to be ``4'' in a $1324$ copy. 
In that case, the candidates for ``2'' are $2, 4$, and $5$. 
Since $4 > 2$ and $5 > 2$, any $1324$ copy of the form $\pi_x 6 2 9$ also yields $1324$ copies $\pi_x 6 4 9$ and $\pi_x 6 5 9$. We formalize this observation as follows.
\begin{figure}
    \centering
    \includegraphics[width=0.4\linewidth]{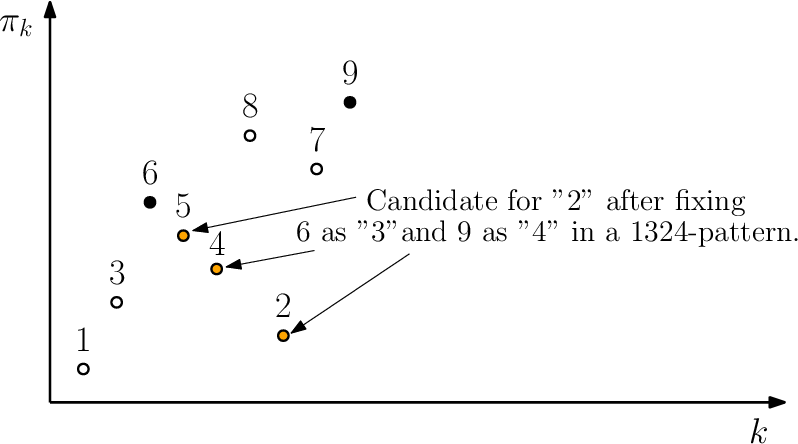}
    \caption{The illustration corresponds to permutation $\pi = 136548279$, depicted in a plane at points $(i, \pi_i)$.}
    \label{fig:fix-3-and-4}
\end{figure}
\begin{lemma}
\label{lemma:1324-monotone-wrt-4-and-2}
    Let $\cC_{i.j}$ be the set of all $1324$ copies of a permutation $\pi$ such that ``3'' equals $\pi_i$ and ``4'' equals $\pi_j$. 
    Let $\cC_{i, j, k}$ be the set of all $1324$ copies such that $\pi_x \pi_i \pi_k \pi_j \in \cC_{i, j}$.
    Then, $|\cC_{i, j, k}| \le |\cC_{i, j, k'}|$ for each $\pi_k < \pi_{k'}$ with $i < k < j$, $i < k' < j$ and $\pi_k, \pi_{k'} < \pi_i$.
\end{lemma}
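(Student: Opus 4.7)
The plan is to prove monotonicity by exhibiting an explicit injection from $\cC_{i,j,k}$ into $\cC_{i,j,k'}$, analogous in spirit to the proof of \Cref{lemma:1324-monotone-wrt-4}. The map will fix the positions of ``3'', ``4'', and ``1'' in a $1324$ copy, and swap the role of ``2'' from $\pi_k$ to $\pi_{k'}$; since the only varying coordinate in $\cC_{i,j,k}$ is the position $x$ of ``1'', injectivity will be immediate.

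Concretely, define $\phi \colon \cC_{i,j,k} \to \cC_{i,j,k'}$ by sending $\pi_x \pi_i \pi_k \pi_j \mapsto \pi_x \pi_i \pi_{k'} \pi_j$. The bulk of the argument is simply verifying that the image is indeed a valid $1324$ copy, which amounts to checking that (i) the indices satisfy $x < i < k' < j$, and (ii) the values satisfy $\pi_x < \pi_{k'} < \pi_i < \pi_j$. The index condition is immediate: $x < i$ because $\pi_x \pi_i \pi_k \pi_j$ is a $1324$ copy, and $i < k' < j$ is given in the hypothesis. For the value condition, $\pi_i < \pi_j$ holds because $\pi_x \pi_i \pi_k \pi_j$ is a $1324$ copy, $\pi_{k'} < \pi_i$ is given, and the key point $\pi_x < \pi_{k'}$ follows from chaining $\pi_x < \pi_k$ (from the original copy) with the hypothesis $\pi_k < \pi_{k'}$.

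Injectivity of $\phi$ is trivial since $\phi$ leaves the $x$-coordinate untouched, and distinct elements of $\cC_{i,j,k}$ are distinguished precisely by their choice of $x$ (the other three positions $i,k,j$ are fixed within $\cC_{i,j,k}$). The existence of the injection yields $|\cC_{i,j,k}| \le |\cC_{i,j,k'}|$.

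There is no substantive obstacle here: the statement is a direct structural observation about when the pattern conditions transfer under increasing the value of the ``2''-entry, and the proof mirrors the fixing-``4'' argument from \Cref{lemma:1324-monotone-wrt-4}. The only thing to be careful about is to state the hypotheses completely (namely $i < k, k' < j$ and $\pi_k, \pi_{k'} < \pi_i$), so that the transferred tuple lands in the intended class and still encodes a genuine $1324$ copy. Combined with \Cref{lemma:1324-monotone-wrt-4}, this monotonicity will allow a second application of \Cref{lem:Birge} to approximate $|\cC_{i,j}|$ in polylogarithmic time, after which the approximation of $|\cC_{i,j,k}|$ itself reduces to counting ``1''-candidates, i.e., elements $\pi_x$ with $x < i$ and $\pi_x < \pi_k$, which is a standard two-dimensional range count.
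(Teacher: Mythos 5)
Your proof is correct and matches the paper's argument: both exhibit the map $\pi_x \pi_i \pi_k \pi_j \mapsto \pi_x \pi_i \pi_{k'} \pi_j$, verify it lands in $\cC_{i,j,k'}$ by checking the index and value inequalities (using $\pi_x < \pi_k < \pi_{k'} < \pi_i$ and $i < k' < j$), and conclude by injectivity in $x$. You spell out the injection and the verification slightly more explicitly than the paper, but the reasoning is identical.
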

\begin{proof}
    Let $\pi_x \pi_i \pi_k\pi_j \in \cC_{i, j, k}$. 
    By the fact that we consider $1324$ copies, it implies $\pi_x < \pi_k$ and $x < i < k$. Since $\pi_k < \pi_{k'} < \pi_i$ and $i < k' < j$, we have that $\pi_x \pi_i \pi_{k'} \pi_j \in \cC_{i, j, k'}$.
\end{proof}
\Cref{lemma:1324-monotone-wrt-4-and-2} states that $\cC_{i, j}$ can be partitioned into subsets $\cC_{i, j, k}$ whose sizes are non-decreasing when ordered with respect to $\pi_k$.
By \Cref{lem:Birge}, this further implies that to approximate $|\cC_{i,j}|$, it suffices to (approximately) compute $|\cC_{i, j, k}|$ for only $O(\log n / \eps)$ different values of $k$ with $i < k < j$ and $\pi_k < \pi_i$.
Coupling this with \Cref{lemma:1324-monotone-wrt-4}, $|\cC_{i}|$ can be approximate by computing $|\cC_{i, j, k}|$ for only $\poly(\log{(n)} / \eps)$ different pairs of $j$ and $k$.

\subsubsection{Algorithm}
As a reminder, $\cC_{i, j, k}$ is the set of all $1324$ copies such that ``3'' equals $\pi_i$, ``4'' equals $\pi_j$, and ``2'' equals $\pi_k$. 
$|\cC_{i, j, k}|$ is computed by counting the number of points $(\ell, \pi_\ell)$ such that $1 \le \ell \le i - 1$ and $1 \le \pi_{\ell} < \pi_k - 1$. This can be done in $\poly \log n$ time using sparse segment trees, as provided by \Cref{lemma:segment-tree}. This now enables us to provide the pseudo-code of our approach (\Cref{alg:approximate-1324}).

\begin{algorithm}
\begin{algorithmic}[1]
\medskip 
\Statex \textbf{Input:} A permutation $\pi$; an approximation parameter $\eps > 0$
\Statex \textbf{Output:} A $1+\eps$ approximation of the number of $1324$ copies in $\pi$
\medskip
\Statex \hrule 

    \State Build a sparse segment tree $S$ on $(i, \pi_i)$ for all $i = 1 \ldots n$
    \For{$i = 1 \ldots n$}\Comment{Fix ``3''}
        \State Let $J$ be the set of candidates for ``4'' in 1324 copies given that ``3'' is fixed to $\pi_i$.
        \State Let $J'$ be the subset of $J$ queried by the algorithm in \Cref{lem:Birge}.
        \For{$j \in J'$}\Comment{Fix ``4''}
            \State Let $K$ be the set of candidates for ``2'' in 1324 copies given that ``3'' is fixed to $\pi_i$ and ``4'' is fixed to $\pi_j$.
            \State Let $K'$ be the subset of $K$ queried by the algorithm in \Cref{lem:Birge}.
            \For{$k \in K'$}\Comment{Fix ``2''}
                \State Let $c_{i, j, k}$ be the number of points $(x, \pi_x)$ in $S$ such that $x \le i - 1$ and $\pi_x \le \pi_k - 1$.\label{line:compute-c'}
            \EndFor
            \State Use the algorithm from \Cref{lem:Birge} to output a $1\pm \eps/3$ approximation of $|\cC_{i,j}|$ by using $c_{i, j, k}$ as the query points. Denote that approximation by $c_{i, j}$.
        \EndFor
        \State Use the algorithm from \Cref{lem:Birge} to output a $1\pm \eps/3$ approximation of $|\cC_{i}|$ by using $c_{i, j}$ as the query points. Denote that approximation by $c_{i}$.
    \EndFor
    \State \Return $\sum_{i = 1}^n c_i$
    \end{algorithmic}
\caption{\textsc{Approximate-1324-Copies}}
\label{alg:approximate-1324}
\end{algorithm}

We are now ready to show the following.
\begin{theorem}
    Given a permutation $\pi$ and an approximation parameter $\eps \in (0, 1)$, \Cref{alg:approximate-1324} computes a $1 \pm \eps$ approximation of the number of $1324$ copies in $\pi$ in time $O(n \cdot \poly(\log(n) / \eps))$.
\end{theorem}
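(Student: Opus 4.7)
The plan is to establish correctness and running time separately, exploiting the two monotonicity lemmas as the structural backbone and Lemma~\ref{lem:Birge} (in both its exact- and approximate-query forms) as the algorithmic engine.

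For correctness, I would work bottom-up through the three nested levels of the algorithm. At the innermost level, note that $c_{i,j,k}$ is computed exactly, since counting points $(x,\pi_x)$ with $x \le i-1$ and $\pi_x \le \pi_k - 1$ is a standard two-dimensional orthogonal range count, supported exactly by the segment tree of Lemma~\ref{lemma:segment-tree}. Next, for each fixed $(i,j)$, Lemma~\ref{lemma:1324-monotone-wrt-4-and-2} tells us the sequence $(|\cC_{i,j,k}|)_{k}$ is monotone when the valid $k$'s are sorted by $\pi_k$; so invoking Lemma~\ref{lem:Birge} (first part) on exact queries yields $c_{i,j} \in (1\pm \eps/3)\,|\cC_{i,j}|$. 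Crucially, accessing the $\ell$-th element of this sorted sequence is done via the segment-tree operations of Lemma~\ref{lemma:segment-tree} (counting valid $k$ to know the sequence length, and fetching the $\ell$-th value or position to locate the query point). At the outer Birgé step, Lemma~\ref{lemma:1324-monotone-wrt-4} guarantees monotonicity of $(|\cC_{i,j}|)_j$ in $j$ (restricted to valid $j$'s), but now the queries are only $(1\pm \eps/3)$-approximate; here I would invoke the second part of Lemma~\ref{lem:Birge}, which yields $c_i \in (1\pm \eps/3)^2\,|\cC_i|$.

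Summing, since every $c_i$ is a $(1 \pm \eps/3)^2$-multiplicative approximation of a nonnegative quantity, the sum $\sum_i c_i$ is a $(1\pm \eps/3)^2$-approximation of $\sum_i |\cC_i| = |\cC|$. For $\eps \in (0,1)$, $(1+\eps/3)^2 \le 1+\eps$ and $(1-\eps/3)^2 \ge 1-\eps$, giving the required $(1\pm\eps)$ bound. I would make explicit that the approximation error does not accumulate with $n$ because it is a \emph{multiplicative} error applied uniformly to each $|\cC_i|$ before summation; this is the standard observation that such errors survive positive linear combinations.

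For the running time, the outer loop runs $n$ times. For each $i$, the outer Birgé step makes $|J'| = O(\eps^{-1}\log n)$ queries, and for each such $j$ the inner Birgé step makes $|K'| = O(\eps^{-1}\log n)$ queries. Each innermost query is one range count on the segment tree, costing $O(\log^2 n)$. Locating the $\ell$-th valid $j$ or $k$ (to serve the Birgé oracle) also costs $O(\log^2 n)$ by Lemma~\ref{lemma:segment-tree}. Adding the $O(n\log^2 n)$ preprocessing, the total is $O(n \cdot \eps^{-2}\log^4 n)$, which is of the form $O(n \cdot \poly(\log(n)/\eps))$ as claimed.

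I do not anticipate a hard step in this proof; the main subtlety is ensuring that the two Birgé layers compose correctly, which is precisely handled by the second paragraph of Lemma~\ref{lem:Birge} (Birgé is oblivious to the approximation quality $\gamma$ of its queries, and multiplies its own error factor by $1\pm\gamma$). I would also state carefully that the Birgé oracle can be implemented on the sparse sets $J$ and $K$ via Lemma~\ref{lemma:segment-tree}, since otherwise merely listing $J$ or $K$ explicitly would cost $\Omega(n)$ per fixed $i$ and ruin the near-linear running time.
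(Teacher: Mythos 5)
Your proposal is correct and follows essentially the same approach as the paper: exact innermost queries $c_{i,j,k}$ via the segment tree, two nested Birg\'e layers contributing $(1\pm\eps/3)$ multiplicative error each (with the outer layer using the approximate-query form of \Cref{lem:Birge}), a final summation that preserves multiplicative error, and a running-time count of $n \cdot O(\eps^{-1}\log n)^2 \cdot O(\log^2 n)$. The only additions beyond the paper's write-up are explicit verifications (e.g., $(1+\eps/3)^2 \le 1+\eps$ for $\eps<1$, and the remark about implementing the Birg\'e oracle on sparse index sets) which are natural elaborations rather than a different argument.
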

\begin{proof}
    We analyze separately the running time and the approximation guarantee.
    \paragraph{Running time.}
    There are $n$ options to choose $i$. By \Cref{lem:Birge}, $|J'|, |K'| \in O(\log(n) / \eps)$. 
    Note that the sets $J$ and $K$ need not be constructed explicitly. It suffices to, for a given $t$, be able to access the $t$-th element of those sets, which can be done in $O(\log^2 n)$ time \review{using query access to locations and values in the segment tree $S$; the necessary operations are guaranteed by \Cref{lemma:segment-tree}.}
    Finally, \Cref{line:compute-c'} of \Cref{alg:approximate-1324} can be executed in $O(\log^2 n)$ time; see \Cref{lemma:segment-tree}.

    Therefore, the overall running time is $O(n \cdot \poly(\log(n) / \eps))$.

    \paragraph{Approximation guarantee.}
        Let $c_{i, j,k}$, $c_{i, j}$ and $c_i$ be as defined in \Cref{alg:approximate-1324}. Observe that $c_{i, j, k} = |\cC_{i, j, k}|$.
        By the guarantee of the algorithm in \Cref{lem:Birge}, we have $c_{i, j} \in (1 \pm \eps/3) \cdot |\cC_{i, j}|$.

        Since $c_{i, j}$ are used to obtain an approximation $c_i$ of $|\cC_{i}|$, by \Cref{lem:Birge} we have that $c_i \in (1 \pm \eps/3) \cdot (1\pm \eps/3) \cdot |\cC_i| \in (1 \pm \eps) \cdot |\cC_i|$, for $\eps \in (0, 1)$.
\end{proof}

\subsection{Approximately Counting $1342, 1423$ and $1432$ Copies}
\label{sec:approx-counting-other-copies}
\review{Remark: For the sake of readability, below we use ``increasing'' in place of ``non-decreasing'' and, likewise, ``decreasing'' in place of ``non-increasing''.}
\\
Our algorithm to approximately count $1324$ copies can be described as follows: Fix ``3''; then, ``4'' counts are position-increasing; then, ``2'' counts are value-increasing. 
In the same way can be described the algorithms for approximately counting copies of $1342, 1423$, and $1432$. 
We provide those descriptions below, and the formal proofs follow exactly the same lines as for $1324$ copies.
\begin{enumerate}
    \item[1342:] Fix ``3''; then, ``4'' counts are position-decreasing; then, ``2'' counts are value-increasing.
    \item[1423:] Fix ``2''; then, ``3'' counts are value-decreasing; then, ``4'' counts are position-increasing.
    \item[1432:] Fix ``3''; then, ``2'' counts are value-increasing; then, ``4'' counts are position-increasing.
\end{enumerate}
\section{Remaining $4$-Patterns via Birg\'e and Separators}
\label{sec:2413}
The main idea behind approximate counting of $1324$ copies was to fix one of the positions and then show that the counts are monotone with respect to two other positions, e.g., fix ``3'', then the counts are monotone with respect to the position of ``4''; after fixing ``3'' and ``4'', the counts are monotone with respect to the value of ``2''. 
Unfortunately, copies of $2413$ do not seem to exhibit such a property.
To alleviate that, we observe that there is an additional way of partitioning the copies of $2413$.

To illustrate this partitioning approach, assume that we fix ``4''. Then, we would like to exhibit the monotonicity of the copy counts with respect to the value or position of at least one among ``2'', ``1'', and ``3''. However, this is not the case. 
Intuitively, the challenge here is that the tools we developed so far do not enable us to approximate the number of copies of $12$ in a given permutation in $\poly(\log n, 1/\eps)$ time.
To see how it affects counting $2413$ copies, for instance, after fixing a ``4'', no special structure is imposed on the candidates of ``1'' and ``3''!
Indeed, even though both ``1'' and ``3'' have to be to the right and below the fixed ``4'', our algorithm still needs to (approximately) count the number of monotone pairs in a given subarray.

What if we are concerned \emph{only} with the number of $2413$ copies in which the position of ``1'' is less than $s$, while the position of ``3'' is greater than $s$? This situation is illustrated in \Cref{fig:separator-example}, and $s$ should be thought of as ``separator''.
After imposing this additional structure between ``1'' and ``3'', the counts become monotone with respect to the value of ``3''. 
Hence, we can again apply the Birg\'e theorem for approximating the counts.

It remains to show that there exists a small number of separators that enable counting all $2413$ copies. We dive into those details in the rest of this section, describing how to partition ``3'' and ``4'' into certain buckets that allow for the described $2413$-copy partitioning.
Ultimately, this section leads to the following result:
\begin{theorem}[Approximating $2413$ copies]\label{thm:2413-main}
There exists a deterministic algorithm for approximating the number of $2413$ copies in a permutation of length $n$ to within a multiplicative factor of $1+\eps$, with running time of $n \cdot \poly(\log n, 1/\eps)$.
\end{theorem}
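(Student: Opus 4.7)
The plan is to extend the separator idea sketched in \Cref{sec:2413} to handle all $2413$-copies by partitioning the positions of ``$3$'' and ``$4$'' into buckets derived from a balanced segment tree $T$ on $[1,n]$. Each internal node $v$ of $T$, with midpoint $s_v$, will serve as a separator between the positions of ``$1$'' and ``$3$''. To enumerate copies without double counting, I assign each copy $(a,b,c,d)$ uniquely to the LCA of $c$ and $d$ in $T$, which forces $c\in L_v$, $d\in R_v$, so $s_v$ lies strictly between them. The total count is then $\sum_v N_v$.

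For a fixed $v$, $N_v$ counts tuples with $c\in L_v$, $d\in R_v$, $a<b<c\le s_v$, and value pattern $\pi_c<\pi_a<\pi_d<\pi_b$. Conditioned also on the position $b$ of ``$4$'', the count as a function of the value $w=\pi_d$ of ``$3$'', iterated in increasing order over its valid range, is monotone non-decreasing, because the only $w$-dependent constraint is $\pi_a<w$, which loosens as $w$ grows; so \Cref{lem:Birge} yields a fast approximate count per $(v,b)$. A naive iteration of $b\in[1,s_v-1]$ summed over all $v$ is $\Theta(n^2)$, so I further decompose the candidate positions of ``$4$'' via canonical sub-ranges of $T$, processing all $b$ in a single canonical sub-range jointly. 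This gives $O(n\log^2 n)$ triples $(v,u)$ in which $u$ is a canonical position-range for ``$4$''.

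To make the outer Birg\'e work within a triple $(v,u)$ I must also handle the constraint $\pi_b>w$, which tightens as $w$ grows and can cancel the loosening of $\pi_a<w$. I therefore refine each bucket by a second canonical decomposition on the \emph{value} of ``$4$'' via a value segment tree, so that in the combined position/value bucket, $\pi_b$ is confined to a known interval $[y_{\min},y_{\max}]$. For each outer Birg\'e step: if $w<y_{\min}$ the constraint $\pi_b>w$ is automatic and monotonicity is preserved; if $w>y_{\max}$ the bucket contributes zero; the boundary regime $w\in[y_{\min},y_{\max}]$ is handled by one more level of canonical decomposition on $\pi_d$ co-nested with the $\pi_b$-value bucket, adding at most $O(\log n)$ sub-buckets per triple. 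Each innermost Birg\'e query reduces to a two-dimensional range count on ``$2$'' positions, handled in $\poly(\log n)$ time via \Cref{lemma:segment-tree}. Three nested Birg\'e layers (over $\pi_d$; over $\pi_c$ by monotonicity analogous to the proof of \Cref{lemma:1324-monotone-wrt-4-and-2}; and the innermost range query) compound multiplicatively, and this is absorbed by choosing each layer's approximation as $1+\eps/O(1)$.

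The main obstacle is precisely the cancellation between the loosening $\pi_a<w$ and tightening $\pi_b>w$ constraints when $\pi_d$ and $\pi_b$ are not separated in value; the two-level (position plus value) canonical bucketing of ``$4$'' is the concrete instantiation of ``partition $3$ and $4$ into buckets'' mentioned at the start of the section, and it is what rescues monotonicity. Summing the $\poly(\log n,\eps^{-1})$ cost per bucket over the $O(n\log^{O(1)} n)$ buckets yields the total running time $n\cdot\poly(\log n,\eps^{-1})$ claimed in \Cref{thm:2413-main}.
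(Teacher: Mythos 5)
Your approach is genuinely different from the paper's, and it has gaps that the paper's choice of separator is specifically designed to avoid. You place the separator at the LCA of the positions of ``$1$'' and ``$3$'' ($c$ and $d$), which leaves the position $b$ of ``$4$'' unconstrained (other than $b<c$); as you correctly observe, iterating over $(v,b)$ is then $\Theta(n^2)$. The paper instead places the separator between the positions of ``$4$'' and ``$3$'' ($i_2$ and $i_4$), and then splits each copy into ``$4$-heavy'' or ``$3$-heavy'' depending on which side ``$1$'' falls (\Cref{definition:copy-buckets} and what follows). This puts the element to be \emph{fixed} -- the ``$4$'' for $4$-heavy or the ``$3$'' for $3$-heavy -- on the boundary of the separator, so the outer iteration is only over $n\log n$ pairs (position, level), and the remaining three elements all live in rectangles determined by that position. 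Your partition does not have this feature, which is why you are forced to bucket ``$4$'' and why the cancellation between $\pi_a<w$ and $\pi_b>w$ arises at all.

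The concrete gaps are: (i) once ``$4$'' is processed in bulk over a position bucket $P$, the constraint $b<c$ couples the ``$4$'' bucket and the ``$1$'' candidate, so as you Birg\'e over $\pi_c$ the admissible portion of $P$ changes and the monotonicity argument modeled on \Cref{lemma:1324-monotone-wrt-4-and-2} does not carry over unchanged; (ii) with ``$4$'' bucketed (in position and value), the quantity to evaluate at a fixed $(w,\pi_c)$ is the number of pairs $(a,b)$ with $a<b$ and with $a$ and $b$ in two different rectangles -- a cross-rectangle $12$-count, not a ``two-dimensional range count on `$2$' positions,'' so \Cref{lemma:segment-tree} alone does not suffice (you would need something like \Cref{lem:disjoint-rectangle-12}, which the paper reserves for the $5$-pattern machinery); and (iii) ``one more level of canonical decomposition on $\pi_d$ co-nested with the $\pi_b$-value bucket'' for the boundary regime $w\in[y_{\min},y_{\max}]$ is not a concrete algorithm as written -- making it precise essentially requires introducing a second LCA-style separator in value space between $\pi_d$ and $\pi_b$, at which point the construction becomes noticeably more involved than the paper's. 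The paper's two lemmas (\Cref{lem:4-heavy-approx}, \Cref{lem:3-heavy-approx}) fix a single position and then only ever need two nested Birg\'e calls and one exact segment-tree range count, with no $12$-copy primitive and no value bucketing of ``$4$''. I'd recommend re-deriving the argument with the paper's separator; the $4$-heavy/$3$-heavy distinction is precisely what makes the inner counting collapse to the form you want.
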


\paragraph{Organization of this section.}
We begin by, in \Cref{sec:2413-preliminaries}, stating several definitions that are instrumental in describing our partitioning of $2413$.
\Cref{sec:main-theorem-2413} outlines our proof of \Cref{thm:2413-main}, while \Cref{sec:approximating-4-heavy,sec:approximating-3-heavy} prove the main technical claims we need in the proof of \Cref{thm:2413-main}.

\subsection{Preliminaries}
\label{sec:2413-preliminaries}
For convenience, we let $[n] := \{0,1,\ldots,n-1\}$.
We begin by defining the notion of $j$-buckets and type-$j$ copies, which are instrumental in defining the kind of separator we use and illustrate in \Cref{fig:separator-example}. Recall that a copy of $2413$ in a permutation $\pi \colon [n] \to [n]$ is any quadruple of indices $i_1 < i_2 < i_3 < i_4$ such that $\pi(i_3) < \pi(i_1) < \pi(i_4) < \pi(i_2)$.

\begin{definition}[Type of copy; $j$-buckets]
\label{definition:copy-buckets}

For each index $i \in [n]$ consider the standard binary representation of $i$ using $\lceil \log n \rceil$ bits, and define the $j$-least significant bit (or $j$-LSB in short) as the term corresponding to $2^j$ in the binary representation. We say that a $2413$ copy $(i_1, i_2, i_3, i_4)$ in $\pi$ is type-$j$ if $i_2$, i.e., the index of the ``4'', and $i_4$ (the index of the ``3'') differ on the $j$-LSB, but have equal $j'$-LSB for all $j' > j$.

Finally, two indices in $[n]$ are said to be in the same $j$-bucket if their $j'$-LSB is equal for all $j' \geq j$.
This definition is illustrated in \Cref{fig:type-j}.
\end{definition}
Observe that there are many $j$-buckets. In fact, $j$-buckets partition the integers into sets of $2^j$ consecutive integers each. 
For instance, the ranges of integers $[0, 7]$, $[8, 15]$, $[16, 23]$, $[24, 31]$ are all $3$-buckets.
\review{Another perspective one can take on $j$-buckets is as follows: consider a complete binary tree of depth $\lceil \log n \rceil$ with leaves numbered $0$ through $2^{\left\lceil \log n \right\rceil}-1$ ``from left to right''. Then, two leaves $a$ and $b$ are in the same $j$-bucket if their lowest common ancestor is at distance $j$ from them.}

\begin{figure}
    \centering
    \includegraphics[width=0.5\linewidth]{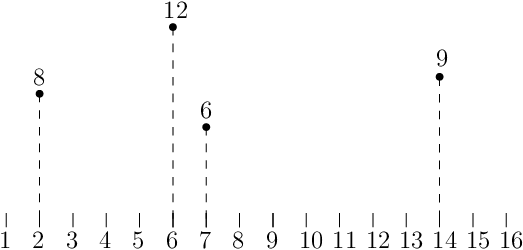}
    \caption{
    This example depicts a copy of $2413$ equal to $(2, 6, 7, 14)$ with $\pi(i_1) = 8$, $\pi(i_2) = 12$, $\pi(i_3) = 6$ and $\pi(i_4) = 9$. 
    Since $i_2 = (00110)_2$ and $i_4 = (01110)_2$, this copy is  $3$-type. Moreover, we have that all $i_1, i_2, i_3$ and $i_4$ are in the same $4$-bucket. The indices $i_1, i_2$ and $i_3$ are in the same $3$-bucket as well, while $i_2$ and $i_3$ are in addition in the same $2$-type and $1$-type bucket; see \Cref{definition:copy-buckets}.
    }
    \label{fig:type-j}
\end{figure}

Note that a bucket consists of contiguous subintervals of $[n]$. Moreover, in a $2413$ copy which is type-$j$, the ``4'' and ``2'' are in the same $(j+1)$-bucket and in different, but neighboring, $j$-buckets. This motivates the following definition.

\begin{definition}[$4$-heavy, $3$-heavy] 
Consider a type-$j$ $2413$ copy $(i_1, i_2, i_3, i_4)$. 
We say that the copy is $4$-heavy if $i_2$ and $i_3$, i.e., the ``4'' and ``1'', are in the same $j$-bucket. Otherwise, we say that the copy is $3$-heavy.
\end{definition}
Note that in a type-$j$ copy $(i_1, i_2, i_3, i_4)$ that is $3$-heavy, $i_3$ (``the $1$-entry'') is in the same $j$-bucket as $i_4$ (``the $3$-entry''). 
Similarly, in a type-$j$ copy $(i_1, i_2, i_3, i_4)$ that is $4$-heavy, $i_3$ (``the $1$-entry'') is in the same $j$-bucket as $i_2$ (``the $4$-entry''). This yields the following observation.
\begin{observation}
    Each type-$j$ copy is either $3$- or $4$-heavy, but not both.
\end{observation}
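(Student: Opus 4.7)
The plan is to note first that the ``or'' part of the observation is essentially immediate from the definition: the copy is declared $3$-heavy precisely when it fails to be $4$-heavy, so any type-$j$ copy falls into exactly one of the two categories as a matter of definition. The only reason to write anything beyond this is to corroborate the informal remark preceding the observation, namely that in a $3$-heavy type-$j$ copy $(i_1,i_2,i_3,i_4)$ the index $i_3$ lies in the same $j$-bucket as $i_4$. Once we prove this complementary characterization, mutual exclusivity follows because different $j$-buckets are disjoint.

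To establish the characterization, I would first unpack the type-$j$ hypothesis. By \Cref{definition:copy-buckets}, $i_2$ and $i_4$ agree on every $j'$-LSB with $j' > j$, so they lie in a common $(j+1)$-bucket $B$. Any $(j+1)$-bucket is the disjoint union of exactly two consecutive $j$-buckets, which I will call $B_{\mathrm{left}}$ and $B_{\mathrm{right}}$. Since $i_2$ and $i_4$ disagree on the $j$-LSB and $i_2 < i_4$, we have $i_2 \in B_{\mathrm{left}}$ and $i_4 \in B_{\mathrm{right}}$.

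The second step is to locate $i_3$. Because every $j'$-bucket is a contiguous subinterval of $[n]$ and $i_2 < i_3 < i_4$, the index $i_3$ must also belong to the interval $B$. Therefore $i_3 \in B_{\mathrm{left}}$ or $i_3 \in B_{\mathrm{right}}$, and exactly one of these holds since $B_{\mathrm{left}} \cap B_{\mathrm{right}} = \emptyset$. The former case places $i_3$ in the same $j$-bucket as $i_2$ (which is the definition of $4$-heavy), and the latter places $i_3$ in the same $j$-bucket as $i_4$ (the equivalent characterization of $3$-heavy).

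There is no real obstacle here; the content is a straightforward consequence of the fact that $(j+1)$-buckets split into two $j$-buckets and that buckets are intervals. The only thing worth being careful about is keeping track of which configuration forces $i_3$ into which sub-bucket, which the ordering $i_2 < i_3 < i_4$ handles cleanly.
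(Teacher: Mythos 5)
Your proposal is correct and follows essentially the same reasoning the paper sketches informally in the paragraph preceding the observation: the ``either-or'' is immediate from the complementary definition of $3$-heavy, and the equivalent characterization ($i_3$ shares a $j$-bucket with $i_4$ in the $3$-heavy case) follows because $i_2$ and $i_4$ sit in the two halves of a common $(j+1)$-bucket, buckets are intervals, and $i_2 < i_3 < i_4$. You have correctly identified and filled in the small gap that the paper leaves implicit.
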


\subsection{Proof of Main Theorem}
\label{sec:main-theorem-2413}
The proof of the main result of this section, i.e., \Cref{thm:2413-main}, relies on the following two claims, saying that the count of $3$- and $4$-heavy copies of a \emph{fixed} type \review{and one location} can be approximated in poly-logarithmic time.

\begin{lemma}[Approximation of $4$-heavy copies] \label{lem:4-heavy-approx}
Let $n \in \mathbb{N}$, $i \in [n]$, and $j \in \left[ \left \lceil \log n \right \rceil \right]$. 
Let $S$ be a pre-built segment tree for $\{(i, \pi_i) : i \in [n]\}$.
There exists a deterministic algorithm with running time $\poly(\log n, 1/\eps)$ that, given access to $S$, returns a $(1+\eps)$-approximation of the number of $4$-heavy type-$j$ copies $(i_1, i_2, i_3, i_4)$ of $(2413)$ in $\pi$ for which $i_2 = i$.
\end{lemma}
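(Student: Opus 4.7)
The plan is to fix $i_2 = i$ and $j$, reduce the count to constrained triples $(i_1, i_3, i_4)$, and apply the Birg\'e decomposition (\Cref{lem:Birge}) in a nested fashion. Let $B$ denote the $j$-bucket containing $i$, and $B'$ its right-neighboring $j$-bucket; both are contiguous intervals computable in $O(1)$ time from $i$ and $j$. By the definition of a $4$-heavy type-$j$ copy, we have $i_3 \in B$ with $i_3 > i$, and $i_4 \in B'$. Combined with the $2413$ value constraints, the count of interest becomes the number of $(i_1, i_3, i_4)$ with $i_1 < i < i_3 \leq \max B$, $i_4 \in B'$, and $\pi(i_3) < \pi(i_1) < \pi(i_4) < \pi(i)$.

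For the outer application, I enumerate $T_4 = \{i_4 \in B' : \pi(i_4) < \pi(i)\}$ in decreasing order of $\pi$-value, accessing the $\ell$-th such index in $O(\log^2 n)$ time via the value-indexed query of \Cref{lemma:segment-tree}. For each $i_4 \in T_4$, define $g(i_4)$ to be the number of pairs $(i_1, i_3)$ satisfying all the above constraints. As $\pi(i_4)$ decreases along the enumeration, the range $(\pi(i_3), \pi(i_4))$ admitting $\pi(i_1)$ only shrinks, and likewise the set of admissible $i_3$ (those with $\pi(i_3) < \pi(i_4)$) only shrinks. Hence $g(\cdot)$ is a non-increasing non-negative sequence along $T_4$, so \Cref{lem:Birge} with parameter $\eps/3$ approximates $\sum_{i_4 \in T_4} g(i_4)$ using $O(\eps^{-1}\log n)$ calls to $g$.

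To implement a single call to $g(i_4)$, I apply Birg\'e once more, this time over $T_3(i_4) = \{i_3 \in B : i_3 > i,\ \pi(i_3) < \pi(i_4)\}$, enumerated in increasing order of $\pi$-value. For each such $i_3$, the inner summand $h(i_3) = N_{0,\, i-1}^{\pi(i_3)+1,\ \pi(i_4)-1}$ is computed in $O(\log^2 n)$ via \Cref{lemma:segment-tree}. Because $\pi(i_3)$ grows along the enumeration while $\pi(i_4)$ is fixed, the interval $(\pi(i_3), \pi(i_4))$ only shrinks, so $h$ is a non-increasing non-negative sequence. Applying \Cref{lem:Birge} yields a $(1 \pm \eps/3)$-approximation of $g(i_4) = \sum_{i_3 \in T_3(i_4)} h(i_3)$ using $O(\eps^{-1}\log n)$ segment-tree queries. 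Feeding these approximations into the outer Birg\'e via the second paragraph of \Cref{lem:Birge} (with $\gamma = \eps/3$) yields a $(1 \pm \eps/3)^2 \subseteq (1 \pm \eps)$-approximation to the total count, with total running time $O(\eps^{-1}\log n) \cdot O(\eps^{-1}\log n) \cdot O(\log^2 n) = \poly(\log n, 1/\eps)$.

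The main obstacle is verifying the monotonicity claims together with the on-the-fly enumeration in $\pi$-value order. The former follows from the observation that all the relevant value-range widths are governed monotonically by $\pi(i_4)$ (outer) and by $\pi(i_3)$ (inner); the latter is precisely what the value-indexed access in \Cref{lemma:segment-tree} supports, so no additional sorting overhead is incurred beyond the $O(\log^2 n)$ per access. Composition of the two nested approximate Birg\'e applications within the claimed multiplicative error is handled by the second paragraph of \Cref{lem:Birge}.
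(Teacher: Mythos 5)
Your proposal is correct and follows essentially the same approach as the paper's proof: fix the ``3'' position $i_4$ in the right-neighboring $j$-bucket and apply Birg\'e over its $\pi$-values, then for each such $i_4$ apply an inner Birg\'e over the ``1'' candidates $i_3$ ordered by $\pi$-value, and count the ``2'' candidates $i_1$ exactly with a single segment-tree range query. The monotonicity observations (both inner and outer sequences are monotone in $\pi$-value because the admissible value range only shrinks) and the $\eps/3$ error composition over the two nested Birg\'e applications are the same as in the paper, just expressed with different variable names.
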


\begin{lemma}[Approximation of $3$-heavy copies] \label{lem:3-heavy-approx}
Let $n \in \mathbb{N}$, $i \in [n]$, and $j \in \left[ \left \lceil \log n \right \rceil \right]$. 
Let $S$ be a pre-built segment tree for $\{(i, \pi_i) : i \in [n]\}$.
There exists a deterministic algorithm with running time $\poly(\log n, 1/\eps)$ that, given access to $S$, returns a $(1+\eps)$-approximation of the number of $3$-heavy type-$j$ copies $(i_1, i_2, i_3, i_4)$ of $(2413)$ in $\pi$ for which $i_4 = i$.
\end{lemma}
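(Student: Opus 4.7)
}
The plan is to fix $i_4 = i$ and reduce the count to a nested sum with two levels of monotonicity, then apply \Cref{lem:Birge} twice. First, compute the $j$-bucket $B$ containing $i$ and its \emph{left partner} $j$-bucket $L$, i.e., the unique $j$-bucket such that $L \cup B$ is a $(j+1)$-bucket with $L$ lying to the left of $B$. If the $j$-LSB of $i$ equals $0$, the partner bucket lies to the right of $B$, so no $3$-heavy type-$j$ copy with $i_4 = i$ exists and the algorithm returns $0$. Otherwise, every valid copy has $i_2 \in L$, $i_3 \in B \cap [0, i-1]$, $i_1 < i_2$, and satisfies the value constraints $\pi(i_3) < \pi(i_1) < \pi(i) < \pi(i_2)$.

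Write the count as the outer sum $\sum_{i_3} g(i_3)$ over valid $i_3 \in B \cap [0, i-1]$ with $\pi(i_3) < \pi(i)$, and expand each term as the inner sum $g(i_3) = \sum_{i_2} h(i_2, i_3)$ over valid $i_2 \in L$ with $\pi(i_2) > \pi(i)$, where $h(i_2, i_3)$ counts indices $i_1 \in [0, i_2 - 1]$ with $\pi(i_1) \in (\pi(i_3), \pi(i))$. The two key monotonicity claims are: (i) enumerating valid $i_2$ by increasing position, $h(i_2, i_3)$ is non-decreasing, since the range $[0, i_2-1]$ grows while the value window $(\pi(i_3), \pi(i))$ is fixed; and (ii) enumerating valid $i_3$ by increasing $\pi(i_3)$, $g(i_3)$ is non-increasing, since the value window $(\pi(i_3), \pi(i))$ for $i_1$ only shrinks while the domain of valid $i_2$ does not depend on $i_3$.

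Given these monotonicities, invoke \Cref{lem:Birge} on the outer sum to obtain a $(1 \pm \eps/3)$-approximation using $O(\eps^{-1}\log n)$ queries to $g$. For each queried value of $i_3$, invoke \Cref{lem:Birge} again on the inner sum to obtain a $(1 \pm \eps/3)$-approximation of $g(i_3)$ using $O(\eps^{-1}\log n)$ queries to $h(\cdot, i_3)$. Each evaluation of $h(i_2, i_3)$ is an orthogonal range count on $S$, answered exactly in $O(\log^2 n)$ time. Access to the $t$-th valid $i_3$ (ordered by $\pi(i_3)$) and to the $t$-th valid $i_2$ (ordered by position) is provided by items (3) and (2) of \Cref{lemma:segment-tree}, respectively, after intersecting with the appropriate bucket/position and value constraints. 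Combining the two approximations via the ``moreover'' part of \Cref{lem:Birge} yields a $(1 \pm \eps)$-approximation for any $\eps \in (0,1)$ after rescaling, with total running time $\poly(\log n, 1/\eps)$.

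The main thing to get right is confirming that the constraints defining the domains of the two sums (membership in $L$ or $B$, and the one-sided value constraints $\pi(i_2) > \pi(i)$ and $\pi(i_3) < \pi(i)$) do not break the monotonicities: they only restrict \emph{which} indices appear, not the values of the summands, so the induced subsequences remain monotone and \Cref{lem:Birge} applies directly. Everything else is a straightforward combination of the segment tree primitives of \Cref{lemma:segment-tree} with Birg\'e approximation.
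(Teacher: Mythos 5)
Your proof is correct and takes essentially the same approach as the paper: fix $i_4 = i$, identify the left partner bucket $L$ for $i_2$, reduce the count to a double sum over $i_3$ and $i_2$ whose innermost summand is an orthogonal range count on the segment tree, and approximate via two nested applications of \Cref{lem:Birge}. The only cosmetic difference is that the paper applies Birg\'e first over the ``4'' candidates $i_2$ (monotone by position) and then over the ``1'' candidates $i_3$ (monotone by value), whereas you nest in the reverse order; both orderings work since each monotonicity is independent of which variable is fixed first.
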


With \Cref{lem:4-heavy-approx,lem:3-heavy-approx} in hand, the proof of \Cref{thm:2413-main} is almost immediate.
\begin{proof}[Proof of \Cref{thm:2413-main}] 
Our main algorithm is given as \Cref{alg:approximate-2413}.

\begin{algorithm}
\begin{algorithmic}[1]
\medskip 
\Statex \textbf{Input:} A permutation $\pi$; an approximation parameter $\eps > 0$
\Statex \textbf{Output:} A $1+\eps$ approximation of the number of $2413$ copies in $\pi$
\medskip
\Statex \hrule 

    \State Build a segment tree $S$ on $(i, \pi_i)$ for all $i = 1 \ldots n$ as described in \Cref{sec:2D-segment-tree}.
    \For{$i = 0 \ldots n-1$}
        \For{$j = 0 \ldots \lceil \log n \rceil$}
            \State Compute the approximate count $C_{j,4,i}$ of type-$j$ $4$-heavy copies of $(2413)$ whose $4$-entry is at location $i$, using the algorithm of \Cref{lem:4-heavy-approx} and passing $S$ to it. 
            \State Compute the approximate count $C_{j,3,i}$ of type $j$ $3$-heavy copies of $(2413)$ whose $3$-entry is at location $i$, using the algorithm of \Cref{lem:3-heavy-approx} and passing $S$ to it. 
        \EndFor
    \EndFor
    \State \Return $\sum_{i,j} (C_{j,3,i} + C_{j,4,i})$
    \end{algorithmic}
\caption{\textsc{Approximate-2413-Copies}}
\label{alg:approximate-2413}
\end{algorithm}

Each copy of $(2413)$ in $\pi$ is type-$j$ for exactly one value of $j$, and moreover, each such copy is either $3$-heavy or $4$-heavy, but not both. 
Hence, the sum $\sum_{i,j} (C_{j,3,i} + C_{j,4,i})$ is a $(1+\eps)$-approximation of the number of $(2413)$-copies in $\pi$.

Since $S$ can be built in $\tO(n)$ time, and each invocation to the algorithms from \Cref{lem:4-heavy-approx,lem:3-heavy-approx} takes $\poly(\log n, 1 / \eps)$ time, \Cref{alg:approximate-2413} runs in $\tO\rb{n \cdot \poly(1 / \eps)}$ time.
\end{proof}

It remains to prove \Cref{lem:4-heavy-approx} and \Cref{lem:3-heavy-approx}. We refer the reader to \Cref{fig:2413-heavy-4} for an illustration of the $4$-heavy case (\Cref{lem:4-heavy-approx}).

\subsection{Approximating the Number of $4$-Heavy Copies}
\label{sec:approximating-4-heavy}

\begin{figure}
    \centering
    \includegraphics[width=0.8\linewidth]{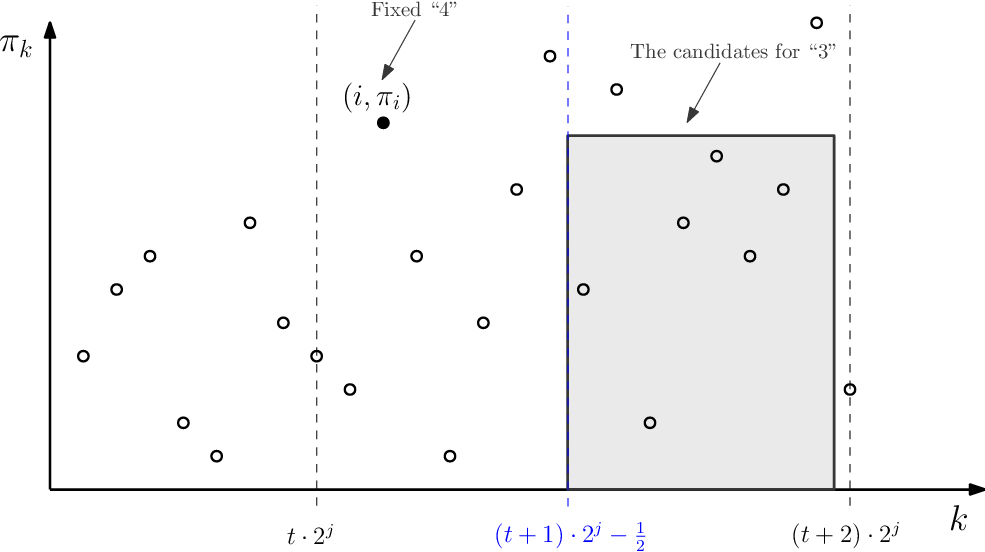}
    \caption{A helper illustration for the proof of \Cref{lem:4-heavy-approx}.
    In this sketch, $t$ is an integer.
    The shaded rectangle corresponds to the set $X$ of ``candidates for $3$''.
    }
    \label{fig:2413-heavy-4}
\end{figure}

\begin{proof}[Proof of \Cref{lem:4-heavy-approx}]
Fix $j$. Consider the set $A_i$ of all $4$-heavy type-$j$ $(2413)$-copies $(i_1, i_2, i_3, i_4)$ where $i_2 = i$, i.e., all copies where the ``4'' is located at index $i$. 
Recall that these are precisely all copies where the ``1'' is in the same $j$-bucket as index $i$, whereas the ``3'' is not in the same bucket. 
In particular, for any two such copies $(i_1, i, i_3, i_4)$ and $(i'_1, i, i'_3, i'_4)$, it holds that $i'_4 > i_3$, i.e., \textbf{all} ``3's'' lie to the right of \textbf{all} ``1's''.\footnote{Throughout our proofs, it is instructive to picture the input as a set of points with coordinates $(i, \pi_i)$ for all $i \in [n]$. 
The terminology such as ``left'', ``right'', ``above'', and ``below'' is defined with respect to that depiction of the input.}

\paragraph{``3'' candidates.}
We approximate $|A_i|$ by first fixing the ``candidate for $3$''; the ``candidate for $4$'' is already fixed by the definition of $A_i$. 
Moreover, we show that a particular function is monotone with respect to those candidates, which will enable us to apply \Cref{lem:Birge}; we invoke \Cref{lem:Birge} with parameter $\eps / 3$.
Formally, let $X$ be the set of indices with the following two properties: (1) the indices in the $j$-bucket immediately neighboring to ``the right'' the $j$-bucket $i$ belongs to, and (2) the indices whose value is smaller than $\pi(i)$. In \Cref{fig:2413-heavy-4}, $X$ corresponds to the shaded area.

For each such candidate $x \in X$, define $f(x)$ as the number of $(2413)$-copies in $A_i$ whose ``3'' corresponds to $x$. 
Importantly, $f(x)$ is ``monotone by value'' within the relevant bucket. Precisely, within $A_i$, $f(x)$ is non-decreasing as a function of $\pi(x)$ over $x \in X$. 
This is easy to see as for $x, x' \in X$ such that $\pi(x) > \pi(x')$, if $(i_1, i, i_3, x')$ is a $(2413)$ copy, then $(i_1, i, i_3, x)$ is a $(2413)$ copy as well. 
As a reminder, we consider $4$-heavy copies, and hence ``1'' and ``4'' are in the same $j$-bucket and, therefore, ``1'' and ``3'' are in different $j$-buckets.

\paragraph{Approximating $|A_i|$.}
By definition, we have that $|A_i| = \sum_{x \in X} f(x)$.
Moreover, since $f(x)$ is monotone with respect to $\pi(x)$ over $x \in X$, we approximate $\sum_{x \in X} f(x)$ by applying \Cref{lem:Birge}. 
Let $X' \subseteq X$ be the subset of size $O(\log(n) / \eps)$ of indices, and corresponding to \Cref{lem:Birge}, for which is needed to (approximately) compute $f(x)$ for $x \in X'$.
Observe that all the elements in $X$ belong to a well-defined rectangle. Hence, each point in $X'$ can be found in $\poly \log n$ time.

\paragraph{Approximating $f(x)$.}
Let $x \in X'$. Similarly to before, we approximate $f(x)$ in $\poly(\log n, 1/\eps)$ time using the segment tree and another application of the Birg\'e technique. 

Indeed, let $S_x$ be the set of all $1$-candidates, which are elements between $i$ and the rightmost end of its $j$-bucket, and whose values are less than $\pi(x)$. 
For each $y \in S$, let $g(y)$ denote the number of $(2413)$-copies of the form $(i_1, i, y, x)$. 
Note that $g(y)$ is monotone non-increasing in value. That is, when $\pi(y)$ increases, the number of $(2413)$-copies in $A_i$ that $y$ participates in as a ``1'' can only decrease. 
Moreover, it is easy to compute $g(y)$ exactly for a specific value of $y$ by invoking a single operation specified by \Cref{lemma:segment-tree}. That operation would count all elements that are larger than $\pi(y)$, smaller than $\pi(x)$, and are located to the left of $i$.

Now, because of the monotonicity of $g$, and because $f(x) = \sum_{y \in S_x} g(y)$, we apply the Birg\'e technique (\Cref{lem:Birge}) to approximate $f(x)$ for any specific value of $x \in X$ to within a $(1+\eps/3)$-factor using $O(\log (n) / \eps)$ computations of $g(y)$. 

Each of the applications of Birg\'e introduces a multiplicative error of $1 \pm \eps/3$. Provided that $\eps \leq 1/2$, the total multiplicative error is less than $1 \pm \eps$. 
\end{proof}

\subsection{Approximating the Number of $3$-Heavy Copies}
\label{sec:approximating-3-heavy}

\begin{proof}[Proof of \Cref{lem:3-heavy-approx}]
The proof is similar to the one for \Cref{lem:4-heavy-approx}, except that in counting $2413$ copies, the algorithm fixes the ``3'' but not the ``4''. 
In this proof, we only consider $3$-heavy type-$j$ copies, where the ``$1$'' is located in the same $j$-bucket as the ``$3$'', while the ``$4$'' is located in a neighboring (to the left) $j$-bucket. 

Fix a ``3'' at location $i$. We consider candidates for ``$4$'' in the neighboring $j$-bucket -- these are all locations $x$ with a value higher than that in location $i$. 
Among these locations $x$, the count of $2413$ copies with the ``4'' at $x$ and the ``3'' at $i$, which are $3$-heavy, is ``monotone by location'': it becomes bigger as the index/location $x$ grows in the relevant bucket.
This is the case since the possibilities for a ``1'' remain fixed while moving the ``4'' to the right allows more options for a ``2''. 
So we apply Birg\'e for the first time here and only need to compute the approximate count for $O(\log(n) / \eps)$ specific $x$ locations.

Now, fixing a specific $x$, i.e., specific location of the ``$4$'', we proceed exactly as in the proof of \Cref{lem:4-heavy-approx}; this requires another application of Birg\'e. 
This completes the analysis.
\end{proof}
\section{Approximate 2D Segment Tree Data Structure}
\label{sec:counting-12-copies}

In this section, we create a near-optimal data structure that returns the number of $12$ copies inside an axis-parallel arbitrary rectangle, proving~\Cref{lemma:counting-12-copies} (restated for convenience).

\begin{proposition}[Restatment of~\Cref{lemma:counting-12-copies}]
    There exists a deterministic data structure for $n$-point sets in $\R^2$, that implements the following with preprocessing time $\tO(n \eps^{-1})$. Given an axis-parallel rectangle $R \subseteq \R^2$ as a query, the data structure reports a $(1+\eps)$-approximation of the number of $12$-copies inside $R$, with per-query time $\poly(\log n) \cdot \eps^{-1}$.
\end{proposition}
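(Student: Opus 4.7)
The plan is to build a two-dimensional segment tree: an outer tree over the $x$-axis, and, at each outer node $v$ with $x$-range $X_v$, an inner segment tree over the $y$-axis restricted to the input points whose $x$-coordinate lies in $X_v$. After rank-normalizing the $y$-coordinates the inner trees all share a common underlying structure, and the total number of (outer, inner) node pairs $(v,u)$ carrying at least one input point is $\tO(n)$. For each such pair I would store the exact cardinality $N(v,u)$ of input points inside the canonical rectangle $X_v\times Y_u$, together with a $(1+\eps)$-approximation $\tilde c(v,u)$ of the number of $12$-copies inside that rectangle.

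The values $N(v,u)$ can be computed in the standard way in $\tO(n)$ time. I would fill in $\tilde c(v,u)$ bottom-up in the outer tree. At an outer leaf the count is zero, and at an internal outer node $v$ with children $v_L$, $v_R$ the $12$-copies inside $X_v\times Y_u$ split into three disjoint groups: those living entirely in $X_{v_L}\times Y_u$, those living entirely in $X_{v_R}\times Y_u$, and \emph{cross copies} with the ``$1$'' in $X_{v_L}\times Y_u$ and the ``$2$'' in $X_{v_R}\times Y_u$. The first two groups are already approximated by $\tilde c(v_L,u)$ and $\tilde c(v_R,u)$. For the cross term, let $y^{(1)}\le y^{(2)}\le\ldots\le y^{(m)}$ be the sorted $y$-coordinates of the input points inside $X_{v_L}\times Y_u$, where $m = N(v_L,u)$; then the cross count equals
\[
    \sum_{r=1}^{m} M(v_R,u,y^{(r)}),
\]
where $M(v_R,u,t)$ denotes the number of input points in $X_{v_R}\times Y_u$ with $y$-coordinate strictly greater than $t$. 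The summand is monotone non-increasing in $r$, and each evaluation of $M(v_R,u,\cdot)$ costs $\tO(1)$ via the inner tree at $v_R$. I would therefore invoke \Cref{lem:Birge} with parameter $\Theta(\eps/\log n)$ to approximate the cross term in $\tO(\eps^{-1})$ time per $(v,u)$ pair. Summed over all $\tO(n)$ pairs, the total preprocessing time is $\tO(n\eps^{-1})$.

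To answer a query on $R=[x_1,x_2]\times[y_1,y_2]$, I would decompose $[x_1,x_2]$ through the outer tree into $O(\log n)$ canonical $x$-strips, and decompose $[y_1,y_2]$ through each strip's inner tree into $O(\log n)$ canonical $y$-intervals, yielding $O(\log^2 n)$ canonical sub-rectangles whose disjoint union is exactly the set of input points in $R$. The $12$-count in $R$ then splits into (i) diagonal contributions from $12$-copies lying inside a single canonical sub-rectangle, which are read off directly as the stored $\tilde c$ values, and (ii) cross contributions between ordered pairs of distinct canonical sub-rectangles. Each cross contribution has exactly the same structure as the preprocessing cross term: fix one canonical sub-rectangle, sort its points by the relevant coordinate, observe monotonicity of the opposing count, and apply Birg\'e with $\tO(1)$-time range queries for each sample. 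With $\poly(\log n)$ cross pairs and $\tO(\eps^{-1})$ Birg\'e samples each, the total query time is $\poly(\log n)\cdot\eps^{-1}$.

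The main obstacle I anticipate is error accumulation: the $\tilde c(v,u)$ are defined by a recursion of depth $\Theta(\log n)$ in the outer tree, and each level invokes Birg\'e which introduces a multiplicative factor of $(1+\eps')$. To keep the final multiplicative error below $1+\eps$, I would rely on the second part of \Cref{lem:Birge}, which tolerates multiplicatively approximate summands, and set $\eps' = \Theta(\eps/\log n)$ at every invocation. This only inflates every Birg\'e sample count (and hence the preprocessing and query times) by an additional $\log$ factor, preserving the claimed $\tO(n\eps^{-1})$ preprocessing and $\poly(\log n)\cdot\eps^{-1}$ query-time budgets.
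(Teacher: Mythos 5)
Your proposal is correct, and its query-time machinery matches the paper's exactly (2D segment tree, decomposition of $R$ into $O(\log^2 n)$ canonical rectangles, reading off per-node $12$-counts and approximating cross counts over ordered pairs via Birg\'e). Where you diverge is in the preprocessing: you fill in the stored per-node $12$-counts $\tilde c(v,u)$ by a bottom-up recursion in the outer tree, approximating each cross term with \Cref{lem:Birge}. The paper instead observes that each node of the 2D segment tree already stores its point list explicitly, with total list length $O(n\log^2 n)$ summed over all nodes, so it simply runs the textbook merge-sort-based \emph{exact} $12$-counting algorithm on each list, for $O(n\log^3 n)$ preprocessing overall --- exact per-node counts, no $\eps$-dependence, and no error-propagation bookkeeping. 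Your route works as well (the paper even remarks it can shave a log factor), but the worry that led you to set $\eps' = \Theta(\eps/\log n)$ is unfounded: when you add \emph{non-negative} quantities each carrying multiplicative error at most $\alpha$, the sum carries error at most $\alpha$; multiplicative error does not compound under summation. In your recursion $\tilde c(v,u) = \tilde c(v_L,u) + \tilde c(v_R,u) + B$, where the two child terms have error $\leq \delta_{\ell-1}$ and $B$ (a Birg\'e estimate built from exact range counts $M$) has error $\leq \eps'$, the resulting error is $\max(\delta_{\ell-1},\eps')=\eps'$ at every level of the outer tree, independent of depth. So $\eps'=\Theta(\eps)$ already suffices; the second part of \Cref{lem:Birge} is needed only to propagate this single $(1\pm\eps')$ factor through the query-time Birg\'e calls, not to damp a geometric blow-up. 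The extra $\log n$ you paid is harmless (swallowed by the $\tO$ and $\poly(\log n)$ budgets) but unnecessary.
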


Note that the existing literature on approximate counting of $12$-copies \cite{AnderssonPetersson98, ChanPatrascu2010} does not build a data structure as described in \Cref{lemma:counting-12-copies}. While existing work approximates the total count of $12$-copies in only a single box -- the whole dataset of $n$ points -- our data structure preprocesses the $n$ points once and then allows us to retrieve the (approximate) number of $12$-pairs for arbitrary sub-rectangles we query, in polylogarithmic time.
In what follows, we describe how to build this data structure.

\subsection{Two-Dimensional Segment Tree}
\label{sec:2d-segment-tree-for-12-copies}
As a starting point, we build a two-dimensional \emph{sparse} segment tree over the points $(i, \pi_i)$ over all $i \in [n]$. 
In \Cref{sec:2D-segment-tree}, we recall the definition of a segment tree and describe how we use it to count $4$-patterns.
For counting $5$-patterns, we build a two-dimensional segment tree as follows:
\begin{enumerate}[(1)]
    \item A segment tree $S$ is built over the points $(i, \pi_i)$ with respect to their $x$-coordinate. We also use \emph{outer} segment tree to refer to $S$.
    \item Consider a vertex $v$ in $S$, and let $[a, b]$ be the interval of the $x$-axis $v$ corresponds to. Then, $v$ stores all the points $(i, \pi_i)$ such that $a \le i \le b$.
    \item The points inside each vertex $v$ of $S$ are organized as a segment tree with respect to the $y$-coordinate of the $v$'s points. We call these segment trees \emph{inner}.
    \item Let $v$ be a vertex in the outer and $w$ a vertex in the $v$'s inner segment tree. 
    Let $v$ correspond to $[a, b]$ and $w$ to $[c, d]$. 
    Then, $w$ stores all the points within rectangle $[a, b] \times [c, d]$, i.e., $w$ stores all $(i, \pi_i)$ such that $a \le i \le b$ and $c \le \pi_i \le d$.
    Two copies of those points are kept. One copy is sorted with respect to the $x$-coordinates and the other copy is sorted with respect to the $y$-coordinate.

    \item To ensure that $S$ fits into $n \cdot \poly(\log n)$ space, those vertices in an inner segment tree that do not contain any input point are \textbf{not created}. For example, in terms of implementation, they point to \textsc{null}.
\end{enumerate}
Hence, $S$ is a segment tree of segment trees. The outer segment tree partitions the plane with respect to the $x$-coordinate into recursively nested strips. The inner segment trees partition each of the strips into another family of recursively nested strips but with respect to the $y$-coordinate.

A point $(i, \pi_i)$ is replicated within $O(\log n)$ vertices in the outer segment tree. 
Each of those outer vertices replicates $(i, \pi_i)$ $O(\log n)$ times within its inner segment tree. 
Hence, a point is replicated $O(\log^2 n)$ times within $S$.

This conclusion has two implications. 
First, the points inside the vertices of the inner segment trees can be sorted in $O(n \log^3 n)$ time.
Second, the total number of non-empty vertices across all inner segment trees is $O(n \log^2 n)$. 
This is essential as it enables us to build and maintain $S$ in only $\tO(n)$ time by not creating the vertices that contain no point inside.

\subsection{Pre-processing 12-copy counts}
Once the two-dimensional segment tree $S$ is built as described, we process its vertices to pre-compute the number of $12$ copies inside each vertex of the inner segment trees. 
First, we show the following claim.

\begin{lemma}[12 copies across disjoint useful rectangles]
    \label{lem:disjoint-rectangle-12}
    Given two distinct vertices $w_1$ and $w_2$ belonging to inner segment trees of $S$, we can $(1+\eps)$ approximate the number of $12$ copies $i_1, i_2$ with $(i_1, \pi_{i_1})$ inside $w_1$ and $(i_2, \pi_{i_2})$ inside $w_2$ in $O(\eps^{-1} \log^2 n)$ time.
\end{lemma}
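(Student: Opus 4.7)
My plan is to recognize that, once one coordinate is ``forced'' by the geometric separation of $R_1$ from $R_2$, the number of $12$-copies across $w_1$ and $w_2$ can be written as a one-dimensional monotone sum, which is exactly what \Cref{lem:Birge} is built to approximate. Let $R_i = [a_i,b_i]\times[c_i,d_i]$ be the rectangle associated with $w_i$, let $P_i$ be the points of our set inside $R_i$, and let $n_i = |P_i|$, stored at the vertex during construction. Recall from \Cref{sec:2d-segment-tree-for-12-copies} that $w_1$ and $w_2$ each carry copies of $P_1$ and $P_2$ sorted by $x$- and by $y$-coordinate. Since the lemma is used on pieces of a canonical 2D segment-tree decomposition, I may assume $R_1 \cap R_2 = \emptyset$, so at least one coordinate projection of the two rectangles is disjoint.

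If both projections are disjoint, the relative orientation of $R_1$ and $R_2$ alone decides whether every pair $(p_1,p_2)\in P_1\times P_2$ is a $12$-copy or none is, so the answer is either $0$ or $n_1 n_2$ and is returned in $O(1)$ time.

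The interesting case is when exactly one projection is disjoint. By symmetry I treat the case $b_1 \le a_2$ (disjoint $x$-projections with $R_1$ to the left of $R_2$). Every $(p_1,p_2)\in P_1\times P_2$ then automatically satisfies $x(p_1) < x(p_2)$, so the number of $12$-copies equals
\[
  N \;=\; \sum_{p_1\in P_1}\bigl|\{p_2\in P_2 : y(p_2) > y(p_1)\}\bigr|.
\]
Enumerating $P_1$ as $p_{1,1},\ldots,p_{1,n_1}$ in non-decreasing $y$-order (accessible in $O(1)$ per entry via $w_1$'s $y$-sorted copy), the quantity $f(k) := |\{p_2 \in P_2 : y(p_2) > y(p_{1,k})\}|$ satisfies $f(1) \ge f(2) \ge \cdots \ge f(n_1) \ge 0$ and has $\sum_k f(k) = N$. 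I then invoke \Cref{lem:Birge} with parameter $\eps$ to $(1+\eps)$-approximate this sum using $O(\eps^{-1}\log n)$ evaluations of $f$. Each evaluation fetches $y(p_{1,k})$ in $O(1)$ and performs one binary search in $w_2$'s $y$-sorted list, costing $O(\log n)$; the total work is $O(\eps^{-1}\log^2 n)$, matching the claimed bound.

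I expect the only step requiring real thought to be identifying the ordering that turns the count into a monotone sum; once $P_1$ is sorted by $y$, the monotonicity is essentially forced by the disjointness of the $x$-projections. A mild subtlety is the sub-case in which it is the $y$-projections (rather than the $x$-projections) that are disjoint, which occurs whenever $w_1$ and $w_2$ share an outer vertex of $S$; the very same argument applies after swapping the roles of the two coordinates throughout, so no new ingredient is needed beyond \Cref{lem:Birge} and the sorted point lists already stored at each inner vertex.
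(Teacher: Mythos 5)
Your proof is correct and follows essentially the same approach as the paper's: distinguish the cases where the rectangles are fully comparable (return $0$ or $n_1 n_2$) from the case where exactly one coordinate projection is disjoint, observe that separation in one coordinate reduces the count to a one-dimensional monotone sum over the other coordinate, and apply \Cref{lem:Birge} with each evaluation done by a binary search in the $y$-sorted (or $x$-sorted) list stored at the other vertex. Your write-up is a bit more explicit than the paper's about why the sum is monotone and about the symmetric sub-case of disjoint $y$-projections, but the substance is identical.
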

\begin{proof}
Let $R_1$ be the rectangle corresponding to $w_1$, and $R_2$ be the rectangle corresponding to $w_2$.
If $R_2$ is below or left of $R_1$, we just return $0$. 
If $R_2$ is up and right of $R_1$, we return $|R_1| \cdot |R_2|$.

Without loss of generality, assume that $R_2$ is to the right of $R_1$. We now use Birg\'e theorem to approximate the number of relevant $12$ pairs as follows.
Note that we simply need to return pairs of points $(u, v)$ from $R_1 \times R_2$ such that $u$ is below $v$. The higher up the $u$ is, the fewer $(u, v)$ pairs there are. Thus, we only need to compute the number of possible $v$, for $O(\eps^{-1} \log n)$ different possibilities of $u$ by \Cref{lem:Birge}, which can be effectively computed in $O(\log n)$ time each.
\end{proof}

\review{
The next simple lemma concerns $12$-counts inside vertices of the 2D segment tree. The proof follows from a simple double counting argument.}

\begin{lemma}\label{lem:precomputing-12}
    There is an algorithm that in $O(n \cdot \log^3 n)$ time exactly  computes the number of $12$ copies within \emph{each} vertex of the two-dimensional segment tree.
\end{lemma}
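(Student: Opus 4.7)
The plan is to compute the count $N(w)$, the number of $12$-copies inside the rectangle of $w$, by bottom-up dynamic programming within each inner segment tree. Fix an outer vertex $v$ and process the vertices of its inner tree in post-order. A leaf $w$ holds at most one point, so $N(w) = 0$. For an internal $w$ with children $w_L$, $w_R$ partitioning its $y$-range so that $w_L$ lies strictly below $w_R$, every $12$-copy inside $w$ lies either entirely in $w_L$, entirely in $w_R$, or has one endpoint in each child. Because every point of $w_L$ already has strictly smaller $y$-coordinate than every point of $w_R$, the cross pairs form a $12$-copy precisely when the endpoint in $w_L$ has smaller $x$-coordinate than the endpoint in $w_R$. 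This gives the recurrence
\[
N(w) = N(w_L) + N(w_R) + \mathrm{cross}(w_L, w_R),
\]
where $\mathrm{cross}(w_L, w_R)$ is the number of pairs $(p,q) \in w_L \times w_R$ with $p.x < q.x$.

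To evaluate $\mathrm{cross}$, I would use the $x$-sorted copy of each vertex's point list that the data structure already stores. A single two-pointer scan over the two sorted lists counts, for each element of $w_L$ in increasing $x$-order, how many elements of $w_R$ have strictly larger $x$; this is precisely the merge-sort-style inversion-counting primitive and runs in $O(|w_L| + |w_R|)$ time. Empty or \textsc{null} children contribute $0$ and can be skipped.

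For the running time, fix an outer vertex $v$ with $n_v$ points. Summing $O(|w_L|+|w_R|)$ over all internal vertices of $v$'s inner tree gives $O(n_v \log n_v)$, since each point of $v$ is touched once per level and the inner tree has $O(\log n_v)$ levels. Summing over outer vertices and using $\sum_v n_v = O(n \log n)$ (each original point belongs to $O(\log n)$ outer vertices) yields a total of $O(n \log^2 n)$ for the DP, dominated by the $O(n \log^3 n)$ cost of the preprocessing (building and sorting the inner trees) already established above, so the overall bound is $O(n \log^3 n)$ as claimed. The only non-routine step is observing that splitting on $y$ within a sibling pair makes the cross-pair contribution a one-dimensional ordered-count problem solvable in linear time by a single merge scan; the remainder is the standard segment-tree potential argument, so I do not expect any real obstacle.
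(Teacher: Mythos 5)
Your proposal is correct, but it proves the lemma by a genuinely different (and slightly sharper) mechanism than the paper. The paper's proof is a one-liner: since the total length of all vertex lists in the 2D segment tree is $O(n \log^2 n)$, one simply runs the textbook $O(t\log t)$ merge-sort inversion counter independently on each vertex's list of length $t$, and the sum is $O(n\log^3 n)$; there is no dependence between vertices at all. You instead exploit the tree structure: you observe that an inner vertex's children partition its $y$-range into a lower and an upper half, so cross pairs between the two children are increasing in $y$ automatically and only need a one-dimensional ordered comparison in $x$, which a linear merge scan of the already-stored $x$-sorted lists handles. The recurrence $N(w)=N(w_L)+N(w_R)+\mathrm{cross}(w_L,w_R)$ then gives all counts bottom-up. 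This saves a $\log$ factor over the paper's naive per-vertex recount: your DP pass costs $O(n\log^2 n)$ (after the $x$-sorted lists are available), whereas the paper spends $O(n\log^3 n)$ on the recounting step itself. Since tree construction already costs $O(n\log^3 n)$, both routes meet the stated bound, but yours isolates tree construction as the actual bottleneck. One small correction: your intermediate bound should be $O(n_v \log n)$ rather than $O(n_v \log n_v)$, because the inner segment tree spans the full $y$-range $[1,\tilde n]$ (sparsely stored), so each point of $v$ can appear at $O(\log n)$ levels regardless of $n_v$; the final $O(n\log^2 n)$ is unaffected since $\sum_v n_v = O(n\log n)$.
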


\review{We note that the polylog exponent can be improved to $2$ if one only wants approximate answers, or $2.5$ if one uses the state of the art (and more complex) algorithm of Chan and P\u{a}tra\c{s}cu \cite{ChanPatrascu2010} for $12$-counting. However, since the construction of the data structure requires time $O(n \log^3 n)$ anyway, these improvements would only complicate or weaken the result without asymptotically improving the running time.}

\begin{proof}
\review{As discussed above, each point is replicated $O(\log^2 n)$ times in the 2D segment tree. In other words, the total length of all list associated with vertices of the segment tree is $O(n \log^2 n)$. For each such vertex with list of length $t$, one can exactly count the number of $12$-copies in the vertex in time $O(t \log t)$, using the textbook merge sort based exact algorithm for $12$-counting (see, e.g., \cite{ChanPatrascu2010}). The total running time is $O(n \log^3 n)$.}
\end{proof}

\subsection{Proof of \Cref{lemma:counting-12-copies}}
We first build a two-dimensional segment tree $S$, as discussed above.
Second, following \Cref{lem:precomputing-12}, we precompute the $(1+\eps)$-approximate number of $12$ copies within each vertex of $S$. 

For each axis-parallel rectangle $R$ query, decompose $R$ into $O(\log^2 n)$ axis-parallel rectangles such that those rectangles correspond to the vertices of the segment tree $S$.
In \Cref{sec:2D-segment-tree}, we already discussed one such decomposition. 
Let $\cR$ be the set of rectangles obtained in that decomposition.
Each rectangle in $\cR$ can be located within $S$ in $O(\log n)$ time.

For each of the $O(\log^4 n)$ pairs of $(R_1, R_2) \in \cR \times \cR$, by \Cref{lem:disjoint-rectangle-12}, we approximate the number of $12$ copies across $R_1$ and $R_2$ in $O(\eps^{-1} \log^2 n)$ time. 
To the sum of those approximated counts, we also add the number of $12$ copies inside each rectangle in $\cR$.
Therefore, the number of $12$ copies inside $R$ can be approximated in $O(\eps^{-1} \log^6 n)$ time.
\section{All $5$-Patterns via Global Separators and $12$-Copies}
\label{sec:5-patterns}

The main goal of this section is to prove our result on approximately counting $5$-length patterns in near-linear time.
\maintheorem*

This section is split into three parts.
In \cref{sec:2413}, we have already described the idea of using a separator to induce additional structure among the copies of a fixed pattern. 
A part of this section details an extension of those ideas.
As a reminder, for counting $4$-heavy copies of the pattern $2413$, the idea was to first fix a candidate for the ``4'' and then to find a convenient way to separate the candidates for the ``1'' and the candidates for the ``3''.

First, in \cref{sec:global-separators}, we elaborate that it is also possible to \emph{first fix a separator} and then to fix a candidate for one of the positions in the pattern. In fact, we show that two separators can be fixed before fixing any candidate.
We refer to these separators by \emph{global}.

Second, in \cref{sec:proof-of-main-theorem}, we show how to leverage that type of separators to count copies of almost all $5$-patterns.
Our proof is computer-assisted. 
That is, we implemented a simple algorithm to test for which $5$-length patterns their counts can be approximate using the Birg\'e technique, global separators, and our data structure from \cref{sec:counting-12-copies}.
However, these ideas do not suffice to approximately count all possible configurations of the copies of $13524$, $14253$, and their symmetric patterns.
Nevertheless, also in \cref{sec:proof-of-main-theorem}, we describe how to reduce the count of ``difficult configurations'' of those patterns to counting the copies of $2413$ and $3142$.
That yields the proof of \cref{thm:main_theorem}.

Third, in \cref{sec:thm-enumeration}, we discuss our proof of \cref{thm:enumeration}.

\subsection{Global Separators}
\label{sec:global-separators}

We use the segment tree $S$ discussed in \Cref{sec:2d-segment-tree-for-12-copies} to describe our definition of global separators.
As a running example, consider the $5$-pattern $24135$.
For each vertex $v$ in the outer segment tree of $S$, we want to count all the copies of $24135$ that \textbf{do not} appear in $v$'s children, but do appear in $v$.
This naturally gives rise to the idea of \emph{vertical separators}.
Specifically, let $[a, b]$ be the range $v$ corresponds to. 
Saying that a copy belongs to $v$ \textbf{but not} to any of its children is equivalent to saying that there is a vertical separator at the $x$-coordinate $(a+b)/2$ such that the copy is on ``both sides'' of the separator at $(a+b)/2$. 
One such separator is depicted in \Cref{fig:global-separator}. 
Let $\cC_v$ be the set of all such copies.

We also partition the copies in $\cC_v$ with respect to their $y$-coordinates in a way similar to the horizontal separator. Namely, $\cC_v$ is partitioned into $\{\cC_{v, w}\ |\ \text{$w$ is a vertex in the $v$'s inner segment tree}\}$ such that if a copy belongs to $\cC_{v, w}$, then $w$ is the smallest vertex in the $v$'s inner segment tree that the copy belongs to.
This naturally induces \emph{horizontal separators} within $[a, b] \times [0, \infty)$.
After fixing a vertical separator and then a horizontal separator, we define a rectangle inside which we aim to count the copies of a fixed pattern. 
An example of such a rectangle is illustrated in \Cref{fig:global-separator}.

\paragraph{The number of vertical-horizontal separator pairs.}
From our construction, the fixing of a vertical and then a horizontal separator corresponds to a vertex in an inner segment tree of $S$. 
Since, as explained in \Cref{sec:2d-segment-tree-for-12-copies}, the total number of \textbf{non-empty} vertices across all inner segment trees is $O(n \log^2 n)$, there are $O(n \log^2 n)$ vertical-horizontal separator pairs that should be considered.
The segment tree $S$ itself defines those separator pairs, that is, a vertex in an inner segment tree defines one separator pair.

To execute this idea, we have to ensure that each copy is considered exactly once.

\paragraph{Counting each copy exactly once.}
Consider a vertex $v$ in the outer segment tree of $S$.
Now, let $w$ be a vertex in the inner segment tree of $v$.
As a running example, consider the $5$-pattern $24135$, and let $C$ represent a copy of this pattern.
To ensure that $C$ is contained in $v$ but not in its children, we require that the leftmost position, i.e., the position of ``2'', is to the left of the vertical separator, and that the position of ``5'' is to the right.

Similarly, to ensure that $w$ is the smallest vertex in $v$'s inner segment tree that contains $C$, we require that the topmost value, i.e., ``5'', is above, and the value of ``1'' is below the horizontal separator.
This setup is sketched in \Cref{fig:global-separator}.

To conclude, observe that $C$ is in the root of $S$, ensuring that $C$ is counted by some vertex $w$.
Second, for any copy $C$, there exists a unique vertex $w$ that counts $C$: only $w$ and its ancestors contain $C$, while none of $w$'s siblings do. 
This is because $w$'s siblings correspond to disjoint rectangles by the construction of segment trees.

\subsection{Proof of \cref{thm:main_theorem}}
\label{sec:proof-of-main-theorem}
The final piece of our proof is computer-assisted. We now provide additional details and discuss how to perform the computation efficiently.
We first describe an approach that enables us to approximately count almost all $5$-length patterns. 
After it, at the end of this proof, we discuss how to approximately count the remaining patterns.

\paragraph{Configurations.}
Our main algorithm counts $5$-patterns by distributing these counts across the inner vertices of $S$, as discussed in \cref{sec:global-separators}.
Once a vertex $w$ is fixed, we ensure that only copies that do not belong to any of the children of $w$ are counted, as explained next.

Fixing $w$ induces a horizontal and vertical separator.
For a fixed $w$, our algorithm considers all \emph{valid configurations}, such as: ``1'' and ``2'' are below the horizontal separator while ``3'', ``4'', and ``5'' are above; and ``2'' and ``4'' are to the left of the vertical separator, while ``1'', ``3'', and ``5'' are to the right.
We emphasize that for a configuration to be valid there has to be an element below and an element above the horizontal separator, and an element to the left and an element to the right of the vertical separator. This ensures that a counted copy does not belong to a child of $w$.

For each configuration, the algorithm fixes one element (e.g., the ``4'').
The choice of which element to fix is guided by our algorithm, which also provides a ``recipe'' on how to leverage the Birg\'e technique (\cref{lem:Birge}) and the $12$-copy primitive (\cref{sec:counting-12-copies}).
Each configuration naturally provides candidate locations for the elements of a pattern. This is discussed in more details in the next paragraph.

\begin{figure}[h]
    \centering
    \includegraphics[width=0.7\linewidth]{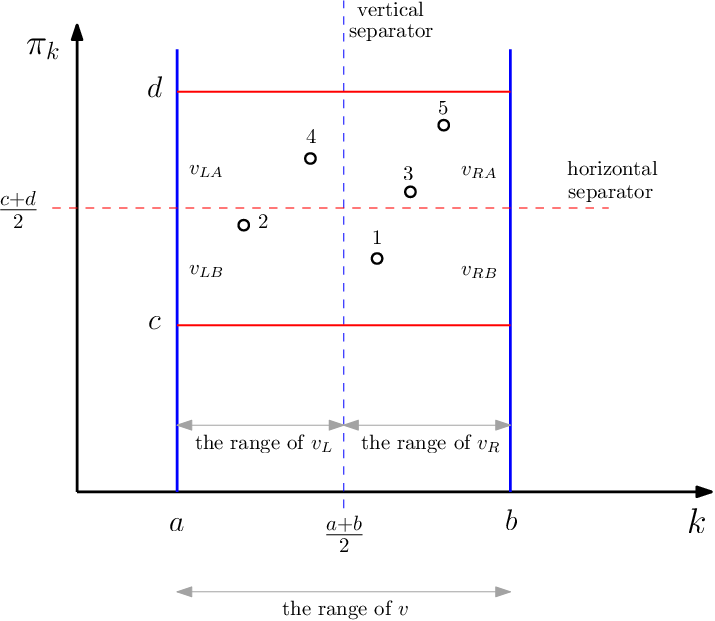}
    \caption{
        This is a more detailed example compared to the one provided in \cref{fig:global-separator} to aid the discussion in \cref{sec:proof-of-main-theorem}.
        Here, $v$ corresponds to the rectangle $[a, b] \times [0, \infty)$. Its two children, $v_L$ and $v_R$, correspond to rectangles $[a, (a + b)/2] \times [0, \infty)$ and $[(a + b)/2, b] \times [0, \infty)$.
    }
    \label{fig:global-separator-more-details}
\end{figure}
\paragraph{Candidate locations for an element.}
We now discuss where are candidate locations for ``1'', ``2'', ``3'', ``4'', and ``5'' with respect to a given configuration.
Let $v$ be the vertex in the outer segment tree which contains $w$. Let $v_L$ and $v_R$ be the two $v$'s children in the outer segment tree.
Let $[a, b] \times [c, d]$ correspond to $w$. Finally, define $v_{LB}$ to be the vertex in $v_L$'s segment tree corresponding to $\bb{a, \frac{a+b}{2}} \times \bb{c, \frac{c+d}{2}}$ and $v_{LA}$ corresponding to $\bb{a, \frac{a+b}{2}} \times \bb{\frac{c+d}{2}, d}$. 
Similarly, define $v_{RB}$ and $v_{RA}$ to correspond to $\bb{\frac{a+b}{2}, b} \times \bb{c, \frac{c+d}{2}}$ and $\bb{\frac{a+b}{2}, b} \times \bb{\frac{c+d}{2}, d}$, respectively.
It may be helpful to interpret ‘L’ as left, ‘R’ as right, ‘B’ as below, and ‘A’ as above the corresponding separators.
One such example is illustrated in \cref{fig:global-separator-more-details}.

Say that the algorithm fixes the ``4'' after choosing a configuration. Then, in the example in \cref{fig:global-separator-more-details}, it means that the algorithm iterates over points in $v_{LA}$ to select a candidate for the ``4''.
Similarly, when the Birg\'e technique is applied to consider the candidates for ``2'', it is applied within the points of $v_{LB}$, and so on.
Again, the choice of which element to fix in this configuration is made by the algorithm to enable the use of the Birg\'e technique (\cref{lem:Birge}) and the $12$-copy primitive (\cref{sec:counting-12-copies}).

\paragraph{Our recipe and how to read it.}
We give our recipe for counting $5$-length patterns at \cite{ourRecipeLink}.
We now illustrate how to read this recipe. For instance, consider the row 
\begin{center}
    ``\verb!12|345, horizontal below3 ---> 3, 1, 2, 4, 5!'';
\end{center}\review{it is sketched in \cref{fig:5-pattern-recipe-sample1}}. The bar (`$|$') represent the vertical separator. This configuration has the vertical separator between ``2'' and ``3'' and the horizontal separator between the same pair of elements. The first element after ``\verb!--->!'' is the element that is fixed. Fixing ``3'' places ``4'' and ``5'' in a fully determined rectangle. 
Moreover, ``1'' and ``2'' are placed in a fully determined rectangle already after the choice of the separators. So, the number of ``12'' pairs and ``45'' pairs are approximated using the 12-copy data structure.

\begin{figure}[h]
    \centering
    \includegraphics[width=0.8\linewidth]{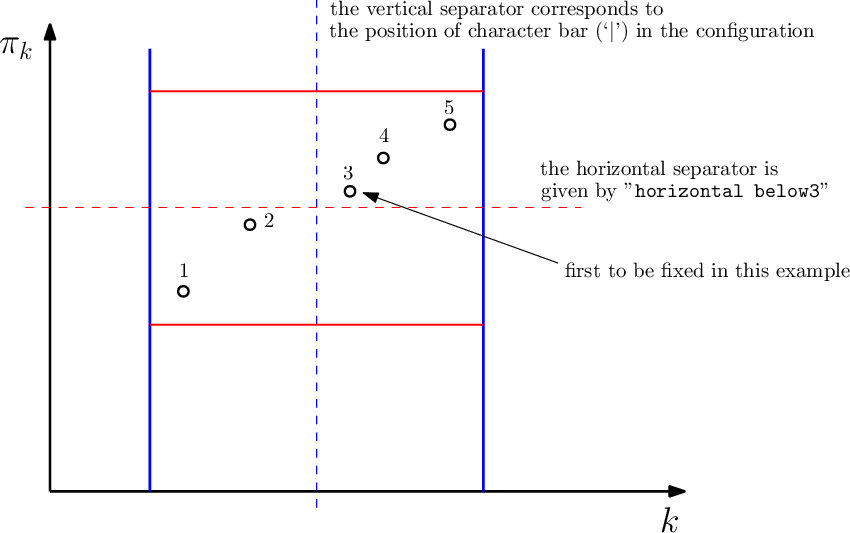}
    \caption{
        This figure corresponds to the row ``\texttt{12|345, horizontal below3 ---> 3, 1, 2, 4, 5}'' in our recipe at \cite{ourRecipeLink}.
        The \texttt{12|345} means we are considering permutation \texttt{12345}.
        The first element after ``\texttt{--->}'', i.e., ``3'' in this example, is fixed by trying all candidates for it.
    }
    \label{fig:5-pattern-recipe-sample1}
\end{figure}

Now, consider ``\verb!1|3425, horizontal below3 ---> 2, 1, 5, 3, 4!''\review{; it is sketched in \cref{fig:5-pattern-recipe-sample2}}. The vertical separator splits ``1'' and the remaining elements, while the horizontal separator is below ``3'' and above ``2''.
The ``2'' is fixed, i.e., the algorithm iterates over all the candidates for the ``2''. Over ``1'' it applies Birg\'e. Then, over ``5'' it applies Birg\'e. After that, ``3'' and ``4'' are in a fully determined rectangle and the count of ``34'' pairs is approximated using the 12-copy data structure.

\begin{figure}[h]
    \centering
    \includegraphics[width=0.8\linewidth]{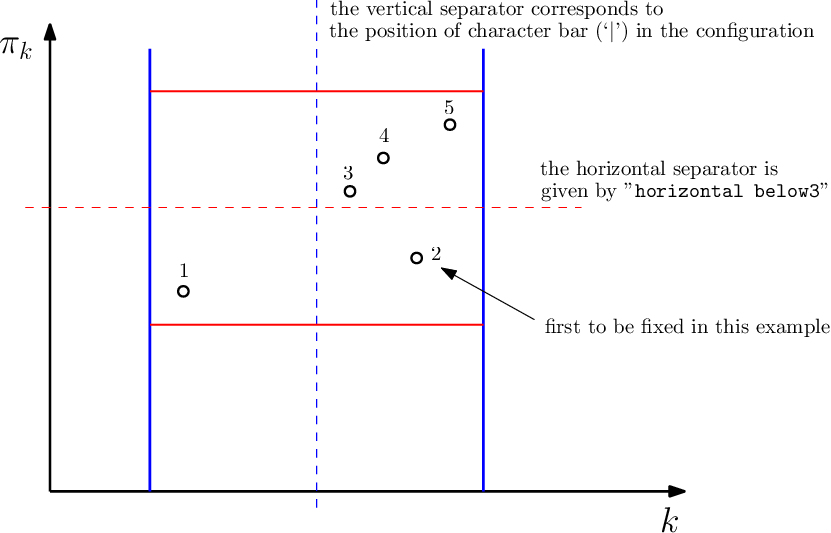}
    \caption{
        This figure corresponds to the row ``\texttt{1|3425, horizontal below3 ---> 2, 1, 5, 3, 4}'' in our recipe at \cite{ourRecipeLink}.
        The \texttt{1|3425} means we are considering permutation \texttt{13425}.
        The first element after ``\texttt{--->}'', i.e., ``2'' in this example, is fixed by trying all candidates for it.
    }
    \label{fig:5-pattern-recipe-sample2}
\end{figure}

\paragraph{Time complexity.}
Recall that our algorithm first fixes a vertical-horizontal separator pair. As discussed, there are $O(n \log^2 n)$ many of them to consider, each corresponding to a non-empty vertex of an inner segment tree of $S$.
Recall that the data structure described in \cref{sec:counting-12-copies}  approximately counts $12$ copies in a given axis-parallel rectangle in only $\poly(\eps^{-1}, \log n)$ time. 

Once a vertical-horizontal separator pair is fixed, for a given $5$-pattern, our algorithm considers all possible configurations. As discussed above, these configurations are of the form: ``1'' and ``2'' are below the horizontal separator; ``2'', ``3'', and ``5'' are to the right of the vertical separator.
For each configuration, the algorithm also generates a recipe on which entry to fix, i.e., for which among ``1'' through ``5'' to iterate over all the candidates, and how to utilize the Birg\'e technique and the $12$-copy primitive. 
Importantly for the running time, a recipe fixes only one element for a given configuration. 
Since $S$ contains $O(n \log^2 n)$ elements -- some elements might appear multiple times across different vertices of segment trees -- fixing one element of a configuration takes $O(n \log^2 n)$ time.
This running time of fixing an element is taken across all vertical-horizontal separator pairs corresponding to non-empty vertices of inner segment trees.

The rest of the counting is carried by applying the Birg\'e technique and using the $12$-copy primitive.
This leads to a total time complexity of $\tO(n \cdot \poly(\eps^{-1}))$.

\paragraph{When the recipe ``does not'' work: patterns $13524$ and $14253$.}
Above, we discussed one particular generic approach in approximating the number of copies of a given pattern. That approach can be summarized as follows. 
Given pattern $p$:
\begin{enumerate}[(1)]
    \item Fix a horizontal-vertical separator pair.
    \item Fix a configuration of $p$ with respect to the separator pair.
    \item Fix one element of $p$ and consider all its candidates within the corresponding rectangle, i.e., within the corresponding inner vertex of the segment tree $S$.
    \item Use the Birg\'e technique and $12$-copy data structure for the remaining elements.
\end{enumerate}
At \cite{ourRecipeLink}, we list the recipe on how to use the Birg\'e technique and $12$-copy counts data structure for each of the patterns, except for two equivalence classes.
Namely, there are precisely two equivalence classes for which the above recipe \emph{does not} work. 
These are the classes corresponding to the patterns $13524$ and $14253$ where, additionally, the vertical and horizontal separator appear right next to the ``1'' element. 
In other words, the ``1'' appears in the bottom-left area, and the rest of the pattern appears in the top-right area. One such example is shown in \cref{fig:5-pattern-failing-recipe}.
\begin{figure}[h]
    \centering
    \includegraphics[width=0.6\linewidth]{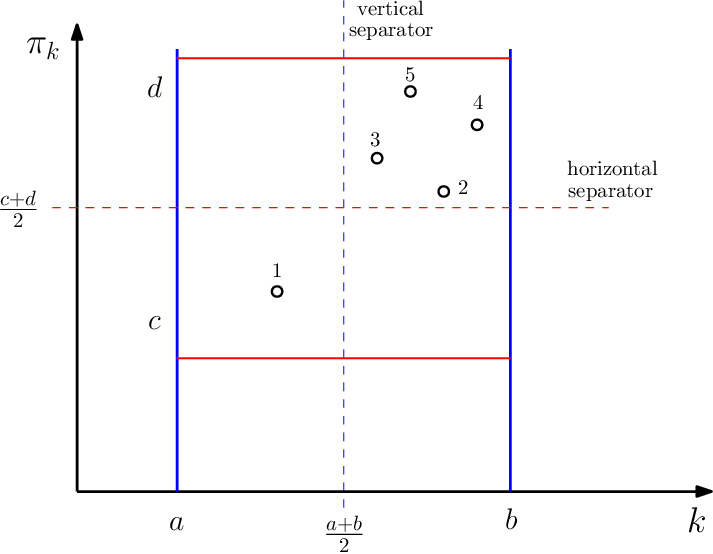}
    \caption{
        An example of a configuration of the $13524$ pattern that does not fall under the recipe discussed in \cref{sec:proof-of-main-theorem}.
        Nevertheless, the number of $3524$ tuples in the upper-right rectangle is exactly the number of $2413$ patterns in the same region.
        So, the number of $13524$ configurations as in this example equals the product of the number of elements in the bottom-left rectangle and the number of $2413$ copies in the upper-right rectangle. 
    }
    \label{fig:5-pattern-failing-recipe}
\end{figure}

We note that in these two cases, the top-right part of the pattern is order-equivalent to the pattern $2413$. 
Thus, it is possible to approximately count the number of $2413$-copies (or $3142$-copies, in the second case) in this top-right block in near-linear time, using our mechanism for approximate counting $4$-patterns. 
Counting the number of values in the bottom left in linear time is trivial.
Actually, the number of elements in the bottom-left area can even be fetched in $O(1)$ time, since the corresponding vertex in the segment tree $S$ already contains the list of all the elements in that rectangle; the size of that list is what the algorithm needs. 
The total count of $13524$ (or $14253$) copies in the full block is the product of these two quantities, and the proof follows.

\subsection{On the Proof of \cref{thm:enumeration}}
\label{sec:thm-enumeration}
To prove \cref{thm:enumeration}, we conduct the following modification to our approximate counting algorithm. Whenever the latter algorithm accesses (and/or aims to evaluate) elements from a monotone sequence $x_1 \geq x_2 \geq \ldots x_r$ using the Birg\'e technique, the enumeration algorithm will enumerate over all elements in the sequence one by one, starting from the largest value $x_1$ and the location corresponding to it in the input function, and descending in value through the sequence. It is straightforward to verify that, due to the monotonicity of all sequences of quantities considered, the algorithm will list all copies of the pattern throughout its run.

\review{We now discuss how to obtain the desired running time.
Observe that in our proofs, whenever we fix a prefix of elements in a pattern, the next element to be fixed or a 12-copy count to be performed is done within an axis-parallel rectangle.
Instead of fixing an element, e.g., ``3'', among its candidates at a position/value given by \cref{lem:Birge} by using data structure guaranteed by \cref{lemma:segment-tree}, we simply enumerate -- one by one -- over all the candidates for ``3'' in that data structure, and fix ``3''.
To ensure that the listing is done efficiently, we take three steps:
\begin{itemize}
    \item We perform enumeration starting with the candidates yielding the largest counts. 
    That way, we are not considering too many candidates that do not yield new copies to be listed.
    \item Let $s$ be an order in which our algorithm tells to fix the elements when counting, e.g., $s$ could mean: \emph{fix ``3'', then fix ``1'', then fix ``2''}. 
    When we fix candidates for a prefix of length $\ell$ of $s$, we first check whether there is a positive count of copies with those candidates fixed. 
    If there is not, we stop attempting to list copies with that prefix or the current prefix of length $\ell-1$ of $s$, but instead consider the next candidate for the $(\ell-1)$-st element that $s$ tells to be fixed. 
    That way, we ensure that each fixing of candidates that does \textbf{not} yield new copies can be charged to a sequence of candidates -- defined by a prefix of $s$ -- that has led to listing new copies.
    Observe that if our algorithms lists $t$ copies, then there are at most $5 \cdot t$ candidate-sequences to charge to. 
    Each of those sequences receives at most one charge. There is one exception to this: fixing a candidate corresponding to the prefix of $s$ of length $1$ that does not yield any new copies cannot be charged to a candidate-sequence of length $0$. 
    Nevertheless, there are at most $n$ different candidate-sequences of length $1$, and testing whether one of them leads zero counts or not takes $\poly(\log n)$ time.
    \item To obtain efficient listing of $12$-copies within an axis parallel rectangle, we perform a simple augmentation of our $12$-copy data structure.
    Consider a counting-query: ``How many $12$-copies a node $v$ of our data structure contains?''
    For a given node $v$, that count is precomputed and can be returned in $O(1)$ time.
    However, how to perform enumeration of those copies? 
    The challenge is that too many candidates for the ``1'' might not lead to any $12$-copy within $v$.
    So, how to -- during the listing stage -- detect the candidates for ``1'' that yield at least one $12$-copy within $v$? 
    To address that, we augment our data structure as follows: during the pre-processing stage, each candidate for ``1'' yielding some $12$-copies within $v$ is added to a list stored at $v$, so that the relevant candidates for ``1'' can all be easily enumerated during the listing stage.
    Once a candidate for the ``1'' is fixed within $v$, then all elements to the top-right within $v$ from that candidate yield $12$-copies.
    All such elements can be easily enumerated within the children of $v$, in $\poly(\log n)$ time per an enumerated element.
\end{itemize}
}

We note that Albert, Aldred, Atkinson, and Holton \cite{AAAH2001} employed a somewhat similar technique to construct a near-linear algorithm for the detection variant, specifically for the case $k=4$; see Section 4 in their paper.

\section*{Acknowledgments}
The authors wish to thank Sergio Cabello, Nishant Dhankhar, Talya Eden, Tomer Even, Chaim Even-Zohar, and Seth Pettie for fruitful discussions. We also thank anonymous reviewers for many useful suggestions, as well as for pointing out the works of \cite{HarPeled2006} and \cite{Halman2014} and their connection to the Birg\'e decomposition.

\bibliographystyle{alpha}
\bibliography{references}

\appendix

\section{Segment Trees}
\label{sec:2D-segment-tree}

We represent a permutation $\pi$ as a set of $n$ points $(i,\pi_i)$ for each $i \in [n]$. Building a sparse segment tree over these points to allow for two-dimensional counting queries is a textbook problem.
For completeness, we outline the construction and query support of this data structure as per \Cref{lemma:segment-tree}.

\paragraph{Building the segment tree.}
\newcommand{\tn}{\tilde{n}}
We aim to build a data structure to answer the two-dimensional queries \Cref{lemma:segment-tree} requires.
To achieve this, we construct two segment trees: one for the points $(i, \pi_i)$ and one for the points $(n - \pi_i, i)$, for all $i \in [n]$. We use $S$ to refer to the first one, while we use $\tS$ to refer to the second one. We now describe how to build $S$; the tree $\tS$ is built analogously.

Let $\tn = 2^{\lceil \log n \rceil}$.
We first build a segment tree $S$ on the $x$-coordinates, covering the range $[1, \tn]$. That tree can be visualized as a complete binary tree on $\tn$ leaves. 
The root vertex corresponds to the entire interval, its left child to the interval $[1, \tn / 2]$, and its right child to the interval $[\tn/2 + 1, \tn]$. 
In general, if a vertex corresponds to the interval $[t, t + 2^j - 1]$, its left and right children correspond to the intervals $[t, t + 2^{j - 1} - 1]$ and $[t + 2^{j - 1}, t + 2^j - 1]$, respectively.

Second, consider a vertex $v$ in $S$ and let $[a, a + 2^j - 1]$ be the range $v$ corresponds to.
The vertex $v$ stores in an array $A_v$ all the points $(i, \pi_i)$ such that $a \le i \le a + 2^j - 1$.
$A_v$ is sorted with respect to the $y$-coordinates, i.e., with respect to $\pi_i$.

It is folklore, and also easy to prove, that a point $(i, \pi_i)$ is stored in $\log \tn$ vertices $v$ of $S$. Therefore, a point $(i, \pi_i)$ is replicated $\log \tn$ times inside $S$.

To populate $S$ from $\pi$, we insert the points $(i, \pi_i)$ one by one, adding them to a list $L_v$ for each vertex $v$ covering the corresponding range.
After the insertions, each $L_v$ is then sorted to form the array $A_v$.
There are $O(n \log n)$ points in $S$, partitioned across different $A_v$. 
Hence, sorting all of them takes $O(n \log^2 n)$ time.

\paragraph{Implementing desired operations.}
\Cref{lemma:segment-tree} specifies three operations that need to be supported on $S$ and $\tS$.

The first operation counts the points within the rectangle $[i, j] \times [a, b]$. 
The range $[i, j]$ can be partitioned into $O(\log n)$ disjoint ranges, each associated with a vertex in $S$.
For each vertex $v$, we count points in $A_v$ with $y$-coordinates in $[a, b]$ using two binary searches, each in $O(\log n)$ time.
Hence, this operation can be implemented in $O(\log^2 n)$ time.

For the second operation, let $v_1, \ldots, v_k$ denote the $O(\log n)$ vertices in $S$ covering disjoint subranges that collectively form $[i, j]$.
Let $\ell$ be the index as described in the second operation, i.e., we are looking for the $\ell$-th leftmost element in $S^{a, b}_{i, j}$. 
To implement this, the algorithm finds the largest $k'$ such that $k' \le k$ and the cumulative number of points \textbf{within} $[i, j] \times [a, b]$ across $v_1, v_2, \ldots, v_{k'}$ is less than $\ell$, denoted by $\ell'$.
Next, we search for the $(\ell - \ell')$-th leftmost point within the left and right children of $v_{k'+1}$.
This approach processes $O(\log n)$ vertices in $S$, each performing two binary searches, for a total time complexity of $O(\log^2 n)$

The third operation on $S$ is equivalent to querying $\tS$ as in the second operation.

\end{document}